\lstdefinelanguage{ML}{
  alsoletter={*},
  morekeywords={datatype, of, if, *},
  sensitive=true,
  morecomment=[s]{/*}{*/},
  morestring=[b]"
}
\lstdefinelanguage{scala}{
  alsoletter={@},
  morekeywords={abstract, case, catch, choose, class, def, do, else, extends, final, finally, for, if, implicit, import, match, new, null, object, let, in, be, lazy,holds,@erasable,
override, package, private, protected, requires, ensures, decreases, return, sealed, super, then, this, throw, trait, try, type, val, var, while, yield, domain, template, assume, fun,
postcondition, precondition,invariant, constraint, assert, each, _, return, @generator, ensure, require, ensuring, assuming, otherwise, asserting}
  sensitive=true,
  morecomment=[l]{//},
  morecomment=[s]{/*}{*/},
  morestring=[b]"
}
\newcommand{\codestyle}{\small\sffamily}
\newcommand{\RA}{\Rightarrow}
\title{System FR as Foundations for Stainless}
\author{Jad Hamza}
\affiliation{
  \department{LARA}              
  \institution{EPFL}            
  \country{Switzerland}                    
}
\email{jad.hamza@epfl.ch}          
\author{Nicolas Voirol}
\affiliation{
  \department{LARA}              
  \institution{EPFL}            
  \country{Switzerland}                    
}
\email{nicolas.voirol@epfl.ch}          
\author{Viktor Kunčak}
\affiliation{
  \department{LARA}              
  \institution{EPFL}            
  \country{Switzerland}                    
}
\email{viktor.kuncak@epfl.ch}          
\begin{abstract}
%
We present the design, implementation, and foundation of a verifier
for higher-order functional programs with generics and recursive
data types. Our system supports proving safety and termination
using preconditions, postconditions and assertions. It supports writing proof hints
using assertions and recursive calls. To formalize the soundness of the system we introduce
System FR, a calculus  supporting System F polymorphism, dependent refinement
types, and recursive types (including recursion through contravariant positions of
function types). Through the
use of sized types, System FR supports reasoning about termination of lazy data
structures such as streams. We formalize a reducibility argument using the Coq
proof assistant and prove the soundness of a type-checker with
respect to call-by-value semantics, ensuring type safety and normalization for
typeable programs.
Our program verifier is implemented as an alternative verification-condition
generator for the Stainless tool, which relies on the Inox SMT-based
solver backend for automation.
We demonstrate the efficiency of our approach by verifying a
collection of higher-order functional programs comprising around 14000 lines of
polymorphic higher-order Scala code, including graph search algorithms, basic
number theory, monad laws, functional data structures, and assignments from
popular Functional Programming MOOCs.

\end{abstract}
\begin{document}
\sloppy
\maketitle



\newcounter{counttodos}
\newcommand{\dtodo}[1]{
  \stepcounter{counttodos}
  \todo[inline,color=blue!10]{
    \textcolor{blue!70!black}{TODO {\thecounttodos}: #1}
  }
}
\newcommand{\itodo}[1]{
  \stepcounter{counttodos}
  \todo[inline,color=red!10]{
    \textcolor{red!70!black}{TODO {\thecounttodos}: #1}
  }
}
\newcommand{\ptodo}[1]{
  \stepcounter{counttodos}
  \todo[inline,color=green!10]{
    \textcolor{green!30!black}{TODO {\thecounttodos}: #1}
  }
}
\newcommand{\ttodo}[1]{
  \stepcounter{counttodos}
  \todo[inline,color=orange!10]{
    \textcolor{orange!75!black}{TODO {\thecounttodos}: #1}
  }
}

\newcommand{\set}[1]{\{#1\}}
\newcommand{\Nat}{\mathbb{N}}
\newcommand{\define}{\triangleq}

\newcommand{\Var}{\mathcal{V}}
\newcommand{\Op}{{\sf Op}}
\newcommand{\wfset}{A}
\newcommand{\Fid}{{\sf Fid}}
\newcommand{\Term}[1]{{\sf Term}_{#1}}
\newcommand{\UTerms}{{\sf Terms}}
\newcommand{\Type}{{\sf Type}}
\newcommand{\Val}{{\sf Val}}

\newcommand{\tlambda}[3]{\lambda{#1}: {#2}.\ {#3}}
\newcommand{\tulambda}[2]{\lambda{#1}.\ {#2}}
\newcommand{\tapp}[2]{{#1}\ {#2}}
\newcommand{\tulet}[3]{{\tt let}\ {#1} = {#2}\ {\tt in}\ {#3}}
\newcommand{\tassert}[2]{{\tt assert }({#1})\ {\tt in}\ {#2}}
\newcommand{\ttrue}{{\tt true}}
\newcommand{\tfalse}{{\tt false}}
\newcommand{\tfold}[2]{{\tt fold}[#1]({#2})}
\newcommand{\tunfoldin}[2]{{\tt unfold}\ {#1}\ {\tt in}\ {#2}}
\newcommand{\tunfoldposin}[2]{{\tt unfold\_pos}\ {#1}\ {\tt in}\ {#2}}
\newcommand{\tufold}[1]{{\tt fold}({#1})}
\newcommand{\tu}{{\tt ()}}
\newcommand{\terr}[1]{{\tt err}[{#1}]}
\newcommand{\tuerr}{{\tt err}}
\newcommand{\tprojl}[1]{\tapp{\pi_1}{#1}}
\newcommand{\tprojr}[1]{\tapp{\pi_2}{#1}}
\newcommand{\tite}[3]{{\tt if}\ {#1}\ {\tt then}\ {#2}\ {\tt else}\ {#3}}
\newcommand{\ttite}[3]{{\tt ifNonempty}\ {#1}\ {\tt then}\ {#2}\ {\tt else}\ {#3}}
\newcommand{\wflt}{\prec}
\newcommand{\wfleq}{\preceq}
\newcommand{\wftop}{\top}
\newcommand{\wfltop}{{\tt lt}}
\newcommand{\tabs}[2]{\Lambda{#1}.\ {#2}}
\newcommand{\tinst}[2]{{#1}[{#2}]}
\newcommand{\tuabs}[1]{\Lambda{#1}}
\newcommand{\tuinst}[1]{\tinst{#1}{}}
\newcommand{\tinstforall}[2]{{\tt inst}({#1},{#2})}
\newcommand{\tsize}[1]{{\tt size}({#1})}
\newcommand{\sizesemantics}[1]{{\tt size\_sem}({#1})}
\newcommand{\tujudge}[2]{{#1}\, {:}\hspace{-0.08em}{:}\hspace{-0.08em}\, {#2}}
\newcommand{\tsetcomp}[1]{{#1}^{\textsf c}}
\newcommand{\timage}[2]{{#1}[{#2}]}
\newcommand{\tinvimage}[2]{{#1}^{-1}[{#2}]}

\newcommand{\tzero}{{\tt zero}}
\newcommand{\tsucc}[1]{{\tt succ}({#1})}
\newcommand{\trec}[4]{{\tt rec}[{#1}]({#2},{#3},{#4})}
\newcommand{\turec}[3]{{\tt rec}({#1},{#2},{#3})}
\newcommand{\tmatch}[3]{{\tt match}({#1},{#2},{#3})}
\newcommand{\tumatch}[3]{{\tt match}({#1},{#2},{#3})}
\newcommand{\tuleft}[1]{{\tt left}({#1})}
\newcommand{\turight}[1]{{\tt right}({#1})}
\newcommand{\tleft}[1]{{\tt left}({#1})}
\newcommand{\tright}[1]{{\tt right}({#1})}
\newcommand{\taleft}[2]{{\tt left}[{#1}]({#2})}
\newcommand{\taright}[2]{{\tt right}[{#1}]({#2})}
\newcommand{\tsummatch}[3]{{\tt either\_match}({#1},{#2},{#3})}
\newcommand{\tfix}[2]{{\tt fix}[{#1}]({#2})}
\newcommand{\tufix}[1]{{\tt fix}({#1})}
\newcommand{\predsymb}{{\tt pred}}
\newcommand{\tpred}[1]{\predsymb({#1})}
\newcommand{\turefl}{{\tt refl}}
\newcommand{\trefl}[2]{{\tt refl}[{#1},{#2}]}

\newcommand{\tunit}{{\sf Unit}}
\newcommand{\tbool}{{\sf Bool}}
\newcommand{\tbigint}{\mathbb{Z}}
\newcommand{\tnat}{{\sf Nat}}
\newcommand{\tsingleton}[1]{\{{#1}\}}
\newcommand{\trefine}[3]{\{{#1}:~{#2}~|~{#3}\}}
\newcommand{\dobracket}{{\{}\hspace*{-0.2em}{\{}}
\newcommand{\dcbracket}{{\}}\hspace*{-0.2em}{\}}}
\newcommand{\ttyperefine}[3]{\dobracket{#1}:~{#2}~|~{#3}\dcbracket}
\newcommand{\istrue}[1]{\tequal{{#1}}{\ttrue}}
\newcommand{\isfalse}[1]{\tequal{{#1}}{\tfalse}}
\newcommand{\tsum}[2]{{#1} + {#2}}
\newcommand{\tprod}[3]{\Sigma {#1}: {#2}.\ {#3}}
\newcommand{\tarrow}[3]{\Pi {#1}: {#2}.\ {#3}}
\newcommand{\texists}[3]{\exists {#1}: {#2}.\ {#3}}
\newcommand{\tforall}[3]{\forall {#1}: {#2}.\ {#3}}
\newcommand{\tintersect}[2]{{#1} \cap {#2}}
\newcommand{\tunion}[2]{{#1} \cup {#2}}
\newcommand{\tequal}[2]{{#1} \equiv {#2}}
\newcommand{\ttop}{\top}
\newcommand{\tbot}{\bot}
\newcommand{\indexedtype}[3]{{\tt Rec}({#1})({#2} \Rightarrow {#3})}
\newcommand{\rectype}[2]{{\tt Rec}({#1} \Rightarrow {#2})}
\newcommand{\tpoly}[2]{\forall {#1}: {\tt Type}.\,{#2}}

\newcommand{\tuletintype}[3]{{\tt Let}~{#1}~=~{#2}~{\tt in}~{#3}}
\newcommand{\tmatchintype}[3]{{\tt Match}({#1},{#2},{#3})}
\newcommand{\tsummatchintype}[3]{{\tt Either\_Match}({#1},{#2},{#3})}
\newcommand{\titeintype}[3]{{\tt If}\ {#1}\ {\tt Then}\ {#2}\ {\tt Else}\ {#3}}
\newcommand{\basetype}[2]{{\sf basetype}_{#1}({#2})}
\newcommand{\synval}[1]{{\tau_\alpha}}
\newcommand{\unify}{{\tt unify}}

\newcommand{\interp}{\theta}
\newcommand{\candidates}{{\sf Candidates}}
\newcommand{\candidate}{{\sf C}}

\newcommand{\tlist}[1]{List[{#1}]}
\newcommand{\tnil}[1]{Nil[{#1}]}
\newcommand{\tcons}[3]{Cons({#2},{#3})}

\newcommand{\fv}[1]{{\sf fv}({#1})}
\newcommand{\funs}[1]{{\sf funs}({#1})}
\newcommand{\sigdom}{{\sf dom}}
\newcommand{\sigcodom}{{\sf codom}}
\newcommand{\semantics}[1]{\llbracket~#1~\rrbracket}
\newcommand{\subst}[3]{{#1}[{#2}~\mapsto~{#3}]}
\newcommand{\erase}[1]{{\sf erase}({#1})}
\newcommand{\buildnat}{{\sf buildNat}}

\newcommand{\tspos}[2]{{\tt spos}({#1},{#2})}
\newcommand{\tpos}[2]{{\tt pos}({#1},{#2})}
\newcommand{\tneg}[2]{{\tt neg}({#1},{#2})}
\newcommand{\genplus}[3]{{\tt gen}^{\subseteq}_{#1}({#3})({#2})}
\newcommand{\genminus}[3]{{\tt gen}^{\supseteq}_{#1}({#3})({#2})}

\newcommand{\hastype}[3]{{#1} \vdash {#2}: {#3}}
\newcommand{\denote}[3]{{#1} \models {#2}: {#3}}
\newcommand{\redd}[3]{{#1} \models_{\sf Red} {#2}: {#3}}
\newcommand{\ljudge}[4]{{#1} \vdash_{#4} {#2}: {#3}}
\newcommand{\subtype}[3]{{#1} \vdash {#2} <: {#3}}
\newcommand{\areequal}[3]{{#1} \vdash {#2} \equiv {#3}}
\newcommand{\istype}[2]{{#1} \vdash {#2}\ \textit{type}}
\newcommand{\iscontext}[1]{\vdash {#1}\ \textit{context}}
\newcommand{\infertype}[3]{{#1} \vdash {#2}\Uparrow {#3}}
\newcommand{\checktype}[3]{{#1} \vdash {#2}\Downarrow {#3}}
\newcommand{\tinfer}[3]{{#1} \vdash {#2} \Uparrow {#3}}

\newcommand{\ltype}{{\sf ftype}}
\newcommand{\lpre}{{\sf pre}}
\newcommand{\lpost}{{\sf post}}
\newcommand{\lrank}{{\sf rank}}
\newcommand{\lbody}{{\sf body}}
\newcommand{\lt}{\prec}

\newcommand{\hole}{\mathcal{H}}
\newcommand{\isval}[1]{{#1} \in \Val}
\newcommand{\contextsymb}{\mathcal{E}}
\newcommand{\context}[1]{\contextsymb[{#1}]}
\newcommand{\thole}{\mathcal{H}}
\newcommand{\smallstep}{\hookrightarrow}
\newcommand{\err}{{\sf err}}
\newcommand{\equivalent}[2]{{#1} \approx {#2}}

\newcommand{\prop}{P}
\newcommand{\preserve}[1]{\overrightarrow{#1}}
\newcommand{\openpreserve}[1]{\overleftrightarrow{#1}}

\newcommand{\nodiv}[2]{\mathcal{N}({#1},{#2})}
\newcommand{\nodivterm}[2]{\mathcal{M}({#1},{#2})}
\newcommand{\mut}{\equiv}

\newcommand{\seprules}{\vspace{2ex}}

\newcommand{\psidebyside}[3]{
    \begin{minipage}{0.49\textwidth}
    #2
    \end{minipage} \hspace{#1}
    \begin{minipage}{0.49\textwidth}
    #3
    \end{minipage}
}
\newcommand{\sidebyside}[2]{\psidebyside{4em}{#1}{#2}}

\newcommand{\psidebysidebyside}[4]{
    \begin{minipage}{0.33\textwidth}
    #2
    \end{minipage} \hspace{#1}
    \begin{minipage}{0.32\textwidth}
    #3
    \end{minipage} \hspace{#1}
    \begin{minipage}{0.33\textwidth}
    #4
    \end{minipage}
}
\newcommand{\sidebysidebyside}[3]{\psidebyside{2em}{#1}{#2}{#3}}

\newcommand{\sidebysidebysidebyside}[4]{
    \begin{minipage}{0.24\textwidth}
    #1
    \end{minipage} \hspace{0.1em}
    \begin{minipage}{0.24\textwidth}
    #2
    \end{minipage} \hspace{0.1em}
    \begin{minipage}{0.24\textwidth}
    #3
    \end{minipage} \hspace{0.1em}
    \begin{minipage}{0.24\textwidth}
    #4
    \end{minipage}
}

\newcommand{\redvalues}[2]{\llbracket #1 \rrbracket_{\textsf{v}}^{#2}}
\newcommand{\redexpr}[2]{\llbracket #1 \rrbracket_{\textsf{t}}^{#2}}

\newcommand{\algo}{\mathcal{A}}

\newcommand{\lembed}{\lfloor}
\newcommand{\rembed}{\rfloor}
\newcommand{\embed}[1]{\lembed{#1}\rembed}
\newcommand{\cembed}[2]{\lembed{#1}\rembed_{#2}}

\newcommand{\lub}[2]{{#1} \sqcup {#2}}
\newcommand{\subalgo}[3]{{#1} \vdash {#2} \sqsubseteq {#3}}

\newcommand{\Stream}[1]{{\tt Stream}[{#1}]}
\newcommand{\IStream}[2]{{\tt Stream}_{#2}[{#1}]}
\newcommand{\List}{{\tt List}}
\newcommand{\IList}[1]{{\tt List}_{#1}}
\newcommand{\shead}[1]{{#1}.{\tt head}}
\newcommand{\stail}[1]{{#1}.{\tt tail}()}

\newcommand{\URA}{{()}{\Rightarrow}}

\newcommand{\PList}[1]{{\tt List}[{#1}]}
\newcommand{\IPList}[2]{{\tt List}_{#2}[{#1}]}




\section{Introduction}

Automatically verifying the correctness of higher-order
programs is a long-standing problem that arises in most
programming languages and proof assistants.  Despite
extensive research in program
verifiers and proof assistants~\citep{IsabelleHOL,Coq,GordonMelham93HOL,DBLP:books/daglib/0022394,Harrison17HOLLightTutorial,DBLP:conf/itp/000110,norell2007towards,DBLP:conf/plpv/Brady13,LiquidHaskell,Swamy2013,dafny} 
there remain significant challenges and trade-offs in
checking safety and termination.
A motivation for our work are implementations
that verify polymorphic functional programs using SMT solvers~\cite{SuterETAL11SatisfiabilityModuloRecursivePrograms,LiquidHaskell}.
To focus on foundations, we look at simpler verifiers that do not perform invariant
inference and are mostly based on unfolding recursive
definitions and encoding of higher-order functions into SMT theories
\cite{SuterETAL11SatisfiabilityModuloRecursivePrograms,
VoirolKK15,DBLP:conf/pldi/BlancK15}.
A recent implementation of such a verifier
is the Stainless system~\cite{Stainless}, which handles a subset of 
Scala~\cite{OderskyETAL07ProgrammingScala}.
The goal of Stainless is to verify that function contracts hold and that all functions terminate.
It was shown~\cite{HupelKuncak16TranslatingScalaProgramsIsabelleHOLSystemDescription} how to map certain patterns of specified Scala programs into Isabelle/HOL. Whereas this approach ensures soundness, it does not reuse the reasoning of Stainless and it can verify only some of the programs that 
Stainless verifier can prove. The present paper
seeks to provide direct foundations for verification and termination checking of functional
programs with a rich set of features for purely functional programming including non-monotonic data types.
On the other hand, our calculus does not aspire to directly support effects 
for which there exist excellent other systems \cite{Swamy2013}.

The subtleties of ensuring function termination have been an initial impetus for the calculus we present.
Termination is desirable for many executable functions in
programs and is even more important in formal specifications.  
A non-terminating function definition such as $f(x) = 1 + f(x)$ could be
easily mapped to a contradiction and violate the conservative
extension principle for definitions. Yet termination in the presence of higher-order
functions and data types is challenging to ensure. For example,
when using non-monotonic recursive types, terms can diverge
even without the explicit use of recursive functions, as
illustrated by the following snippet of
Scala code:
\begin{lstlisting}
case class D(f: D $\Rightarrow$ Unit) // non-monotonic recursive type
def g(d: D): Unit = d.f(d)            // non-recursive function definition
g(D(g))                               // diverging term, reduces to D(g).f(D(g)) and then again to g(D(g))
\end{lstlisting}
Furthermore, even though the concept of termination for all function inputs is an intuitively clear
property, its modular definition is subtle:
a higher order function $g$ taking another function $f$ as an argument
should terminate when given any terminating function $f$, which, in
turn, can be applied to expressions involving further calls to $g$.
The quest for solid foundations for termination led us to
type theoretic 
techniques, where \emph{reducibility} method
has long been used
to show strong normalization of expressive calculi~\cite{TaitStrongNorm}, \cite[Chapter 6]{ProofsAndTypesBook}, \cite{harper2016practical}.
As a natural framework
for analyzing support for first-class functions with preconditions and
post-conditions we embraced the ideas of 
refinement dependent types similar to those in Liquid Haskell \cite{LiquidHaskell} with
refinement-based notion of subtyping.
To explain proof obligation generation in the higher-order case (including the question of
which assumptions should be visible when checking a given assertion),
we resorted to well-known dependent ($\Pi$) function types.
To support parametric polymorphism we
incorporated type quantifiers, as in System F \cite{girard1971extension,ProofsAndTypesBook}.
We found that the presence of refinement types allowed us to explain
soundness of well-founded recursion based on user-defined measures.
The recursion
in programs is thus not syntactically restricted as in, e.g., System F.
To provide expressive support for iterative unfolding of recursive functions,
we introduced rules to make function bodies available while type checking of recursive functions.
For recursive type definitions, many existing
systems introduce separate notions of inductive and
co-inductive definitions. We found this distinction 
less natural for developers and chose to support expressive recursive types
(without a necessary restriction to positive recursion)
using sized types \cite{DBLP:conf/itp/000110}. 
We draw inspiration from a number of existing systems, yet our solution
is a new sound combination of features that work nicely together.

We combined these features into a new type system, System FR,
which we present as a bidirectional type checking algorithm. 
The algorithm generates type checking
and type inference goals by traversing terms and types, until it reaches a point
where it has to check that a given term evaluates to {\tt true}. This typically
arises when we want to check that a term $t$ has a refinement type
$\trefine{x}{T}{b}$, which is the case when $t$ has type $T$, and when the term
$b$ evaluates to {\tt true} in the context where $x$ equals $t$. Following
the tradition of SMT-based verifiers \cite{DetlefsETAL98ExtendedStaticChecking,BarnettETAL04SpecProgrammingSystemOverview},
we use the term \emph{verification condition} (VC) to refer
to a term that should evaluate to {\tt true}.

We prove the soundness of our type system using a reducibility interpretation of types.
The goal of our verification system is to ensure that a given term belongs to the
semantic denotation of a given type. For simple types such as natural numbers, this denotation
is the set of untyped lambda calculus terms that 
evaluate, in a finite number of steps,  
to a non-negative integer.
For function types the denotation are, as is typical in reducibility approaches,
terms that, when applied to terms in denotation of argument type, evaluate to terms 
in the denotation of the result type.
Such denotation gives us a unified framework for function contracts expressed as refinement types. The approach ensures
termination of programs because the semantics of types
only contain terms that are terminating in call-by-value semantics.

We have formally proven using the Coq proof assistant~\cite{Coq} the soundness
of our typing algorithm, implying that when verification conditions generated for
checking that a term $t$ belongs to a type $T$ are semantically valid, the term
$t$ belongs to the semantic denotation of the type $T$.
The bidirectional typing algorithm handles the expressive types in a
deterministic and predictable way, which enables good and localized error
reporting to the user. 
To solve generated verification conditions, we use existing
implementation invoking the Inox solver\footnote{\url{https://github.com/epfl-lara/inox}} that reduces higher-order queries to the first-order
language of SMT solvers~\cite{VoirolKK15}.
Our semantics of types provides a definition of soundness for such solvers;
any solver that respects the semantics can be used with our verification condition generator.
Our bidirectional type checking algorithm thus becomes a new, trustworthy 
verification condition generator for Stainless. We were successful in verifying many existing
Stainless benchmarks using the new approach.


We summarize our contributions as follows:
\begin{itemize}
\item
    We present a rich type system, called System FR, that combines System F with dependent types,
    refinements, equality types, and recursive
    types~(Sections~\ref{sec:operational} and \ref{sec:definitions}).
\item
    We define a bidirectional type-checking algorithm for System FR
    (Section~\ref{sec:bidi}). Our algorithm generates
    verification conditions that are solved by the (existing) SMT-based
    solver Inox.
\item
    We prove\footnote{\url{https://github.com/epfl-lara/SystemFR/tree/oopsla2019}} soundness of our bidirectional
    type-checking algorithm that reduces program correctness to proving
    that certain formulas always evaluate to true (Section~\ref{sec:coq}).
    Our 
    formalization also supports additional expressive notions, such as infinite intersections and unions as well as
    refinement conditions given by non-emptiness of an arbitrary type.
\item
    We built a verification condition generator based on these foundations\iftoggle{arxiv}{\footnote{\url{https://github.com/jad-hamza/stainless/tree/type-inference}}}{}
    and evaluated it on around $14$k lines of benchmarks
    (Section~\ref{sec:implementation}), showing that generating proof obligations
    using type checking is effective in practice.
\end{itemize}






\begin{figure}[tbhp]
\begin{center}
\begin{tabular}{c}
\begin{lstlisting}
def f(x: $\tau_1$): $\tau_2$ = {
  require($pre$[x])
  decreases($m$[x])
  $E$[x, f]
} ensuring { res $\RA$ $post$[x, res] }
\end{lstlisting}
\end{tabular}
\end{center}

\caption{Template of a recursive function with user-given contracts and a decreasing measure.}
\label{fig:template}
\end{figure}

\section{Examples of Program Verification and Termination Checking}
\label{sec:motiv}

Our goal is to verify correctness and termination of pure Scala
functions written as in Figure~\ref{fig:template}.
{\tt $pre$[x]} is the precondition of the function {\tt f}, and is written
by the user in the same language as the body of {\tt f}. The precondition may
contain arbitrary expressions and calls to other functions. Similarly, the user
specifies in $post$ the property that the results of the function should satisfy. To
ensure termination of {\tt f} (which might call itself recursively), the user
may also provide a measure using the {\tt \bf decreases} keyword, which is also an
expression (of type $\tnat$, the type of natural numbers) written in the same
language.
$\tau_1$ and $\tau_2$ may be arbitrary types, including
function types or algebraic data types. Informally, the function is terminating
and correct, if, for every value {\tt v} of type $\tau_1$ such that {\tt
$pre$[v]} evaluates to $\ttrue$, {\tt f(v)} returns (in a finite number of
steps) a value {\tt res} of type $\tau_2$ such that {\tt $post$[v,res]}
evaluates to $\ttrue$.
By using dependent and refinement types, this can be summarized by saying that
the function {\tt f} has type:
$
  \tarrow{x}{\trefine{x}{\tau_1}{pre[x]}}{\trefine{res}{\tau_2}{post[x,res]}}.
$


\begin{figure}[tbhp]
\begin{tabular}{c@{\hspace*{8mm}}c}

\begin{minipage}[b]{0.45\linewidth}
\begin{lstlisting}
sealed abstract class List
case object Nil extends List
case class Cons(head: $\tbigint$, tail: List) extends List

def filter(l: List, p: $\tbigint$ $\RA$ Boolean): List = {
  decreases($\tsize{\tt l}$)
  l match {
    case Nil $\RA$ Nil
    case Cons(h, t) if p(h) $\RA$ Cons(h, filter(t, p))
    case Cons(_, t) $\RA$ filter(t, p)  }  }

def count(l: List, x: $\tbigint$): $\tbigint$ = {
  decreases($\tsize{\tt l}$)
  l match {
    case Nil $\RA$ 0
    case Cons(h, t) $\RA$ (if (h == x) 1 else 0) 
                  + count(t, x) }}
\end{lstlisting}
\caption{The function {\tt filter} filters elements of a list based on a
predicate {\tt p}, and {\tt count} counts the number of occurrences of {\tt x}
in a list.
\label{fig:liblist}}
\end{minipage}
& 
\begin{minipage}[b]{0.43\linewidth}
\begin{lstlisting}
  def partition(
    l: $\PList{\tbigint}$,
    p: $X \RA \tbool$
  ): ($\PList{\tbigint}$, $\PList{\tbigint}$) = {
    decreases($\tsize{\tt l}$)
    l match {
      case Nil $\RA$ Nil
      case x :: xs $\RA$
        val (l1, l2) = partition(xs, p)
        if (p(x)) (x :: l1, l2)
        else (l1, x :: l2) }
  } ensuring { res $\RA$ 
      res._1 == filter(l, p) && 
      res._2 == filter(l, x $\RA$ !p(x))}
\end{lstlisting}  
\caption{A partition function specified using {\tt filter} and with 
termination measure is given with {\tt size}.}
\label{fig:partition}
\end{minipage}

\end{tabular}
\end{figure}

\begin{figure}

\begin{center}
\begin{tabular}{c}
\begin{lstlisting}
def partitionMultiplicity(@induct l: $\PList{\tbigint}$, p: $\tbigint \RA \tbool$, $x: \tbigint$): Boolean = {
  val (l1, l2) = partition(l, p)
  count(l, x) == count(l1, x) + count(l2, x)
} holds
\end{lstlisting}
\end{tabular}
\end{center}

\caption{A proof (by induction on {\tt l}) that partitioning a list preserves the
multiplicity of each element.
\label{fig:partition-multiplicity}}
\end{figure}


\begin{figure}[htb]

\begin{tabular}{c@{\hspace*{8mm}}c}

\begin{minipage}[h]{0.45\linewidth}
\begin{lstlisting}
def isSorted(l: List): Boolean = {
  decreases(size(l))
  l match {
    case Nil() $\RA$ true
    case Cons(x, Nil()) $\RA$ true
    case Cons(x, Cons(y, ys)) $\RA$
      x $\leq$ y && isSorted(Cons(y, ys))
  }
}
\end{lstlisting}
\end{minipage}
&
\begin{minipage}[h]{0.45\linewidth}
\begin{lstlisting}
def merge(l1: List, l2: List): List = {
  require(isSorted(l1) && isSorted(l2))
  decreases(size(l1) + size(l2))
  (l1, l2) match {
    case (Cons(x, xs), Cons(y, ys)) $\RA$
      if (x $\leq$ y) Cons(x, merge(xs, l2))
      else Cons(y, merge(l1, ys))
    case (Cons(_, _), Nil) $\RA$ l1
    case _ => l2 }
} ensuring { res => isSorted(res) }
\end{lstlisting}

\end{minipage}

\end{tabular}

\caption{A function that checks whether a list is sorted and a
  function that merges two sorted lists
  }
\label{fig:merge}

\end{figure}





As an example, consider the list type as defined in Figure~\ref{fig:liblist}.
We use $\tbigint$ to denote the type of integers (corresponding 
to Scala's BigInt in actual source code). The
function {\tt filter} filters elements from a list, while {\tt count} counts the
number of occurrences of an integer in the list. These two functions have no pre- or
postconditions. The {\tt decreases} clauses specify that the functions
terminate because the \emph{size} of the list decreases at each recursive call.

Using these functions we define {\tt partition} in Figure~\ref{fig:partition},
which takes a list {\tt l} of integers and partitions it according to a
predicate {\tt p: $\tbigint \Rightarrow \tbool$}. We prove in the postcondition
that partitioning coincides with applying {\tt filter} to the list with {\tt p}
and its negation.

Figure~\ref{fig:partition-multiplicity} shows a theorem
that {\tt partition} also preserves the multiplicity of each element. We
use here {\tt count} to state the property, but we could have used multisets
instead (a type which is natively supported in Stainless). The {\tt \bf
holds} keyword is a shorthand for {\tt {\bf ensuring} \{ res => res \}}. The
{\tt @induct} annotation instructs the system to add a recursive call to {\tt
partitionMultiplicity} on the tail of {\tt l} when {\tt l} is not empty. This
gives us access to the multiplicity property for the tail of {\tt l}, which the
system can then use automatically to prove that the property holds for {\tt l}
itself. This corresponds to a proof by induction on {\tt l}.

Figure~\ref{fig:merge} shows a function {\tt isSorted} that checks whether a
list is sorted, and a function {\tt merge} that combines two sorted lists in a
sorted list.
When given the above input, the system proves the termination
of all functions, establishes that postconditions of functions hold, and
shows that the theorem holds, without any user interaction or additional
annotations. For the {\tt merge} function, the postcondition
might seem too weak to establish that e.g.~{\tt Cons(x, merge(xs, l2))}
is sorted just based on the fact that {\tt merge(xs, l2)} is
sorted. However, since we put the definition of {\tt merge}
\emph{while} type-checking the body of {\tt merge} in the
context, it is possible to establish that {\tt x} is smaller
than the head of {\tt merge(xs, l2)}.
We give more details on this feature of our
system, called body-visible recursion, in
Section~\ref{sec:selfaware}.



\subsection{Reasoning about Streams}


\begin{figure}

\begin{lstlisting}
def constant[X](@erasable n: $\tnat$, x: X): @*\Stream{X}(n)*@ = {
  decreases(n)
  Stream(n)(x, $\URA$ constant[X](n-1,x))) }
\end{lstlisting}

\caption{Constant stream\label{fig:constant}}

\begin{lstlisting}
def zipWith[X,Y,Z](@erasable n: $\tnat$,
                   f: X $\RA$ Y $\RA$ Z, s1: @*\Stream{X}(n)*@, s2: @*\Stream{Y}(n)*@): @*\Stream{Z}(n)*@ = {
  decreases(n)
  Stream[Z](n)(f (@*\shead{s1}*@) (@*\shead{s2}*@), $\URA$ zipWith[X,Y,Z](n-1, f, @*\stail{s1},\ \stail{s2}*@) ) }
\end{lstlisting}
\caption{Zip function that combines elements of two streams using a two-argument function {\tt f}
\label{fig:zipwith}}

\begin{lstlisting}
def fib(@erasable n: $\tnat$): $\Stream{\tbigint}(n)$ = {
  decreases(n)
  Stream[$\tbigint$](n)(0, $\URA$ Stream[$\tbigint$](n-1)(1, $\URA$
    zipWith[$\tbigint,\tbigint,\tbigint$](n-2, plus, fib(n-2), @*\stail{fib(n-1)}*@))) }
\end{lstlisting}

\caption{Fibonacci stream defined using zipWith\label{fig:fibonacci}}
\end{figure}

Our system also supports reasoning about infinite data structures, including
streams that are computed on demand. These data structures are challenging to deal with because even defining termination
of an infinite stream is non-obvious, especially in absence of a concrete operation that uses
the stream. Given some type {\tt X}, $\Stream{\tt
X}$ represents the type of infinite streams containing elements in {\tt X}. In
a mainstream call-by-value language such as Scala, this type can be defined as:
\begin{lstlisting}
  case class @*\Stream{X}*@(head: X, tail: $\URA$ @*\Stream{X}*@)
\end{lstlisting}
For the sake of concise syntax, we typeset a function taking unit, {\tt
(u:Unit)=>e}, using Scala's syntax {\tt $\URA$e} for a function of zero
parameters. Given a stream {\tt s}: $\Stream{\tt X}$, we can call {\tt s.head}
to get the head of the stream (which is of type {\tt X}), or {\tt s.tail} to get
the tail of the stream (which is of type $\URA \Stream{\tt X}$). We can use
recursion to define streams, as shown in figures~\ref{fig:constant},
\ref{fig:fibonacci}, \ref{fig:zipwith}. The {\tt @erasable} annotation is used to
mark the erasable parameters {\tt n} of these functions. These parameters are used
as annotations to guide our type-checker, but they do not influence the
computation and can be erased at runtime. For instance, an erased version of
{\tt constant} (without erasable code and without type annotation) looks like:
\begin{lstlisting}
def constant(x) = Stream(x, $\URA$ constant(x))
\end{lstlisting}
Informally, we can say that the {\tt constant} stream is \emph{terminating}.
Indeed, it has the interesting property that, despite the recursion, for every
$n \in \Nat$, we can take the first $n$ elements in finite time (no divergence
in the computation). We say that {\tt constant(x)} is an
\emph{$n$-non-diverging stream}. Moreover, when a stream is $n$-non-diverging
for every $n \in \Nat$, we simply say that it is \emph{non-diverging}, which
means that we can take as many elements as we want without diverging, which is
the case for {\tt constant(x)}. Note that non-divergence of {\tt constant}
cannot be shown by defining a measure on its argument {\tt x} that strictly
decreases on each recursive call, because {\tt constant} is called recursively
on the exact same argument {\tt x}. Instead, we define a measure on the erasable
argument {\tt n} of the annotated version. This corresponds to using
type-based
termination~\citep{DBLP:journals/lmcs/Abel08,DBLP:phd/de/Abel2007,BartheGR08},
where the type of the function for the recursive call is smaller than the type
of the caller. We expand on that technique in Section~\ref{sec:measures}.

In the annotated version of {\tt constant} from Figure~\ref{fig:constant}, the
notation $\Stream{\tt X}(n)$ stands for streams of elements in {\tt X} which are
{\tt n}-non-diverging. The type of {\tt constant} then states that constant can
be called with any (erasable) parameter {\tt n} to build an {\tt n}-non-diverging
stream. Since parameter {\tt n} is computationally irrelevant, this proves that
the erased version of {\tt constant} returns a non-diverging stream.

We now give a variant of {\tt constant} which is \emph{diverging}:
\begin{lstlisting}
def badConstant(x) = Stream(x, $\URA$ badConstant(x).tail())
\end{lstlisting}

Indeed, given some {\tt x}, a call to {\tt badConstant(x).tail()}
ends up evaluating {\tt badConstant(x).tail()} again, and would
diverge.

The {\tt zipWith} function in Figure~\ref{fig:zipwith} takes two streams and a
function {\tt f}. It creates a new stream by applying {\tt f} to pairs of
elements taken from each stream. For {\tt zipWith}, we can verify that as long
as {\tt f} terminates on every input, and {\tt s1} and {\tt s2} are
non-diverging streams, then {\tt zipWith} returns a non-diverging stream.
We can then use {\tt zipWith} to define the well-known Fibonacci stream
(Figure~\ref{fig:fibonacci}), an infinite stream containing the
Fibonacci sequence: 0, 1, 1, 2, 3, 5, 8, 13, etc. We make use of a function {\tt
plus:$\tbigint \RA \tbigint \RA \tbigint$} that computes the sum of two
integers.
Just like the {\tt constant} streams, {\tt fib} is a non-diverging stream. For
instance, calling {\tt fib.tail().tail().tail().head}
returns (in finite time) the number $2$ from the Fibonacci sequence.








The important property that the type signature of {\tt zipWith} ensures is 
that, for every $n \in
\tnat$, if {\tt s1} and {\tt s2} are $n$-non-diverging streams, then {\tt
zipWith f s1 s2} is $n$-non-diverging as well. Our type system can 
check this property and then use it to make sure that
the definition of {\tt fib} type-checks.
We can also prove further properties of interest, e.g., that
 zipping two streams {\tt s1} and {\tt s2} with the
function {\tt (x:$\tnat$)$\RA$(y:$\tnat$)$\RA$x} 
returns a stream that behaves as the stream {\tt s1}.



\begin{figure}[htb]

\begin{flalign*}
t \Coloneqq\ & x\ |\ \tu \hspace{1em} | \hspace{1em}
              \tulambda{x}{t} \ |\ \tapp{t}{t}\hspace{1em} | \hspace{1em}
              (t,t) \ |\ \tprojl{t} \ |\ \tprojr{t}\ |\ \\
             & \tuleft{t} \ |\ \turight{t} \ |\ \tsummatch{t}{x \Rightarrow t}{x \Rightarrow t}\hspace{1em} | \hspace{1em}
             \ttrue\ |\ \tfalse\ |\ \tite{t}{t}{t}\ |\ \\
             & \tzero\ |\ \tsucc{t}\ |\
             \turec{t}{t}{(n,y) \Rightarrow t} \ |\
             \tufix{y \Rightarrow t} \ |\
             \tmatch{t}{t}{n \Rightarrow t} \\
             & \tufold{t} \ |\ \tunfoldin{t}{x \RA t}\hspace{1em} | \hspace{1em}
              \tuabs{t} \ |\ \tuinst{t}\hspace{1em} | \hspace{1em}
              \tuerr\ |\ \tulet{x}{t}{t}\ |\ \tsize{t}
\end{flalign*}

\caption{
    Grammar for untyped lambda calculus terms
}
\label{fig:uterms}
\end{figure}

\section{Syntax and Operational Semantics}
\label{sec:operational}

We now give a formal syntax for terms and show call-by-value operational
semantics. This untyped lambda calculus with pairs, tagged unions, integers,
booleans, and error values models programs that our verification system
supports. It is Turing complete and rather conventional.

\subsection{Terms of an Untyped Calculus}


Let $\Var$ be a set of \emph{variables}. We let $\UTerms$ be the set of all
(untyped) terms (see Figure~\ref{fig:uterms}) which includes the unit term
$\tu$, pairs, booleans, natural numbers, a recursor {\tt rec} for iterating over
natural numbers, a pattern matching operator {\tt match} for natural numbers, a
recursion operator {\tt fix}, an error term $\tuerr$ to represent crashes. The
recursor {\tt rec} can be simulated using {\tt fix} and {\tt match} but we keep
it in the paper for presenting examples.

The terms $\tufold{t}$ and $\tunfoldin{t_1}{x \RA t_2}$ are used to represent data structures
(such as lists or streams), where `$\tufold{}{}$' plays the role of a constructor,
and `$\tunfoldin{}{}$' the role of a deconstructor. The terms $\tuabs{t}$ and
$\tuinst{t}$ are used to represent the erasure of type abstractions and type
instantiation terms (for polymorphism) of the form $\tabs{\alpha}{t}$ and
$\tinst{t}{\tau}$, where $\alpha$ is a type variable and $\tau$ is a type. These
annotated terms will be introduced in a further section.

The term $\tsize{t}$ is a special term to internalize the sizes of syntax trees
of values (ignoring lambdas) of our language. It is used for measure of
recursive functions such as the map examples on lists shown in
Section~\ref{sec:motiv}.

Given a term $t$ we denote $\fv{t}$ the set of all \emph{free variables} of $t$.
Terms are considered up to renaming of locally bound variables
(\emph{alpha-renaming}).

\subsection{Call-by-Value Operational Semantics}


\begin{figure}[ht]

\centering
\footnotesize

\sidebyside{
    \infer[(\beta_1)]{
        \tprojl{(v_1,v_2)} \smallstep v_1
    }{
        \isval{v_1} &
        \isval{v_2}
    }
}{
    \infer[(\beta_2)]{
        \tprojr{(v_1,v_2)} \smallstep v_2
    }{
        \isval{v_1} &
        \isval{v_2}
    }
}

\seprules

\sidebyside
{
    \infer[(\beta_3)]{
        \tapp{(\tulambda{x}{t})}{v} \smallstep \subst{t}{x}{v}
    }{
        \isval{v}
    }
}
{
    \infer[(\beta_4)]{
        \tuinst{(\tuabs{t})} \smallstep t
    }{
    }
}

\seprules

\sidebyside
{
    \infer[(\beta_5)]{
       \tite{\ttrue}{t_1}{\_} \smallstep t_1
    }{
    }
}
{
    \infer[(\beta_6)]{
       \tite{\tfalse}{\_}{t_2} \smallstep t_2
    }{
    }
}

\seprules

\
{
    \infer[(\beta_7)]{
        \turec{\tzero}{t_0}{\_} \smallstep t_0
    }{
    }
}

\vspace{1ex}

\
{
    \infer[(\beta_8)]{
       \turec{\tsucc{v}}{t_0}{
           (n,y) \Rightarrow t_s
        } \smallstep
            t_s[n \mapsto v, y \mapsto
            \tulambda{u}{\turec{v}{t_0}{
              (n,y) \Rightarrow t_s
            }}]
    }{
        \isval{v}
    }
}

\vspace{2ex}

\
{
    \infer[(\beta_9)]{
       \tufix{y \Rightarrow t}
            \smallstep
            t[y \mapsto
            \tulambda{u}{\tufix{y \Rightarrow t}}]
    }{
    }
}

\vspace{2ex}

\sidebyside
{
    \infer[(\beta_{10})]{
        \tumatch{\tzero}{t_0}{\_} \smallstep t_0
    }{
    }
}
{
    \infer[(\beta_{11})]{
       \tumatch{\tsucc{v}}{\_}{
           n \Rightarrow t_s
        } \smallstep
            t_s[n \mapsto v]
    }{
        \isval{v}
    }
}

\vspace{2ex}

\
{
    \infer[(\beta_{12})]{
        \tsummatch{\tuleft{v}}{x \Rightarrow t}{\_} \smallstep \subst{t}{x}{v}
    }{
        \isval{v}
    }
}

\vspace{2ex}

\
{
    \infer[(\beta_{13})]{
        \tsummatch{\turight{v}}{\_}{x \Rightarrow t} \smallstep \subst{t}{x}{v}
    }{
        \isval{v}
    }
}

\vspace{2ex}

\sidebyside
{
    \infer[(\beta_{14})]{
        \tulet{x}{v}{t} \smallstep \subst{t}{x}{v}
    }{
        \isval{v}
    }
}
{
    \infer[(\beta_{15})]{
        \tunfoldin{\tufold{v}}{x \Rightarrow t} \smallstep \subst{t}{x}{v}
    }{
        \isval{v}
    }
}

\seprules

\sidebyside
{
    \infer[(\beta_{16})]{
        \tsize{v} \smallstep \buildnat(\sizesemantics{v})
    }{
        \isval{v}
    }
}
{
    \infer[\textit{(Evaluation Context)}]{
        \context{t} \smallstep \context{t'}
    }{
        t \smallstep t'
    }
}

\caption{Small-step call-by-value operational semantics of untyped terms}
\label{fig:eval}
\end{figure}


\begin{figure}[htb]
\footnotesize
\[\begin{array}{rcl@{\hspace*{1cm}}rcl}
\sizesemantics{\tsucc{v}} &=& 1 + \sizesemantics{v} &
\sizesemantics{\tufold{v}} &=& 1 + \sizesemantics{v} \\
\sizesemantics{\turight{v}} &=& 1 + \sizesemantics{v} &
\sizesemantics{\tuleft{v}} &=& 1 + \sizesemantics{v} \\
\sizesemantics{(v_1,v_2)} &=& \sizesemantics{v_1} + \sizesemantics{v_2} &
\multicolumn{3}{c}{\sizesemantics{v} = 0, \mbox{\it in all other cases}} \\
\end{array}\]
\caption{Definition of $\sizesemantics{v}$ for a value $v$.}

\label{fig:sizesemantics}
\end{figure}

The set $\Val$ of \emph{values} of our language is defined (inductively) to be
$\tzero$, $\tu$, $\ttrue$, $\tfalse$, every variable $x$, every lambda
term $\tulambda{x}{t}$ or $\tuabs{t}$, the terms of the form $\tsucc{v}$ or
$\tufold{v}$ where $\isval{v}$, and the terms of the form $(v_1,v_2)$ where
$v_1,v_2 \in \Val$.

The call-by-value small-step relation between two terms $t_1,t_2 \in \UTerms$,
written $t_1 \smallstep t_2$, is standard for the most part and given in
Figure~\ref{fig:eval}. Given a term $t$ and a value $v$, $\subst{t}{x}{v}$
denotes the term $t$ where every free occurrence of $x$ has been replaced by
$v$.


To evaluate the fixpoint operator {\tt fix}, we use the rule $\tufix{y
\Rightarrow t} \smallstep t[y \mapsto \tulambda{()}{\tufix{y \Rightarrow t}}]$,
which substitutes the {\tt fix} under a lambda with unit argument. We do this
wrapping of {\tt fix} in a lambda term because we wanted all substitutions to be
values for our call-by-value semantics, and {\tt fix} is not. This also means
that, to make a recursive call within $t$, one has to use {\tt y()} instead of
{\tt y}.

To define the semantics of $\tsize{}$, we use a (mathematical) function
$\sizesemantics{}$ that returns the size of a value, ignoring lambdas for which
it returns $0$. The precise definition is given in
Figure~\ref{fig:sizesemantics}.

We make use in the operational semantics of an \emph{evaluation context}
$\contextsymb$, which specifies through a \emph{hole} $\thole$ the next place
where reduction can occur in a term. Inductively, an evaluation context
$\contextsymb$ must be of one of the following forms:
\begin{flalign*}
  & \thole \quad | \quad
    \contextsymb{}\ e \quad | \quad
    v\ \contextsymb{} \quad | \quad
    (\contextsymb{},e) \quad | \quad
    (v,\contextsymb{}) \quad | \quad
    \tprojl{\contextsymb} \quad | \quad
    \tprojr{\contextsymb} \quad | \quad
    \tsucc{\contextsymb} \quad | \quad \\
  & \turec{\contextsymb}{\_}{\_} \quad | \quad
    \tumatch{\contextsymb}{\_}{\_} \quad | \quad
    \tite{\contextsymb}{t}{t} \quad | \quad
    \tulet{x}{\contextsymb}{t} \quad | \quad
    \tsize{\contextsymb} \quad | \quad \\
  & \tufold{\contextsymb} \quad | \quad
    \tunfoldin{\contextsymb}{\_} \quad | \quad
    \tuinst{\contextsymb} \quad | \quad
    \tuleft{\contextsymb} \quad | \quad
    \turight{\contextsymb} \quad | \quad
    \tsummatch{\contextsymb}{\_}{\_}
\end{flalign*}

Given a term $t$, we denote by $\context{t}$ the context $\contextsymb$ where
the hole $\mathcal{H}$ has been replaced by $t$.

We denote by $\smallstep^*$ the reflexive and transitive closure of
$\smallstep$. A term $t$ is \emph{normalizing} if there exists a value $v$ such
that $t \smallstep^* v$.



\begin{figure}[htb]
\newcommand{\bitspace}{\hspace{1em}}
\begin{flalign*}
\tau \Coloneqq\ & \tunit\ |\ \tbool\ |\ \tnat\ |\ \ttop\hspace{1em} | \bitspace
      \tarrow{x}{\tau}{\tau}\ |\ \tprod{x}{\tau}{\tau} \ |\ \tsum{\tau}{\tau}\hspace{1em} | \bitspace
      \tforall{x}{\tau}{\tau} \ |\ \tpoly{\alpha}{\tau}\\
     & \indexedtype{n}{\alpha}{\tau}\hspace{1em} | \bitspace
     \trefine{x}{\tau}{t} \ |\ \tequal{t}{t}
\end{flalign*}

\caption{
    Grammar for types $\tau$, where $x \in \Var$ is a term
    variable, $\alpha \in \Var$ is a type variable ($t$ denotes annotated terms of Figure~\ref{fig:terms}
    that complete the mutually recursive definition)
}
\label{fig:types}
\end{figure}

\section{Types, Semantics and Reducibility}
\label{sec:definitions}











We give in Figure~\ref{fig:types} the grammar for the \emph{types} $\tau$ that
our verification system supports. Given two types $\tau_1$ and $\tau_2$, we use
the notation $\tau_1 \rightarrow \tau_2$ for $\tarrow{x}{\tau_1}{\tau_2}$ when
$x$ is not a free variable of $\tau_2$. Similarly, we use the notation $\tau_1
\times \tau_2$ for $\tprod{x}{\tau_1}{\tau_2}$ when $x$ is not a free variable of $\tau_2$.

For recursive types, we introduce the notation:
\[
    \rectype{\alpha}{\tau} \define\tforall{n}{\tnat}{\indexedtype{n}{\alpha}{\tau}}
\]
Then, the type of (non-diverging) streams informally introduced in Section~\ref{sec:motiv}
can be understood as a notation, when {\tt X} is a type, for:
$\Stream{\tt X} \define \rectype{\alpha}{{\tt X} \times (\tunit \rightarrow \alpha)}$.
Similarly, for a natural number $n$, the type of $n$-non-diverging streams
$\IStream{\tt X}{n}$ is a notation for $\indexedtype{n}{\alpha}{{\tt X} \times (\tunit \rightarrow \alpha)}$.
Using this notation, we can also define finite data structures such as lists of
elements from {\tt X}, as follows:
$\PList{\tt X} \define \rectype{\alpha}{\tsum{\tunit}{{\tt X} \times \alpha}}$.

We show in Section~\ref{sec:rectype} that these types indeed correspond to
streams and lists respectively.

Let $\Type$ be the set of all types. We define a (unary) logical relation on
types to describe terms that do not get stuck (e.g.~due to the error term
$\err$, or due to an ill-formed application such as `$\ttrue\ \tzero$') and that
terminate to a value of the given type. Our definition is inspired by the notion
of \emph{reducibility} or \emph{hereditary termination}~(see
e.g.~\cite{TaitStrongNorm,ProofsAndTypesBook,harper2016practical}), which we use
as a guiding principle for designing the type system and its extensions.







\subsection{Reduciblity for Closed Terms}

For each type $\tau$, we define in Figure~\ref{fig:reducibility} mutually
recursively the sets of \emph{reducible values} $\redvalues{\tau}{\interp}$ and
\emph{reducible terms} $\redexpr{\tau}{\interp}$. In that sense, a type $\tau$
can be understood as a \emph{specification} that some terms satisfy (and some do
not).

These definitions require an environment $\interp$, called an
\emph{interpretation}, to give meaning to type variables. Concretely, an
interpretation is a partial map from type variables to sets of terms.
An interpretation $\interp$ has the constraint that for every type variable
$\alpha \in \sigdom(\interp)$, $\interp(\alpha)$ is a \emph{reducibility
candidate} $\candidate$, which, in our setting, means that all terms in
$\interp(\alpha)$ are (erased) values. The set of all reducibility candidates is
denoted by $\candidates \subseteq 2^{\UTerms}$, and an interpretation $\interp$
is therefore a partial map in $\Var \pfun \candidates$.

When the interpretation has no influence on the definition, we may omit it. For
instance, for every $\interp \in (\Var \pfun \candidates)$, we have
$\redvalues{\tnat}{\interp} =
\set{\tzero,\tsucc{\tzero},\tsucc{\tsucc{\tzero}},\dots}$, so we can just denote
this set by $\redvalues{\tnat}{}$.

By construction, $\redvalues{\tau}{\interp}$ only contains (erased) \emph{values} (of
type $\tau$), while $\redexpr{\tau}{\interp}$ contains (erased) \emph{terms} that
reduce to a value in $\redvalues{\tau}{\interp}$. For example, a term in  $\redexpr{\tnat \rightarrow \tnat}{\interp}$ is
not only normalizing as a term of its own, but also normalizes whenever applied
to a value in $\redvalues{\tnat}{\interp}$.


\begin{figure}

\begin{flalign*}
& \redvalues{\alpha}{\interp} \define \interp(\alpha) \\
& \redvalues{\ttop}{\interp} \define \Val \\
& \redvalues{\tunit}{\interp} \define \set{\tu} \\
& \redvalues{\tbool}{\interp} \define \set{\ttrue,\tfalse} \\
& \redvalues{\tnat}{\interp} \define \set{\tzero,\tsucc{\tzero},\tsucc{\tsucc{\tzero}},\dots} \\
& \redvalues{\tarrow{x}{\tau_1}{\tau_2}}{\interp} \define
  \set{f \in \Val\ |\ \forall a \in \redvalues{\tau_1}{\interp}.\
    {f\ a} \in \redexpr{\subst{\tau_2}{x}{a}}{\interp} }  \\
& \redvalues{\tforall{x}{\tau_1}{\tau_2}}{\interp} \define
    \set{b \in  \Val \ |\
      \forall a \in \redvalues{\tau_1}{\interp}.\ b \in \redvalues{\subst{\tau_2}{x}{a}}{\interp}} \\
& \redvalues{\tprod{x}{\tau_1}{\tau_2}}{\interp} \define
   \set{(a,b)\ |\ a \in \redvalues{\tau_1}{\interp} \land b \in \redvalues{\subst{\tau_2}{x}{a}}{\interp}}\\
& \redvalues{\trefine{x}{\tau}{b}}{\interp} \define \set{a \in \redvalues{\tau}{\interp}\ |\ \subst{b}{x}{a} \smallstep^* \ttrue} \\
& \redvalues{\tsum{\tau_1}{\tau_2}}{\interp} \define
    \set{ \tuleft{v}\ |\ v \in \redvalues{\tau_1}{\interp} } \cup
    \set{ \turight{v}\ |\ v \in \redvalues{\tau_2}{\interp} }\\
& \redvalues{\tequal{t_1}{t_2}}{\interp} \define
  \set{\tu} {\textrm{ if
      $\equivalent{t_1}{t_2}$, and $\emptyset$ otherwise, where}} \\
  & \qquad\qquad\qquad (\equivalent{t_1}{t_2}) \define
   \big( \forall v \in \Val.\
           (t_1 \smallstep^* v \iff t_2 \smallstep^* v) \big)
  \\
& \redvalues{\tpoly{\alpha}{\tau}}{\interp} \define \{
    v \in \Val\ |\
        \forall \candidate \in \candidates.\
            \tuinst{v} \in \redexpr{\tau}{\subst{\interp}{\alpha}{\candidate}}
   \} \\
  & \redvalues{\indexedtype{t}{\alpha}{\tau}}{\interp} \define \big\{ \tufold{v}\  \big| \ 
    (t \smallstep^* \tzero \land v \in \redvalues{\basetype{\alpha}{\tau}}{\interp})\ \lor \\
& \qquad\qquad\qquad\qquad\qquad\qquad\quad
    \exists n \in \redvalues{\tnat}{}.\
        t \smallstep^* \tsucc{n} \land
    v \in \redvalues{\tau}{
        \subst
            {\interp}
            {\alpha}
            {\redvalues{\indexedtype{n}{\alpha}{\tau}}{\interp}}}
    \ \big\} \\
& \redexpr{\tau}{\interp} \define \set{t\ |\ \exists v \in \redvalues{\tau}{\interp}.\ t \smallstep^* v}
\end{flalign*}

\begin{flalign*}
  & \basetype{\alpha}{\tprod{x}{\tau_1}{\tau_2}} \define
    \tprod{x}{\basetype{\alpha}{\tau_1}}{\basetype{\alpha}{\tau_2}} \\
  & \basetype{\alpha}{\tsum{\tau_1}{\tau_2}} \define
    \tsum{\basetype{\alpha}{\tau_1}}{\basetype{\alpha}{\tau_2}} \\
  & \basetype{\alpha}{\tau} \define 
    \textrm{ if $\alpha \in \fv{\tau}$} \textrm{ then } \ttop
  \textrm{ else } \tau 
\end{flalign*}

\caption{Definition of reducibility for values and for terms for each type.
The function $\basetype{}{}$ is an auxiliary function, used in the base case of
the definition for recursive types.}
\label{fig:reducibility}

\end{figure}




The type $\trefine{x}{\tau}{b}$ represents the values $v$ of type $\tau$ for
which $\subst{b}{x}{v}$ evaluates to $\ttrue$. We use this type as a
building block for writing specifications (pre and postconditions).

The type $\tforall{x}{\tau_1}{\tau_2}$ represents the values that are in the
intersection of the types $\subst{\tau_2}{x}{a}$ when $a$ ranges over values of
type $\tau_1$. This type differs from $\tarrow{x}{\tau_1}{\tau_2}$ in the
sense that a value in $\tforall{x}{\tau_1}{\tau_2}$ belongs to
every $\subst{\tau_2}{x}{a}$ for $a$ in $\tau_1$, while a value
in $\tarrow{x}{\tau_1}{\tau_2}$ is a function that, when
applied to some $a$ in $\tau_1$, produces a value in
$\subst{\tau_2}{x}{a}$. From a value $b$ in
$\tforall{x}{\tau_1}{\tau_2}$, we can
build a value in $\tarrow{x}{\tau_1}{\tau_2}$ (namely, $\tulambda{x}{b}$),
while the other way around is not always possible.

The sum type $\tsum{\tau_1}{\tau_2}$ represents values that are either of the form
$\tuleft{v}$ where $v$ is a reducible value of $\tau_1$, or of the form
$\turight{v}$ where $v$ is a reducible value of $\tau_2$.

The set of reducible values for the equality type
$\redvalues{\tequal{t_1}{t_2}}{\interp}$ makes use of a notion of equivalence on
terms which is based on operational semantics. More specifically, we say that
$t_1$ and $t_2$ are \emph{equivalent}, denoted $\equivalent{t_1}{t_2}$, if for
every value $v$, we have $t_1 \smallstep^* v \iff t_2 \smallstep^* v$. Note that
this equivalence relation is defined even if we do not know anything about the
types of terms $t_1$ and $t_2$, and it ensures that if one of the terms reduces
to a value, then so does the other.

The type $\tpoly{\alpha}{\tau}$ is the polymorphic type from System F. The set
$\redvalues{\tpoly{\alpha}{\tau}}{\interp}$ is defined by using the environment
$\interp$ to bind the type variable $\alpha$ to an arbitrary reducibility
candidate.

We use the \emph{recursive type} $\indexedtype{n}{\alpha}{\tau}$ as a building
block for representing data structures such as lists of streams. The definition
of reducibility for the recursive type makes use of an auxiliary function
$\basetype{}{}$ that can be seen as an (upper) approximation of the recursive
type. Note that $\basetype{\alpha}{\tau}$ (defined at the bottom of
Figure~\ref{fig:reducibility}) removes the type variable $\alpha$ from $\tau$.

Our reducibility definition respects typical lemmas that are needed to prove the
soundness of typing rules, such as the following substitution lemma (see
\cite{girard1971extension} for the lemma on System F), which we have formally proven
(see also Section~\ref{sec:coq} below).

\begin{lemma}
\label{lem:reducibility-subst}

Let $\tau_1$ and $\tau_2$ be two types, and let $\alpha$ be a type variable that
may appear in $\tau_1$ but not in $\tau_2$. Let $\interp$ be a type interpretation.
Then, we have:
\[
    \redvalues{\tau_1}{\subst{\interp}{\alpha}{\redvalues{\tau_2}{\interp}}} =
    \redvalues{\subst{\tau_1}{\alpha}{\tau_2}}{\interp}
\]

\end{lemma}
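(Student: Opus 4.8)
The plan is to prove the statement by structural induction on the type $\tau_1$, showing for every interpretation $\interp$ (with the stated freshness hypothesis on $\alpha$) that the two sets of reducible values coincide; the equality on reducible terms $\redexpr{\cdot}{\cdot}$ then follows immediately since $\redexpr{\sigma}{\interp}$ is defined purely in terms of $\redvalues{\sigma}{\interp}$ and $\smallstep^*$. Write $\interp' = \subst{\interp}{\alpha}{\redvalues{\tau_2}{\interp}}$. The base cases $\tunit, \tbool, \tnat, \ttop$ are trivial since these sets are interpretation-independent and $\alpha$ does not occur, so $\subst{\tau_1}{\alpha}{\tau_2} = \tau_1$. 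The variable case splits: if $\tau_1 = \alpha$, then the left side is $\interp'(\alpha) = \redvalues{\tau_2}{\interp}$ and the right side is $\redvalues{\tau_2}{\interp}$ directly; if $\tau_1 = \beta \neq \alpha$, both sides equal $\interp(\beta)$. The equality type case $\tau_1 = \tequal{t_1}{t_2}$ is also immediate because, by the freshness assumption, $\alpha$ cannot occur there (type variables only occur in type positions, not in the terms $t_1,t_2$), so again $\subst{\tau_1}{\alpha}{\tau_2} = \tau_1$ and the set depends only on the operational equivalence $\equivalent{t_1}{t_2}$.

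The inductive cases for $\tarrow{x}{\sigma_1}{\sigma_2}$, $\tprod{x}{\sigma_1}{\sigma_2}$, $\tsum{\sigma_1}{\sigma_2}$, $\tforall{x}{\sigma_1}{\sigma_2}$, and $\trefine{x}{\sigma}{b}$ all follow the same pattern: unfold the definition of reducibility on both sides, push the substitution $\subst{\cdot}{\alpha}{\tau_2}$ inward through the type constructor (this is where one needs that substitution commutes syntactically with the constructors, and uses the freshness of $\alpha$ to avoid capture when a binder $x$ is involved — since $x$ is a term variable and $\alpha$ a type variable this is automatic, but one must also know $\alpha \notin \fv{a}$ for the values $a$ being substituted for $x$, which holds because reducible values are closed), and then apply the induction hypothesis to $\sigma_1$ and $\sigma_2$. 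A subtle point for the dependent cases ($\Pi$, $\Sigma$, $\forall$, refinement) is the interaction of the term-level substitution $\subst{\sigma_2}{x}{a}$ with the type-level substitution $\subst{\sigma_2}{\alpha}{\tau_2}$: these commute because $\alpha \notin \fv{\tau_2}$ would be needed only if the substitution order mattered, but since $a$ is a closed value and $x \neq \alpha$, the two substitutions act on disjoint variables and commute freely; the induction hypothesis is then applied to the (smaller) type $\subst{\sigma_2}{x}{a}$, which is legitimate since substituting a closed term for a term variable does not change the type's structure in a way that breaks the induction measure.

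The case $\tau_1 = \tpoly{\beta}{\sigma}$ (System F quantifier) requires a secondary observation: unfolding the definition introduces a fresh candidate $\candidate$ bound to $\beta$ in the interpretation, and one must commute $\subst{\interp}{\beta}{\candidate}$ with $\subst{\cdot}{\alpha}{\redvalues{\tau_2}{\interp}}$; this is fine provided $\beta \neq \alpha$ (ensured by alpha-renaming) and $\beta \notin \fv{\tau_2}$ (again by alpha-renaming of the bound variable away from the free variables of $\tau_2$), after which the induction hypothesis applies to $\sigma$ under the extended interpretation. The genuinely delicate case — and the one I expect to be the main obstacle — is the recursive type $\tau_1 = \indexedtype{n}{\beta}{\sigma}$. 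Here the right-hand side $\redvalues{\indexedtype{n}{\beta}{\subst{\sigma}{\alpha}{\tau_2}}}{\interp}$ unfolds, in the successor branch, to $\redvalues{\subst{\sigma}{\alpha}{\tau_2}}{\subst{\interp}{\beta}{\redvalues{\indexedtype{n'}{\beta}{\subst{\sigma}{\alpha}{\tau_2}}}{\interp}}}$, and I would need a nested induction on the index $n'$ (or on the derivation that $n \smallstep^* \tsucc{n'} \smallstep^* \cdots$) to match this against the left-hand side, simultaneously tracking how $\basetype{\beta}{\subst{\sigma}{\alpha}{\tau_2}}$ relates to $\subst{\basetype{\beta}{\sigma}}{\alpha}{\tau_2}$ in the zero branch — this requires checking that the $\basetype{}{}$ operation commutes with substitution of a type for a \emph{different} variable $\alpha \neq \beta$, which follows from its defining clauses together with the fact that $\alpha \in \fv{\subst{\sigma}{\alpha}{\tau_2}}$ iff $\alpha \in \fv{\tau_2}$, i.e.\ never, by hypothesis. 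The structural induction on $\tau_1$ combined with the inner numeric induction on the stream/list index is what makes this case the crux; everything else is bookkeeping about capture-avoiding substitution and closedness of reducible values, which we have verified in Coq.
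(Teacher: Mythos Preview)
Your approach is correct and is the standard structural-induction argument one expects for this lemma. The paper itself does not spell out a proof: it simply states the lemma, cites Girard for the System~F version, and defers to the accompanying Coq development. Your sketch is precisely the kind of argument that Coq formalization carries out, so there is no meaningful divergence to report.

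One small imprecision worth tightening: in the $\indexedtype{n}{\beta}{\sigma}$ case, the commutation of $\basetype{\beta}{\cdot}$ with the substitution $[\alpha \mapsto \tau_2]$ hinges on the fact that $\beta \in \fv{\sigma}$ iff $\beta \in \fv{\subst{\sigma}{\alpha}{\tau_2}}$, which in turn needs $\beta \notin \fv{\tau_2}$ (guaranteed by alpha-renaming the bound $\beta$). Your written justification instead talks about whether $\alpha$ is free in $\subst{\sigma}{\alpha}{\tau_2}$, which is a different (and here irrelevant) fact. The conclusion you draw is right, but the stated reason is the wrong one; swap $\alpha$ for $\beta$ in that sentence and the argument is clean.
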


\subsubsection{Well-Foundedness of the Reduciblity Definition}

We can show that the definition given Figure~\ref{fig:reducibility} is
well-founded by defining a lexicographic measure $({\tt typeNodes}(\tau), {\tt
index}(\tau)) \in \mathbb{N} \times \UTerms$ on types $\tau$. The function ${\tt
typeNodes}(\tau)$ returns the size of the syntactic tree of type $\tau$,
ignoring the terms that appear inside. This size roughly corresponds to the number of
(top-level) type constructors in the tree of $\tau$.
For example, ${\tt typeNodes}(\trefine{x}{\tau}{p}) = 1 + {\tt typeNodes}(\tau)$ (ignoring $p$),
${\tt typeNodes}(\tsum{\tau_1}{\tau_2}) = 1 + {\tt
typeNodes}(\tau_1) + {\tt typeNodes}(\tau_2)$, and ${\tt
typeNodes}(\tequal{t_1}{t_2}) = 0$. Ignoring the size of terms inside types
ensures that given a type $\tau$, a term variable $x$, and a term $a$, we have:
${\tt typeNodes}(\subst{\tau}{x}{a}) = {\tt typeNodes}(\tau)$. As a result,
the measure ${\tt typeNodes}(\tau)$ in the definition of reducibility decreases
for indexed types such as $\tprod{x}{\tau_1}{\tau_2}$ or
$\tarrow{x}{\tau_1}{\tau_2}$.

The number ${\tt typeNodes}(\tau)$
decreases in every case of Figure~\ref{fig:reducibility}, \emph{except} for recursive
types $\indexedtype{t}{\alpha}{\tau}$ where the measure stays the same in the
recursive call to the denotation on $\indexedtype{n}{\alpha}{\tau}$ with $t
\smallstep^* \tsucc{n}$.
This is where we use the second component of the lexicographic measure, ${\tt
index}(\tau)$. We define ${\tt index}(\indexedtype{t}{\alpha}{\tau}) = t$ and
${\tt index}(\tau) = \tuerr$ for every other type. Then, given
$t_1,~t_2~\in~\UTerms$, we consider $t_1$ to be (strictly) smaller than $t_2$ if
there exist $v_1, v_2 \in \redvalues{\tnat}{}$, such that $t_1 \smallstep^*
v_1$, $t_2 \smallstep^* v_2$, and $v_1$ is strictly smaller than $v_2$ when seen
as a natural number. Therefore, the second component ${\tt
index}(\indexedtype{n}{\alpha}{\tau}) = n$ is strictly smaller than ${\tt
index}(\indexedtype{t}{\alpha}{\tau}) = t$ in the definition (because $t
\smallstep^* \tsucc{n}$), which ensures that the overall lexicographic measure
decreases.


\subsection{Reduciblity for Open Terms}

Having defined reducibility for closed terms,
we now define what it means for a term $t$ with free term and type variables to
be reducible for a type $\tau$. Informally, we want to ensure that for every
interpretation of the type variables, and for every substitution of values for
the term variables, the term $t$ reduces in a finite number of steps to a value
in type $\tau$. This is formalized by a (semantic) typing relation
$\redd{\Theta;\Gamma}{t}{\tau}$ which is defined as follows.

First, a \emph{context} $\Theta;\Gamma$ is made of a finite set $\Theta
\subseteq \Var$ of type variables and of a sequence $\Gamma$ of pairs in $\Var
\times \Type$. The \emph{domain} of $\Gamma$, denoted $\sigdom(\Gamma)$ is the
list of variables (in $\Var$) appearing in the left-hand-sides of the pairs. We
implicitly assume throughout the paper that all variables appearing in the
domains are distinct. This enables us to use $\Gamma$ as a partial map from
$\Var$ to $\Type$. We use a sequence to represent $\Gamma$ as the order of
variables is important, since a variable may have a (dependent) type which
refers to previous variables in the context.

Given a partial map $\gamma \in \Var \pfun \UTerms$, we write $\gamma(t)$ for
the term $t$ where every variable $x$ is replaced by $\gamma(x)$. We use the
same notation $\gamma(\tau)$ for applying a substitution to a type $\tau$.

Given a context $\Theta;\Gamma$, a \emph{reducible substitution for
$\Theta;\Gamma$} is a pair of partial maps $\interp \in \Var \pfun \candidates$ and
$\gamma \in \Var \pfun \UTerms$ where:
  $\sigdom(\interp) = \Theta$,
  $\sigdom(\gamma) = \sigdom(\Gamma)$, and
  $\forall x \in \sigdom(\Gamma).\
  \gamma(x) \in \redvalues{\gamma(\Gamma(x))}{\interp}$.

Note that the substitution $\gamma$ is also applied to the type $\Gamma(x)$,
since $\Gamma(x)$ may be a dependent type with free term variables.
The set of all
pairs of reducible substitutions for $\Theta;\Gamma$ is denoted
$\redvalues{\Theta;\Gamma}{}$.

Finally, given a context $\Theta;\Gamma$, a term $t$ and a type $\tau$, we say
that $\redd{\Theta;\Gamma}{t}{\tau}$ holds when for every pair of substitutions
$\interp,\gamma$ for the context $\Theta;\Gamma$, $\gamma(t)$ belongs the
reducible values at type $\gamma(\tau)$. Formally,
$\redd{\Theta;\Gamma}{t}{\tau}$ is defined to hold when:
\[
  \forall \interp,\gamma \in \redvalues{\Theta;\Gamma}{}.\
    \gamma(t) \in \redexpr{\gamma(\tau)}{\interp}
\]

Our bidirectional type checking and inference algorithm in Section~\ref{sec:automated-verifier} is a sound (even if incomplete)
procedure to check
$\redd{\Theta;\Gamma}{t}{\tau}$.


\subsection{Recursive Types}
\label{sec:rectype}

We explain in this section how to interpret the type
$\indexedtype{n}{\alpha}{\tau}$ (see reducibility definition in
Figure~\ref{fig:reducibility}) and how the $\Stream{\tt X}$ and $\PList{\tt X}$
types represent streams and lists.




\subsubsection{Infinite Streams}


For a natural number $n$, consider the type $S_n \define \IStream{\tnat}{n}
\define \indexedtype{n}{\alpha}{\tnat \times (\tunit \rightarrow \alpha)}$. Let
us first see what $S_n$ represents for small values of $n$. As a shortcut, we
use the notations $0$, $1$, $2$, $\dots$ for $\tzero$, $\tsucc{\tzero}$,
$\tsucc{\tsucc{\tzero}}$, $\dots$

The definition $\redvalues{S_0}{}$ refers to $\basetype{\alpha}{\tnat \times
(\tunit \rightarrow \alpha)}$, which is $\tnat \times \ttop$ by
definition. This means that $\redvalues{S_0}{}$ is the set of values of the
form $\tufold{(a,v)}$, where $a \in \redvalues{\tnat}{}$, and $v \in \Val$.

By unrolling the definition, we get that $\redvalues{S_1}{}$ is the set of
values of the form $\tufold{v}$ where $v$ is in $\redvalues{\tnat \times (\tunit
\rightarrow \alpha)}{\subst{}{\alpha}{\redvalues{S_0}{}}}$, which is the same
(by Lemma~\ref{lem:reducibility-subst}) as $\redvalues{\tnat \times (\tunit
\rightarrow S_0)}{}$. Therefore, $\redvalues{S_1}{}$ is the set of values of the
form $\tufold{a, f}$ where $a \in \redvalues{\tnat}{}$ and $f \in
\redvalues{\tunit \rightarrow S_0}{}$. This means that when it is applied to
$\tu$, $f$ terminates and returns a value in $\redvalues{S_0}{\interp}$.
Similarly, $\redvalues{S_2}{}$ is the set of values of the form
$\tufold{a, f}$ where $n \in \redvalues{\tnat}{}$ and $f \in \redvalues{\tunit
\rightarrow S_1}{}$.

To summarize, we can say that for every $n \in \redvalues{\tnat}{}$, $S_n$
represents values of the language that behave as streams of natural numbers, as
long as they are unfolded at most $n+1$ times. This matches the property we
mentioned in Section~\ref{sec:motiv}, as $S_n$ represents the streams that are
$n+1$-non-diverging.
%
We can show that as $n$ grows, $S_n$ gets more and more constraints:
    $\redvalues{S_0}{} \supseteq \redvalues{S_1}{} \supseteq \redvalues{S_2}{}
    \supseteq \dots$
In the limit, a value $v \in \redvalues{\tforall{n}{\tnat}{S_n}}{}$ (which is in
every $S_n$ for $n \in \redvalues{\tnat}{}$), represents a stream of natural
numbers, that, regardless of the number of times it is unfolded, does not
diverge, i.e.~a non-diverging stream. Equivalently, we have
$v \in \redvalues{\Stream{\tnat}}{}$.


\subsubsection{Finite Lists}

Types of the form $\rectype{\alpha}{\tau}$ can also be used to
represent finite data structures such as lists.
We let $\IPList{\tt X}{n}$ be a notation for
$\indexedtype{n}{\alpha}{\tsum{\tunit}{{\tt X} \times \alpha}}$, so that:
\[
    \PList{\tt X} \define \tforall{n}{\tnat}{\IPList{\tt X}{n}}.
\]
Here are some examples to show how lists are encoded:
\begin{itemize}
\item The empty list is $\tufold{\tuleft{}}$,
\item A list with one element $n$ is $\tufold{\turight{n, \tufold{\tuleft{}}}}$,
\item Given an element $n$ and a list $l$, we can construct the
    list $n :: l$ by writing: $\tufold{\turight{n, l}}$.
\end{itemize}

Let us now see why $\PList{\tt X}$ represents the type of all finite lists of
elements in ${\tt X}$. The first thing to note is that given
$n \in \redvalues{\tnat}{}$, $\IPList{\tt X}{n}$ does \emph{not} represent the
lists of size $n$. For instance, we know that $\redvalues{\IPList{\tt X}{0}}{}$
is the set of values of the form $\tufold{v}$ where $v \in
\redvalues{\basetype{\alpha}{\tsum{\tunit}{{\tt X} \times \alpha}}}{}$, i.e.~$v
\in \redvalues{\ttop}{} = \Val$. Therefore, $\IPList{\tt X}{0}$ contains lists of
all sizes (and also all values that do not represent lists, such as
$\tufold{\tzero}$ or $\tufold{\tulambda{x}{(\tu}}$).

Instead, $\IPList{\tt X}{n}$ can be understood as the values that, as long as
they are unfolded no more than $n$ times, behave as lists. As for
streams, we have: $\redvalues{\IPList{\tt X}{0}}{} \supseteq
\redvalues{\IPList{\tt X}{1}}{}
\supseteq \redvalues{\IPList{\tt X}{2}}{} \supseteq \dots$ where the monotonicity
follows because $\alpha$ only appears in positive positions in the
definitions of the recursive types for streams and lists.
In the limit, we can show that $\PList{\tt X}$ contains all finite lists, and
nothing more.
\begin{restatable}{lemma}{reclists}
\label{lem:reclists}
Let $v \in \Val$ be a value and ${\tt X}$ be some type. Then, $v \in
\redvalues{\PList{\tt X}}{}$ if and only if there exists $k \geq 0$ and
$a_1,\dots,a_k \in \redvalues{\tt X}{}$ such that $v = \tufold{\turight{a_1,
\dots\tufold{\turight{a_k, \tufold{\tuleft{}}}}\dots}}$.
\end{restatable}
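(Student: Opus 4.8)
The plan is to prove both directions of the equivalence by induction, using the reducibility definition for $\indexedtype{n}{\alpha}{\tau}$ unrolled along the natural number index. The key observation is that $\List \equiv \tforall{n}{\tnat}{\IList{n}}$, so $v \in \redvalues{\List}{}$ means $v \in \redvalues{\IList{n}}{}$ for \emph{every} $n \in \redvalues{\tnat}{}$. For the right-to-left direction, given the explicit form $v = \tufold{\tright{a_1,\dots\tufold{\tright{a_k,\tufold{\tleft{}}}}\dots}}$, I would fix an arbitrary $n \in \redvalues{\tnat}{}$ and show $v \in \redvalues{\IList{n}}{}$ by induction on $n$ (or, more conveniently, by simultaneous induction tracking how $n$ relates to $k$): unrolling $\redvalues{\indexedtype{n}{\alpha}{\cdot}}{}$ once peels off one $\tufold{\tright{a_1,-}}$ and decrements $n$ to $n'$; when $n$ reaches $\tzero$ we land in $\redvalues{\basetype{\alpha}{\tsum{\tunit}{\tnat\times\alpha}}}{} = \redvalues{\ttop}{} = \Val$, which any value satisfies; when the list is exhausted first ($\tufold{\tleft{}}$), it lies in $\redvalues{\tsum{\tunit}{\tnat\times\alpha}}{\theta}$ for any $\theta$ via the $\tleft{}$ case of the sum. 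Here I would invoke Lemma~\ref{lem:reducibility-subst} to rewrite $\redvalues{\tnat\times(\tunit\Rightarrow\alpha)}{\subst{\interp}{\alpha}{\redvalues{\indexedtype{n'}{\alpha}{\tau}}{\interp}}}$ as $\redvalues{\tnat\times(\tunit\Rightarrow\indexedtype{n'}{\alpha}{\tau})}{\interp}$, exactly as done for streams in the text.

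For the left-to-right direction, the crucial point is that a value in $\redvalues{\List}{}$ lies in $\redvalues{\IList{n}}{}$ for \emph{all} $n$, and I want to extract a single finite witness list. The idea is: since $v \in \redvalues{\IList{n}}{}$ for every $n$, each such membership forces $v = \tufold{v'}$ with $v'$ either a $\tleft{}$-value or a $\tright{(a, w)}$-value with $a \in \redvalues{\tnat}{}$ and $w \in \redvalues{\IList{n-1}}{}$ (for $n = \tsucc{n-1}$), by the $\tsum$ case of reducibility and Lemma~\ref{lem:reducibility-subst}. So $v$ has the shape of an iterated $\tufold{\tright{\cdot,\cdot}}$ chain, and the question is only whether this chain is finite, i.e.\ eventually reaches $\tufold{\tleft{}}$. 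Suppose for contradiction it is infinite — but $v$ is a finite syntax tree (values are defined inductively in the excerpt), so the chain of nested $\tufold{\tright{\cdot,\cdot}}$ constructors has some finite depth $k$, after which we reach some value that is \emph{not} of the form $\tufold{\tright{\cdot,\cdot}}$. Choosing $n > k$ and unrolling $\redvalues{\IList{n}}{}$ exactly $k$ times brings us to $\redvalues{\IList{n-k}}{}$ with $n - k \geq 1$, whose membership requires the innermost value to be $\tufold{(\tleft{} \text{ or } \tright{\cdot,\cdot})}$ — and since it is not a $\tright{}$, it must be $\tufold{\tleft{}}$, which is precisely the desired form, and the $a_i$'s are the extracted naturals. (One subtlety: the $\tufold{v}$ at depth $j < k$ might \emph{also} be peeled as $\tufold{\tleft{}}$ too early — but that contradicts it being $\tufold{\tright{\cdot,\cdot}}$; and it could fail to have $a \in \redvalues{\tnat}{}$ only if some intermediate unrolling hit the $\basetype{}{}$ base case, which we avoid by taking $n$ large enough.)

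The main obstacle I anticipate is organizing the bookkeeping in the left-to-right direction cleanly: one must simultaneously use membership at \emph{many} different indices $n$ and argue that the structural depth of the finite value $v$ bounds how far the unrolling can go, so that choosing $n$ larger than that depth is sound. The cleanest formulation is probably a lemma by induction on the value $v$ (as a syntax tree): ``if $\tufold{v'} \in \redvalues{\IList{n}}{}$ for all $n \in \redvalues{\tnat}{}$, then either $v' = \tleft{}$, or $v' = \tright{(a, w)}$ with $a \in \redvalues{\tnat}{}$ and $w$ again satisfying the same hypothesis,'' which then unwinds by structural induction to the explicit form. Everything else — the sum-type and product-type unrollings, the appeal to $\basetype{}{}$ collapsing to $\ttop$ in the base case, and Lemma~\ref{lem:reducibility-subst} — is routine and directly parallels the stream computation already carried out in Section~\ref{sec:rectype}.
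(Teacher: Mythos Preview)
Your approach is correct and close to the paper's, with a slight organizational difference in the $(\Rightarrow)$ direction. For $(\Leftarrow)$ both you and the paper induct on $n$. For $(\Rightarrow)$, the paper first proves an auxiliary claim $(\ast)$ by induction on $n$ --- ``if $v \in \redvalues{\IList{n}}{}$ then either $v$ is a list of size at most $n-1$, or $v$ has $n$ nested $\tufold{\tright{\cdot,\cdot}}$ constructors with an arbitrary value at the bottom'' --- and then picks $n$ larger than the syntax-tree size of $v$ to rule out the second alternative. Your one-step unfolding lemma (use $n\ge 1$ to force the $\tleft{}$/$\tright{}$ shape, use all larger $n$ to push the $\forall n$ hypothesis down to $w$) followed by structural induction on $v$ is a cleaner packaging of the same idea: both arguments ultimately hinge on the finiteness of $v$'s syntax tree, and neither needs anything the other doesn't.

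Two small slips that don't affect correctness: by the definition of $\basetype{}{}$ in Figure~\ref{fig:reducibility}, $\basetype{\alpha}{\tsum{\tunit}{\tnat\times\alpha}} = \tsum{\tunit}{\tnat\times\ttop}$, not $\ttop$ (the paper's prose makes the same simplification); this still contains every finite list, so your base case goes through. And in your appeal to Lemma~\ref{lem:reducibility-subst} you wrote the stream body $\tnat\times(\tunit\Rightarrow\alpha)$ rather than the list body $\tsum{\tunit}{\tnat\times\alpha}$.
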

It may seem surprising that the type of streams $\rectype{\alpha}{{\tt X} \times
(\tunit \rightarrow \alpha)}$ contains infinite streams while the type of lists
$\rectype{\alpha}{\tsum{\tunit}{{\tt X} \times \alpha}}$ only contains finite
lists. The reason is that, in a call-by-value language, a value representing an
infinite list would need to have an infinite syntax tree, with infinitely many
$\tufold{}$'s (which is not possible). On the other hand, we can represent
infinite streams by hiding recursion underneath a lambda term as shown in
Section~\ref{sec:motiv}.


\section{A Bidirectional Type-Checking Algorithm}
\label{sec:automated-verifier}
\label{sec:bidi}

In this section, we give procedures for inferring a type $\tau$ for
a term $t$ in a context $\Theta;\Gamma$, denoted
$\infertype{\Theta;\Gamma}{t}{\tau}$, as well as for checking that the type of a term $t$
is $\tau$, denoted $\checktype{\Theta;\Gamma}{t}{\tau}$. We introduce rules of
our procedures throughout this section; the full set of rules is given in
figures~\ref{fig:bidirectional-infer} and ~\ref{fig:bidirectional-check}.


\begin{figure}

\centering

\footnotesize

\renewcommand{\seprules}{\vspace{6pt}}

\psidebysidebyside{0.6em}
{
  \infer[\textit{(Infer Var)}]{
    \infertype{\Theta;\Gamma}{x}{\tau}
  }{
    \Gamma(x) = \tau
  }
}
{
  \infer[\textit{(Infer True)}]{
    \infertype{\Theta;\Gamma}{\ttrue}{\tbool}
  }{
  }
}
{
  \infer[\textit{(Infer False)}]{
    \infertype{\Theta;\Gamma}{\tfalse}{\tbool}
  }{
  }
}

\seprules

\psidebysidebyside{0.6em}
{
  \infer[\textit{(Infer Unit)}]{
    \infertype{\Theta;\Gamma}{\tu}{\tunit}
  }{
  }
}{
  \infer[\textit{(Infer Zero)}]{
    \infertype{\Theta;\Gamma}{\tzero}{\tnat}
  }{
  }
}{
  \infer[\textit{(Infer Succ)}]{
    \infertype{\Theta;\Gamma}{\tsucc{t}}{\tnat}
  }{\checktype{\Theta;\Gamma}{t}{\tnat}}
}

\seprules

\
\infer[\textit{(Infer If)}]{
  \infertype{\Theta;\Gamma}{\tite{t_1}{t_2}{t_3}}{\titeintype{t_1}{\tau_2}{\tau_3}}
}{
  \checktype{\Theta;\Gamma}{t_1}{\tbool} &
  \infertype{\Theta;\Gamma,p:\tequal{t_1}{\ttrue}}{t_2}{\tau_2} &
  \infertype{\Theta;\Gamma,p:\tequal{t_1}{\tfalse}}{t_3}{\tau_3}
}

\seprules

\psidebysidebyside{0.7em}
{
  \scriptsize
  \infer[\textit{(Infer Left)}]{
    \infertype{\Theta;\Gamma}{\taleft{\tsum{\tau_1}{\tau_2}}{t}}{\tsum{\tau_1}{\tau_2}}
  }{
    \checktype{\Theta;\Gamma}{t}{{\tau_1}}
  }
}
{
  \scriptsize
  \infer[\textit{(Infer Right)}]{
    \infertype{\Theta;\Gamma}{\taright{\tsum{\tau_1}{\tau_2}}{t}}{\tsum{\tau_1}{\tau_2}}
  }{
    \checktype{\Theta;\Gamma}{t}{{\tau_2}}
  }
}
{
  \scriptsize
  \infer[\textit{(Infer Size)}]{
    \infertype{\Theta;\Gamma}{\tsize{t}}{\tnat}
  }{
    \infertype{\Theta;\Gamma}{t}{\tau}
  }
}

\seprules

{
\
  {
    \infer[\textit{(Infer Match)}]{
      \infertype
        {\Theta;\Gamma}
        {\tmatch{t_n}{t_0}{n \Rightarrow t_s}}
        {\tmatchintype{t_n}{\tau_1}{n \Rightarrow \tau_2}}
    }{
      \checktype{\Theta;\Gamma}{t_n}{\tnat} &
      \infertype{\Theta;\Gamma,p:\tequal{t_n}{\tzero}}{t_0}{\tau_1} &
      \infertype{\Theta;\Gamma,n:\tnat,p:\tequal{t_n}{\tsucc{n}}}{t_s}{\tau_2}
    }
  }
}

\seprules

{
\
  {
    \infer[\textit{(Infer Either Match)}]{
      \infertype
        {\Theta;\Gamma}
        {\tsummatch{t}{x \Rightarrow t_1}{x \Rightarrow t_2}}
        {\tsummatchintype{t}{x \Rightarrow \tau_1'}{x \Rightarrow \tau_2'}}
    }{
      \infertype{\Theta;\Gamma}{t}{\tsum{\tau_1}{\tau_2}} &
      \infertype{\Theta;\Gamma,x:\tau_1,p:\tequal{t}{\tuleft{x}}}{t_1}{\tau_1'} &
      \infertype{\Theta;\Gamma,x:\tau_2,p:\tequal{t}{\turight{x}}}{t_2}{\tau_2'}
    }
  }
}

\vspace{1ex}

\
{
  \infer[\textit{(Infer Rec)}]{
    \infertype{\Theta;\Gamma}{\trec{n \Rightarrow \tau}{t_n}{t_0}{(n,y) \Rightarrow t_s}}{\tuletintype{n}{t_n}{\tau}}
  }{
    \begin{array}{@{}c@{}}
    \checktype{\Theta;\Gamma}{t_n}{\tnat} \qquad
    \checktype{\Theta;\Gamma}{t_0}{\tau[n\mapsto\tzero]} \\
    \checktype{\Theta;\Gamma,n:\tnat,y:\tunit\rightarrow\tau,p:\tequal{y}{\tlambda{u}{\tunit}{\trec{n\rightarrow\tau}{n}{t_0}{(n,y) \Rightarrow t_s}}}}{t_s}{\tau[n\mapsto\tsucc{n}]}
    \end{array}
  }
}



\seprules

\
{
  \infer[\textit{(Infer Fix)}]{
    \infertype
      {\Theta;\Gamma}
      {\tfix{n \Rightarrow \tau}{(n,y) \Rightarrow t}}
      {\tforall{n}{\tnat}{\tau}}
  }{
    \begin{array}{@{}c@{}}
    n \notin \fv{\erase{t}}\\
        \hspace{-5em}\Theta;\Gamma,
        n:\tnat,
        y:\tunit\rightarrow \tforall{m}{\trefine{m}{\tnat}{m < n}}{\subst{\tau}{n}{m}}, \\
    \hspace{7em}    p:\tequal{y}{\tlambda{u}{\tunit}
          {\tfix{n \Rightarrow \tau}{(n,y) \Rightarrow t}}}
              \hspace{2em}
      \vdash
      {t} \Downarrow
      {\tau}
    \end{array}
  }
}

\seprules

\psidebyside{1em}{
  \infer[\textit{(Infer Pair)}]{
    \infertype{\Theta;\Gamma}{(t_1, t_2)}{\tprod{x}{\tau_1}{\tau_2}}
  }{
    \infertype{\Theta;\Gamma}{t_1}{\tau_1} &
    \infertype{\Theta;\Gamma}{t_2}{\tau_2}
  }
}
{
  \infer[\textit{(Infer Let)}]{
    \infertype
      {\Theta;\Gamma}
      {\tulet{x}{t_1}{t_2}}
      {\tuletintype{x}{t_1}{\tau_2}}
  }{
    \infertype{\Theta;\Gamma}{t_1}{\tau_1} &
    \infertype{\Theta;\Gamma,x:\tau_1,p:\tequal{x}{t_1}}{t_2}{\tau_2}
  }
}

\seprules

\sidebyside{
  \infer[\textit{(Infer Proj1)}]{
    \infertype{\Theta;\Gamma}{\tprojl{t}}{\tau_1}
  }{
    \infertype{\Theta;\Gamma}{t}{\tprod{x}{\tau_1}{\tau_2}}
  }
}{
  \infer[\textit{(Infer Proj2)}]{
    \infertype{\Theta;\Gamma}{\tprojr{t}}{\tuletintype{x}{\tprojl{t}}{\tau_2}}
  }{
    \infertype{\Theta;\Gamma}{t}{\tprod{x}{\tau_1}{\tau_2}}
  }
}

\seprules

\sidebyside
{
  \infer[\textit{(Infer Lambda)}]{
    \infertype{\Theta;\Gamma}{\tlambda{x}{\tau_1}{t}}{\tarrow{x}{\tau_1}{\tau_2}}
  }{
    \infertype{\Theta;\Gamma,x:\tau_1}{t}{\tau_2}
  }
}
{
  \infer[\textit{(Infer App)}]{
    \infertype{\Theta;\Gamma}{\tapp{t_1}{t_2}}{\tuletintype{x}{t_2}{\tau}}
  }{
    \infertype{\Theta;\Gamma}{t_1}{\tarrow{x}{\tau_2}{\tau}} &
    \checktype{\Theta;\Gamma}{t_2}{\tau_2}
  }
}

\seprules

\sidebyside
{
  \infer[\textit{(Infer Type Abs)}]{
    \infertype{\Theta;\Gamma}{\tabs{\alpha}{t}}{\tpoly{\alpha}{\tau}}
  }{
    \infertype{\Theta,\alpha;\Gamma}{t}{\tau}
  }
}
{
  \infer[\textit{(Infer Type App)}]{
    \infertype{\Theta;\Gamma}{\tinst{t}{\tau_2}}{\subst{\tau_1}{\alpha}{\tau_2}}
  }{
    \infertype{\Theta;\Gamma}{t}{\tpoly{\alpha}{\tau_1}}
  }
}

\seprules

\
{
  \infer[\textit{(Infer Forall Instantiation)}]{
    \infertype{\Theta;\Gamma}{\tinstforall{t_1}{t_2}}{\tuletintype{x}{t_2}{\tau}}
  }{
    \infertype{\Theta;\Gamma}{t_1}{\tforall{x}{\tau_2}{\tau}} &
    \checktype{\Theta;\Gamma}{t_2}{\tau_2}
  }
}

\seprules

\
{
  \infer[\textit{(Infer Fold)}]{
    \infertype
      {\Theta;\Gamma}
      {\tfold{\indexedtype{n}{\alpha}{\tau}}{t}}
      {\indexedtype{n}{\alpha}{\tau}}
  }{
    \begin{array}{@{}c@{}}
    \checktype{\Theta;\Gamma}{n}{\tnat} \hspace{2em}
    \checktype
      {\Theta;\Gamma, p: \tequal{n}{\tzero}}
      {t}
      {\basetype{\alpha}{\tau}} \\
    \checktype
      {\Theta;\Gamma; n': \tnat, p: \tequal{n}{\tsucc{n'}}}
      {t}
      {\subst{\tau}{\alpha}{\indexedtype{n'}{\alpha}{\tau}}}
    \end{array}
  }
}

\seprules

\
{
  \infer[\textit{(Infer Unfold)}]{
    \infertype
      {\Theta;\Gamma}
      {\tunfoldin{t_1}{x \RA t_2}}
      {\tau'}
  }{
    \begin{array}{@{}c@{}}
    \infertype{\Theta;\Gamma}{t_1}{\indexedtype{n}{\alpha}{\tau}}
      \hspace{4em}
    \infertype
      {\Theta;\Gamma,
        x: \basetype{\alpha}{\tau},
        p_1: \tequal{t_1}{\tufold{x}},
        p_2: \tequal{n}{\tzero}
      }
      {t_2}{\tau'}\\
    \infertype
      {\Theta;\Gamma,
        x: \subst{\tau}{\alpha}{\indexedtype{\tpred{n}}{\alpha}{\tau}},
        p: \tequal{t_1}{\tufold{x}}
      }
      {t_2}{\tau'}
    \end{array}
  }
}

\seprules

\
{
  \infer[\textit{(Infer Unfold Positive)}]{
    \infertype
      {\Theta;\Gamma}
      {\tunfoldin{t_1}{x \RA t_2}}
      {\tau'}
  }{
    \begin{array}{@{}c@{}}
    \infertype{\Theta;\Gamma}{t_1}{\indexedtype{n}{\alpha}{\tau}}
      \hspace{4em}
    \color{blue}\fbox{$\areequal{\Theta;\Gamma}{{\tt lessThan}\ 0\ n}{\ttrue}$} \\
    \infertype
      {\Theta;\Gamma,
        x: \subst{\tau}{\alpha}{\indexedtype{\tpred{n}}{\alpha}{\tau}},
        p: \tequal{t_1}{\tufold{x}}
      }
      {t_2}{\tau'}
    \end{array}
  }
}



\seprules

\sidebyside
{
  \infer[\textit{(Infer Err)}]{
    \infertype{\Theta;\Gamma}{\terr{\tau}}{\tau}
  }{
    \color{blue}\fbox{$\areequal{\Theta;\Gamma}{\ttrue}{\tfalse}$}
  }
}
{
  \infer[\textit{(Infer Refl)}]{
    \infertype{\Theta;\Gamma}{\trefl{t_1}{t_2}}{\tequal{t_1}{t_2}}
  }{
    \color{blue}\fbox{$\areequal{\Theta;\Gamma}{t_1}{t_2}$}
  }
}

\seprules

\
{
  \infer[\textit{(Infer Drop Refinement)}]{
    \infertype{\Theta;\Gamma}{t}{\tau}
  }{
    \infertype{\Theta;\Gamma}{t}{\trefine{x}{\tau}{p}}
  }
}

\caption{
  $\infertype{\Theta;\Gamma}{t}{\tau}$
  infers a type $\tau$ for $t$ in context $\Theta;\Gamma$ based on the shape of $t$.
  The \textit{(Infer Drop Refinement)} rule is applied
  with low priority, only if no other rule is applicable, keeping type checking
  deterministic.
}
\label{fig:bidirectional-infer}
\end{figure}


\begin{figure}

\centering

\footnotesize

\renewcommand{\seprules}{\vspace{6pt}}

\seprules

\
\infer[\textit{(Check If)}]{
  \checktype{\Theta;\Gamma}{\tite{t_1}{t_2}{t_3}}{\tau}
}{
  \checktype{\Theta;\Gamma}{t_1}{\tbool} &
  \checktype{\Theta;\Gamma,p:\tequal{t_1}{\ttrue}}{t_2}{\tau} &
  \checktype{\Theta;\Gamma,p:\tequal{t_1}{\tfalse}}{t_3}{\tau}
}

\seprules

{
\
  {
    \infer[\textit{(Check Match)}]{
      \checktype
        {\Theta;\Gamma}
        {\tmatch{t_n}{t_0}{n \Rightarrow t_s}}
        {\tau}
    }{
      \checktype{\Theta;\Gamma}{t_n}{\tnat} &
      \checktype{\Theta;\Gamma,p:\tequal{t_n}{\tzero}}{t_0}{\tau} &
      \checktype{\Theta;\Gamma,n:\tnat,p:\tequal{t_n}{\tsucc{n}}}{t_s}{\tau}
    }
  }
}

\seprules

{
\
  {
    \infer[\textit{(Check Either Match)}]{
      \checktype
        {\Theta;\Gamma}
        {\tsummatch{t}{x \Rightarrow t_1}{x \Rightarrow t_2}}
        {\tau}
    }{
      \infertype{\Theta;\Gamma}{t}{\tsum{\tau_1}{\tau_2}} &
      \checktype{\Theta;\Gamma,x:\tau_1,p:\tequal{t}{\tuleft{x}}}{t_1}{\tau} &
      \checktype{\Theta;\Gamma,x:\tau_2,p:\tequal{t}{\turight{x}}}{t_2}{\tau}
    }
  }
}

\seprules

\sidebyside
{
  \infer[\textit{(Check Let)}]{
    \checktype
      {\Theta;\Gamma}
      {\tulet{x}{t_1}{t_2}}
      {\tau}
  }{
    \infertype{\Theta;\Gamma}{t_1}{\tau_1} &
    \checktype{\Theta;\Gamma,x:\tau_1,p:\tequal{x}{t_1}}{t_2}{\tau}
  }
}
{
  \infer[\textit{(Check Forall)}]{
    \checktype{\Theta;\Gamma}{t}{\tforall{x}{\tnat}{\tau}}
  }{
    \checktype{\Theta;\Gamma,x:\tnat}{\tinst{t}{x}}{\tau}
  }
}

\seprules

\psidebyside{1em}{
  \infer[\textit{(Check Pi)}]{
    \checktype{\Theta;\Gamma}{t}{\tarrow{x}{\tau_1}{\tau_2}}
  }{
    \checktype{\Theta;\Gamma,x:\tau_1}{\tapp{t}{x}}{\tau_2}
  }
}{
  \infer[\textit{(Check Sigma)}]{
    \checktype{\Theta;\Gamma}{t}{\tprod{x}{\tau_1}{\tau_2}}
  }{
    \checktype{\Theta;\Gamma}{\tprojl{t}}{\tau_1} &
    \checktype{\Theta;\Gamma,x:\tau_1,p: \tequal{x}{\tprojl{t}}}{\tprojr{t}}{\tau_2}
  }
}

\seprules

\
{
  \infer[\textit{(Check Refinement)}]{
    \checktype{\Theta;\Gamma}{t}{\trefine{x}{\tau}{b}}
  }{
    \checktype{\Theta;\Gamma}{t}{\tau} &
    {\color{blue}\fbox{$\areequal{\Theta;\Gamma,x:\tau,p:\tequal{x}{t}}{b}{\ttrue}$}}
  }
}

\seprules

\sidebyside
{
  \infer[\textit{(Check Type Abs)}]{
    \checktype{\Theta;\Gamma}{t}{\tpoly{\alpha}{\tau}}
  }{
    \checktype{\Theta,\alpha;\Gamma}{\tinst{t}{\alpha}}{\tau}
  }
}
{
  \infer[\textit{(Check Recursive)}]{
    \checktype{\Theta;\Gamma}{t}{\indexedtype{n_1}{\alpha}{\tau}}
  }{
    \infertype{\Theta;\Gamma}{t}{\indexedtype{n_2}{\alpha}{\tau}} &
    {\color{blue}\fbox{$\areequal{\Theta;\Gamma}{n_1}{n_2}$}}
  }
}

\seprules



\psidebysidebyside{1em}
{
  \infer[\textit{(Check Top 1)}]{
    \checktype{\Theta;\Gamma}{v}{\ttop}
  }{
    \isval{v}
  }
}
{
  \infer[\textit{(Check Top 2)}]{
    \checktype{\Theta;\Gamma}{t}{\ttop}
  }{
    \infertype{\Theta;\Gamma}{t}{\tau}
  }
}
{
  \infer[\textit{(Check Reflexive)}]{
    \checktype{\Theta;\Gamma}{t}{\tau}
  }{
    \infertype{\Theta;\Gamma}{t}{\tau}
  }
}

\caption{
  $\checktype{\Theta;\Gamma}{t}{\tau}$
  checks that term $t$ indeed has type $\tau$ under context $\Theta;\Gamma$.
  When multiple rules are applicable, they are applied from a priority order
  from top to bottom, left to right. The \textit{(Check Forall)} rule can
  be generalized to (non-empty) types other than $\tnat$,
  but we only need it for $\tnat$ (in type $\rectype{\alpha}{\tau}$).
}
\label{fig:bidirectional-check}
\end{figure}


Our inference and checking rules give rise to conditions of the form
$\areequal{\Theta;\Gamma}{t_1}{t_2}$. We call such checks \emph{verification
conditions} (in the rules, they are boxed and appear in blue color). The
$\equiv$ sign is part of the judgment form, and does not
describe a formula. We rely on
an external solver to perform these checks, and assume that when the
verification condition is considered valid by the solver, then: $\forall
\interp, \gamma \in \redvalues{\Theta;\Gamma}{}.
\equivalent{\gamma(\erase{t_1})}{\gamma(\erase{t_2})}$.
This is an equivalent way of saying that
$\redvalues{\tequal{\gamma(\erase{t_1})}{\gamma(\erase{t_2})}}{\interp}$ is non-empty.
Under these conditions, we have the following theorem.

\begin{theorem}[Soundness of the Bidirectional Type-Checker]
\label{th:soundness}
If $\infertype{\Theta;\Gamma}{t}{\tau}$ holds or if
$\checktype{\Theta;\Gamma}{t}{\tau}$ holds, then
$\redd{\Theta;\erase{\Gamma}}{\erase{t}}{\erase{\tau}}$ holds.
\end{theorem}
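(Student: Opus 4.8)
The plan is to prove the statement by a \emph{simultaneous induction on the structure of the two derivations} $\infertype{\Theta;\Gamma}{t}{\tau}$ and $\checktype{\Theta;\Gamma}{t}{\tau}$; since the inference and checking judgments of Figures~\ref{fig:bidirectional-infer} and~\ref{fig:bidirectional-check} are mutually recursive, the two implications must be established together. Unfolding the definition of $\redd{\cdot}{\cdot}{\cdot}$, what must be shown for each rule is: for every pair of reducible substitutions $(\interp,\gamma)\in\redvalues{\Theta;\erase{\Gamma}}{}$, the term $\gamma(\erase{t})$ belongs to $\redexpr{\gamma(\erase{\tau})}{\interp}$. The whole argument lives on the erased, untyped side, so the first preliminary is a collection of routine commutation lemmas: erasure commutes with term substitution, with type substitution, and with every term- and type-former (e.g.\ $\erase{\tapp{t_1}{t_2}}=\tapp{\erase{t_1}}{\erase{t_2}}$, $\erase{\subst{\tau_1}{\alpha}{\tau_2}}=\subst{\erase{\tau_1}}{\alpha}{\erase{\tau_2}}$, $\erase{(\Gamma,x{:}\sigma)}=\erase{\Gamma},x{:}\erase{\sigma}$), so that the induction hypotheses on the (typed) premises translate directly into semantic facts about the erased subterms and subtypes.

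\textbf{Toolbox and routine cases.} The recurring ingredients are: (i) \emph{anti-reduction} --- $\redexpr{\tau}{\interp}$ is closed under taking $\smallstep^{*}$ backwards, immediate from $\redexpr{\tau}{\interp}=\set{t\mid\exists v\in\redvalues{\tau}{\interp}.\ t\smallstep^{*}v}$; (ii) \emph{determinism} of $\smallstep$, so a normalizing term has a unique value and $\equivalent{t}{v}$ whenever $t\smallstep^{*}v$ with $v$ a value; (iii) the substitution Lemma~\ref{lem:reducibility-subst}, needed by every rule touching type variables or recursive types; (iv) a \emph{let-in-type lemma}: if $\gamma(\erase{t_2})\smallstep^{*}v$ then $\redvalues{\gamma(\erase{\tuletintype{x}{t_2}{\tau}})}{\interp}=\redvalues{\subst{\gamma(\erase{\tau})}{x}{v}}{\interp}$ (and similarly for the $\titeintype{}{}{}$- and $\tmatchintype{}{}{}$-shaped result types), which unwinds the \textit{Infer App}, \textit{Infer Let}, \textit{Infer Proj2}, \textit{Infer Forall Instantiation}, \textit{Infer If}, \textit{Infer Match} and \textit{Infer Either Match} rules; and (v) the assumed \emph{soundness of the external solver}, namely that a discharged verification condition $\areequal{\Theta;\Gamma}{t_1}{t_2}$ yields $\equivalent{\gamma(\erase{t_1})}{\gamma(\erase{t_2})}$ for all reducible $(\_,\gamma)$. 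With these in place most rules are mechanical: value-introduction rules exhibit the reducible value directly --- for \textit{Infer Lambda} after a $\beta$-step and an appeal to the hypothesis on the body under $\gamma[x\mapsto a]$ --- using that $\gamma$ maps variables to reducible (hence erased) \emph{values}; elimination rules reduce the head to a reducible value, instantiate the $\Pi$/$\forall$/$\Sigma$/sum clause of its reducibility predicate, and close under anti-reduction; rules extending the context with a fresh equality witness $p$ (e.g.\ \textit{Infer If}, \textit{Infer Let}, \textit{Check Refinement}) need only that $\gamma[p\mapsto\turefl,\dots]$ is still a reducible substitution, which holds because $\turefl\in\redvalues{\tequal{t_1}{t_2}}{\interp}$ exactly when $\equivalent{t_1}{t_2}$, delivered by the relevant branch hypothesis or by determinism. \textit{Infer Drop Refinement} uses $\redvalues{\trefine{x}{\tau}{b}}{\interp}\subseteq\redvalues{\tau}{\interp}$, \textit{Check Refinement} discharges the refinement via solver soundness, and the remaining check rules mirror their infer counterparts. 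The recursive-type rules \textit{Infer Fold}, \textit{Infer Unfold}, \textit{Check Recursive} unfold $\redvalues{\indexedtype{n}{\alpha}{\tau}}{\interp}$, split on whether $\gamma(\erase{n})\smallstep^{*}\tzero$ or $\smallstep^{*}\tsucc{n'}$, treat the zero case through the $\basetype{\alpha}{\tau}$ clause, and in the successor case use Lemma~\ref{lem:reducibility-subst} to identify $\redvalues{\tau}{\subst{\interp}{\alpha}{\redvalues{\indexedtype{n'}{\alpha}{\tau}}{\interp}}}$ with $\redvalues{\subst{\tau}{\alpha}{\indexedtype{n'}{\alpha}{\tau}}}{\interp}$, matching the premises (with \textit{Check Recursive} additionally rewriting the index through the solver-discharged $\equivalent{n_1}{n_2}$).

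\textbf{The hard case.} The main obstacle is the soundness of \textit{(Infer Fix)}, and, analogously, \textit{(Infer Rec)}. Semantically $\tforall{n}{\tnat}{\tau}$ is an intersection over all $n\in\redvalues{\tnat}{}$, so one must show that the erased fixpoint $\tufix{y\Rightarrow\gamma(\erase{t})}$ reduces to a \emph{single} value inhabiting $\redvalues{\subst{\gamma(\erase{\tau})}{n}{n}}{\interp}$ for \emph{every} $n$; the side condition $n\notin\fv{\erase{t}}$ guarantees the reached value is independent of $n$. This is done by a \emph{nested strong induction on $n$}, performed inside the structural induction on the derivation: assuming the claim for all $m<n$, the strong-induction hypothesis is precisely the statement that $\tulambda{()}{\tufix{y\Rightarrow\gamma(\erase{t})}}$ inhabits $\gamma$ applied to $\tunit\rightarrow\tforall{m}{\trefine{m}{\tnat}{m<n}}{\subst{\tau}{n}{m}}$ --- the type assigned to $y$ in the premise of the rule --- so that $\gamma[n\mapsto n,\ y\mapsto\tulambda{()}{\tufix{y\Rightarrow\gamma(\erase{t})}},\ p\mapsto\turefl]$ is a reducible substitution for the extended context. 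Here it matters, exactly as emphasized in Section~\ref{sec:selfaware}, that the equality type is defined for \emph{arbitrary} (possibly ill-typed) terms, so the body-visible-recursion witness $p:\tequal{y}{\tulambda{()}{\tfix{n\Rightarrow\tau}{(n,y)\Rightarrow t}}}$ is legitimately inhabited by $\turefl$: after erasure its two sides are the same term and $\equivalent{\cdot}{\cdot}$ is reflexive. Feeding this substitution into the structural induction hypothesis for the premise gives that $\subst{\erase{t}}{y}{\tulambda{()}{\tufix{y\Rightarrow\erase{t}}}}$, the one-step reduct of $\tufix{y\Rightarrow\erase{t}}$, lies in $\redexpr{\subst{\gamma(\erase{\tau})}{n}{n}}{\interp}$; anti-reduction closes the strong-induction step, and determinism then upgrades the family of expression-level facts into membership of one value in $\redvalues{\gamma(\tforall{n}{\tnat}{\erase{\tau}})}{\interp}$. \textit{(Infer Rec)} follows the same pattern with an ordinary induction in place of the strong one, the recursor's computation allowed to mention $n$, and its unfolding rule in place of the {\tt fix} rule. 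Given this case, the well-foundedness of user-supplied measures (Figure~\ref{fig:sugar}) is a direct corollary of the computation sketched in Section~\ref{sec:measures}, requiring no separate argument; the only other mildly delicate points are the rules such as \textit{Check Pi} whose soundness relies on the relevant reducibility sets being nonempty (or on the rule being invoked only on already-evaluated terms), which the algorithm's bookkeeping ensures.
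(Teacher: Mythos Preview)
Your approach is essentially the paper's: the paper does not spell out a textual proof but defers to its Coq development (Section~\ref{sec:coq}), which is organized exactly as you describe---a family of per-rule soundness lemmas for reducibility (the {\tt Reducibility*.v} files) assembled by mutual induction on the inference/checking derivations (in {\tt Reducibility.v}), with the {\tt fix} case handled by the nested strong induction on $n$ that the rule's premise is designed to match. Your toolbox (anti-reduction, determinism, Lemma~\ref{lem:reducibility-subst}, the let/if/match-in-type unwinding, solver soundness for the boxed $\areequal{}{}{}$ conditions, and the observation that $\turefl$ inhabits the body-visible equality because the equality type is defined on arbitrary untyped terms) is precisely the set of lemmas the formalization relies on.
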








\subsection{Annotated Terms}
\label{sec:annotated-terms}


\begin{figure}[htb]

\begin{flalign*}
t  & \Coloneqq x\ |\ \tu \hspace{1em} | \hspace{1em}
              \tlambda{x}{\tau}{t} \ |\ \tapp{t}{t} \hspace{1em} | \hspace{1em}
              (t,t) \ |\ \tprojl{t} \ |\ \tprojr{t} \hspace{1em} | \hspace{1em}
              \tinstforall{t}{n} \ \ |\ \\
             & \taleft{\tsum{\tau}{\tau}}{t} \ |\
               \taright{\tsum{\tau}{\tau}}{t} \ |\
               \tsummatch{t}{x \Rightarrow t}{x \Rightarrow t} \hspace{1em} |
               \hspace{1em} \\
             & \ttrue\ |\ \tfalse\ |\ \tite{t}{t}{t}\ |\ \\
             & \tzero\ |\ \tsucc{t}\ |\
             \trec{x \Rightarrow \tau}{t}{t}{(n,y) \Rightarrow t} \ |\
             \tfix{n \Rightarrow \tau}{(n,y) \Rightarrow t} \ |\
             \tmatch{t}{t}{n \Rightarrow t} \ |\
             \\
             & \tfold{\tau}{t} \ |\
              \tunfoldin{t}{x \RA t} \ |\
              \tunfoldposin{t}{x \RA t} \ |\
             \\
             &
              \tabs{\alpha}{t} \ |\ \tinst{t}{\tau} \hspace{0.8em} | \hspace{0.8em}
              \terr{\tau}\ |\ \trefl{t}{t} \ |\ \tulet{x}{t}{t} \ |\ \tsize{t}
\end{flalign*}

\caption{
    Grammar for annotated terms $t$, where $x$, $y$ and $n$ are
    term variables and $\alpha$ is a type variable.
}
\label{fig:terms}
\end{figure}

In order to guide our type-checking algorithm, we require terms to be annotated.
We give in Figure~\ref{fig:terms} the grammar for \emph{annotated terms}. The
term $\tinstforall{t_1}{t_2}$ is used to instantiate a term $t_1$ which has a
type of the form $\tforall{x}{\tau_2}{\tau}$ to a particular term $t_2$ of type
$\tau_2$, in the \textit{(Infer~Forall~Instantiation)} type inference rule of
Figure~\ref{fig:bidirectional-infer}. The term $\tunfoldposin{t_1}{t_2}$ is an
annotated variant of $\tunfoldin{t_1}{t_2}$ (see rules \textit{(Infer~Unfold)}
and \textit{(Infer~Unfold~Positive)} in Figure~\ref{fig:bidirectional-infer}).
We discuss the difference between these rules in
Section~\ref{sec:running-example2-streams}.

The type $\tuletintype{x}{t_2}{\tau}$ represents the type $\tau$ where
the variable $x$ is bound to $t_2$ by using {\tt let}'s in each term that
appears in $\tau$. The formal definition is given in Section~\ref{sec:bidi}.

Annotations such as $\tlambda{x}{\tau}{t}$ or $\tinstforall{t_1}{t_2}$ have no
runtime influence and are erased (respectively to $\tulambda{x}{t}$ and $t_1$).
We write $\erase{t}$ to refer to the \emph{erasure} of $t$, where every
annotation has been erased. The full definition is given in
\iftoggle{arxiv}{Appendix~\ref{app:erase}, Figure~\ref{fig:erase}.} {the long
version of our paper~\citep{SystemFRLongVersion}.}

When a type $\tau$ has annotated terms inside, we write $\erase{\tau}$ to erase
their annotations. For instance $\erase{\trefine{x}{\tau}{b}}$ refers to
$\trefine{x}{\tau}{\erase{b}}$. Moreover, for a context $\Gamma$, we write
$\erase{\Gamma}$ to refer to the context $\Gamma$ where each type $\tau$ has
been replaced by $\erase{\tau}$.


\subsection{Contracts and Measures}
\label{sec:measures}

The syntax we support in our verification tool translates into our
core calculus presented above. In our tool we support
named functions with contracts and measures which are desugared into {\tt fix}
terms. To compare natural numbers and express the fact that
measures decrease, we use
functions `<', `{\tt <=}' and `{\tt ==}' on natural numbers. These functions can be defined
using the recursor {\tt rec}
\iftoggle{arxiv}{(see Appendix \ref{app:lessthan} for definitions).}
{(see the long version of our paper for definitions~\citep{SystemFRLongVersion}).}


Figure~\ref{fig:sugar} shows how, thanks to refinement types, the {\tt fix} term
can encode recursive functions (such as the one given in
Section~\ref{sec:motiv}) that feature user-defined
\emph{pre-} and \emph{post-conditions}
and whose termination arguments relies on a user-defined \emph{measure}
function. The {\tt fix} term shown on the right corresponds to the desugaring of
the recursive function on the left whose contracts are given by the {\bf
require} and {\bf ensuring} keywords, and whose measure is given by the {\bf
decreases} keyword. The contract terms $pre$ and $post$ are such that
{$\hastype{{\tt x}:\tau_1}{pre}{\tbool}$} and {$\hastype{{\tt x}:\tau_1,{\tt
res}:\tau_2}{post}{\tbool}$}, and the measure function $measure$ satisfies
{$\hastype{{\tt x}:\tau_1}{measure}{\tnat}$}. The term $\predsymb$ is a function
of type $\trefine{n}{\tnat}{\tzero < n} \rightarrow \tnat$ that returns the
predecessor of numbers greater than $\tzero$.


\begin{figure}[tbhp]

\begin{minipage}{0.29\textwidth}
\begin{lstlisting}
def f(x: $\tau_1$): $\tau_2$ = {
  require($pre$[x])
  decreases($measure$[x])
  $E$[x, f]
} ensuring {
    res $\RA$ $post$[x, res] }
\end{lstlisting}
\end{minipage}
\begin{minipage}{0.59\textwidth}
\begin{lstlisting}
f $\equiv$
  fix[n $\Rightarrow$ $\tarrow{x}{\trefine{x}{\trefine{x}{\tau_1}{pre}}{measure \leq n}}{\trefine{res}{\tau_2}{post}}$](
    (n, f) $\Rightarrow$ $\lambda{x}:{\trefine{x}{\trefine{x}{\tau_1}{pre}}{measure \leq n}}$.
      $E[x,\tinstforall{f}{\tpred{n}}\tu]$)
\end{lstlisting}
\end{minipage}

\caption{Encoding named function with pre- and post-conditions are given by
  the {\bf require} and {\bf ensuring}, and measure given by
  the {\bf decreases} keyword (left) into a terminating fixpoint recursion (right).
  \label{fig:sugar}}
\end{figure}


We now explain how our type-checking algorithm ensures termination of such a
function. Our type inference rule for {\tt fix} is \textit{(Infer Fix)}. The
side condition $n \notin \fv{\erase{t}}$ ensures that $n$ only appears in type
annotations in $t$, and is not part of the computation. The other check
corresponds to a proof by strong induction (over $n$) that the {\tt fix} term
has type $\tforall{n}{\tnat}{\tau}$. Indeed, we have to check that $t$, the body
of the {\tt fix} term, has type $\tau$ (for some $n: \tnat$), under the
assumption that $y$ (which is the variable representing the recursion) has type
$\subst{\tau}{n}{m}$ for all $m < n$. The `$\tunit \rightarrow$' part of the
type of $y$ corresponds to the fact that the operational semantics of {\tt fix}
replaces variable $y$ by the {\tt fix} term under a lambda (as explained in
Section~\ref{sec:operational}).

The variable $p$ is a witness that the variable $y$ is equal to the {\tt fix}
term (under a lambda). This feature is useful for \emph{body-visible} recursion,
and is explained in Section~\ref{sec:selfaware}.

Back to the encoding presented in Figure~\ref{fig:sugar}, we explain how the
\textit{(Infer Fix)} rule ensures that $measure$ decreases at each recursive
call of function {\tt f}.
Assume that the premise of the \textit{(Infer Fix)} rule holds, and that {\tt f}
is called with some value $v$ of type $\tau_1$, such that $measure[v]$ evaluates
to some (term representing a) natural number $n$. By instantiating the premise
of the \textit{(Infer Fix)} rule for that particular $n$, we get that
$E[x,\tinstforall{f}{\tpred{n}}\tu]$ is well-typed under the condition
that $f$ has type:
\[
    \tforall{m}{\trefine{m}{\tnat}{m < n}}{\tunit\rightarrow
        \tarrow
            {x}
            {\trefine{x}{\trefine{x}{\tau_1}{pre}}{measure \leq m}}
            {\trefine{res}{\tau_2}{post}}
    }
\]

First, in order for $\predsymb$ to be applied to $n$, we have to check that $n$
is non-zero, meaning that the measure of $v$ is strictly positive in the places
where the recursive calls happen. This is ensured by the \textit{(Check
Refinement)} rule for checking refinement types (see
Figure~\ref{fig:bidirectional-check}), which generates a verification condition.



Second, the rule \textit{(Infer Forall Instantiation)} ensures that
$\tinstforall{f}{\tpred{n}}\tu$ takes arguments of type
$\trefine{x}{\trefine{x}{\tau_1}{pre}}{measure \leq \tpred{n}}$. Therefore, if
$f$ is applied recursively to an argument $v'$, the rule \textit{(Check
Refinement)} ensures that $measure[v'] \leq \tpred{n}$ holds. Overall, we get
$measure[v'] \leq \tpred{n} < n = measure[v]$, which ensures that the measures
of arguments always decrease on recursive calls to {\tt f}.

In our implementation, we do not go through the encoding with {\tt fix} and
forall types, but instead directly generate the verification conditions that
correspond to the measure decreasing by using the left-hand-side form of
Figure~\ref{fig:sugar}. Our system also supports mutually recursive functions
(by requiring that the measure decreases for each call to a mutually recursive
function), which can be encoded in the usual way by defining a {\tt fix} term
that returns a tuple of functions.

In the end, if the body of the function is well-typed, the
\textit{(Infer Fix)} rule infers the type:
\[
    \tforall{n}{\tnat}{
        \tarrow
            {x}
            {\trefine{x}{\trefine{x}{\tau_1}{pre}}{m \leq n}}
            {\trefine{res}{\tau_2}{post}}
    }
\]

One should note that this encoding imposes a scoping
restriction on the original program, namely precondition,
postcondition, and measure of a function $f$ cannot contain
calls to $f$. This restriction has not proved limiting in
our experience with benchmarks.

\subsubsection{Lexicographic Orderings}
\label{sec:lexicographic}
Functions whose termination arguments require lexicographic orderings can be
encoded by using two levels of recursions, which is a known technique that shows
expressive power of System T \cite[Section 7.3.2]{ProofsAndTypesBook}.
We review how this encoding works in our system in \iftoggle{arxiv}{Appendix~\ref{app:lexicographic}}{the long version of our paper~\citep{SystemFRLongVersion}} and show an example of Ackerman's function and its simple lexicographic measure. In our implementation, we support lexicographic measures directly.


\subsection{Body-Visible Recursion}

\label{sec:selfaware}


In this section, we give more details about the $\textit{(Infer Fix)}$ and
$\textit{(Infer Rec)}$ typing rules for recursion. They allow \emph{body-visible
recursion} which gives the type-checker access to the definition of a recursive
function while type-checking the body of the recursive function itself.

The first thing to note is that we introduce an equality type containing the
definition (in the type of $p$) in the context, while we do not know yet whether
the body of the recursion is well-typed. Since our equality type is defined (in
Section~\ref{sec:definitions}) for all terms, regardless of whether they are
well-typed, this is perfectly legal. We show in the {\tt merge} example of
Figure~\ref{fig:merge} how body-visible recursion relieves the user from writing
excessive specification annotations.

Assume we want to prove, on paper, that {\tt merge} indeed returns a sorted list
when given two sorted lists {\tt l1} and {\tt l2}, by induction over {\tt
size(l1) + size(l2)}. Consider the first branch of the \tite{}{}{} statement,
where we return {\tt Cons(x, merge(xs,l2))}. By the induction
hypothesis, we know that the recursive call {\tt merge(xs,l2)} is sorted, but
this mere fact is not enough to conclude that {\tt Cons(x,
merge(xs,l2))} is sorted. By unfolding the definition of {\tt isSorted}, we see
that we need in addition to know that {\tt x} is smaller
than the head of the result {\tt merge(xs,l2)}.

Therefore, the property we prove by induction needs to be strengthened by saying
that the head of the result, if non-empty, is equal to the (smallest) head of
one the input lists. From that, we will know by the induction hypothesis that
the head $h$ of {\tt merge(xs,l2)} (if non-empty) is either the head of {\tt xs}
or the head of {\tt l2}. In the first case, we can deduce that {\tt x} is
smaller than $h$ by using the fact that {\tt l1 = Cons(x,xs)} is sorted. In the
second case, we have $h = {\tt y}$, and we know from the condition of the
\tite{}{}{} statement that {\tt x $\leq$ y}. In both cases, we can conclude that
the whole list {\tt Cons(x, merge(xs,l2))} is sorted.

If we are to type-check the program above, and if we only know the return type
of {\tt merge(xs,l2)}, that is {\tt \{ l: \List | isSorted(l) \}}, we will run
into the same problem, and will not be able to conclude that {\tt Cons(x,
merge(xs,l2))} is sorted. In our type system, we get in addition access to the
definition of {\tt merge} while type-checking it, thanks to the $p$ variable of
equality type in the \textit{(Infer~Fix}) rule. By unfolding the definition of
{\tt merge(xs,l2)}, we conclude by case analysis that the head $h$ of {\tt
merge(xs,l2)} (if non-empty) is either the head of {\tt xs} or the head of {\tt
l2} (which is {\tt y}).

Without body-visible recursion, the developer would need to strengthen the
postcondition to:
\begin{center}
\begin{tabular}{c}
\begin{lstlisting}
isEmpty(res) || (!isEmpty(l1) && head(res) == head(l1))
            || (!isEmpty(l2) && head(res) == head(l2))
\end{lstlisting}
\end{tabular}
\end{center}

In Inox, the external solver we use for verification
conditions, definitions of recursive functions are unfolded
automatically. Inox does incremental queries to SMT solvers.
It first sends a query without unfolding at all, then a new
query after unfolding once, and so on until a query succeeds
or a timeout. Thanks to this approach, Inox does not rely on
universal quantifiers to encode recursive functions. This
feature is crucial to have such examples be verified without
user intervention, and is here required to get the bodies of
the calls to {\tt merge} and {\tt isSorted} when verifying
the postcondition of {\tt merge}.


\subsection{Unification in Type Inference}
\label{sec:unification}




\begin{figure}
\[
\begin{array}{l@{}c@{}c@{}c@{}rcl}
  {\unify(E,}\ & \tunit                    & ,  &                    \tunit & ) & \define & \tunit \\
  {\unify(E,}\ & \tbool                    & ,  &                    \tbool & ) & \define & \tbool \\
  {\unify(E,}\ & \tnat                     & ,  &                     \tnat & ) & \define & \tnat  \\
  {\unify(E,}\ & \ttop                     & ,  &                     \ttop & ) & \define & \ttop  \\
  {\unify(E,}\ & \alpha                     & ,  &                     \alpha & ) & \define & \alpha  \\
  {\unify(E,}\ & \tarrow{x}{A_1}{B_1}      & ,  &      \tarrow{x}{A_2}{B_2} & ) & \define & \tarrow{x}{\unify(E, A_1,A_2)}{\unify(E, B_1,B_2)} \\
  {\unify(E,}\ & \tforall{x}{A_1}{B_1}      & ,  &      \tforall{x}{A_2}{B_2} & ) & \define & \tforall{x}{\unify(E, A_1,A_2)}{\unify(E, B_1,B_2)} \\
  {\unify(E,}\ & \tprod{x}{A_1}{B_1}       & ,  &       \tprod{x}{A_2}{B_2} & ) & \define & \tprod{x}{\unify(E, A_1,A_2)}{\unify(E, B_1,B_2)} \\
  {\unify(E,}\ & \indexedtype{n_1}{\alpha}{\tau_1} & , & \indexedtype{n_2}{\alpha}{\tau_2} & ) & \define & \indexedtype{E[n_1,n_2]}{\alpha}{\unify(E, \tau_1,\tau_2)} \\
  {\unify(E,}\ & \trefine{x}{A_1}{p_1}     & ,~ &     \trefine{x}{A_2}{p_2} & ) & \define & \trefine{x}{\unify(E, A_1,A_2)}{E[p_1,p_2]} \\
  {\unify(E,}\ & \tequal{t_{1,1}}{t_{1,2}} & ,  & \tequal{t_{2,1}}{t_{2,2}} & ) & \define & \tequal{E[t_{1,1},t_{2,1}]}{E[t_{1,2},t_{2,2}]} \\
\end{array}
\]

\begin{flalign*}
  {\tt If}\ t_1\  {\tt Then}\  \tau_2\ {\tt Else}\ \tau_3 & \define \unify(\tite{t_1}{\_}{\_}, \tau_2, \tau_3) \\
  {\tt Let}\ x = t\ {\tt in }\ \tau & \define \unify(\tulet{x}{t}{\_}, \tau, \tau) \\
  \tsummatchintype{t}{x \RA \tau_1}{x \RA\tau_2} & \define
    \unify(\tsummatch{t}{x \RA \_}{x \RA \_}, \tau_1, \tau_2) \\
  \tmatchintype{t}{\tau_1}{n \RA\tau_2} & \define
    \unify(\tmatch{t}{\_}{n \RA \_}, \tau_1, \tau_2)
\end{flalign*}

\caption{
  Recursion schema for unification of two types, where $E$ is a context with two holes.
}
  \label{fig:simplification}
\end{figure}

In order to perform type inference, we expect a certain structure on the
inferred types. For example, given term $\tapp{t_1}{t_2}$, we expect term
$t_1$ to have a function type when inferring the type of an application.
Furthermore, type inference must perform least upper bound computation for {\tt
if}, {\tt match} and {\tt either\_match} terms, which adds more complexity to
the system. We handle these considerations in a generalized manner by performing
\emph{type unifications}, defined in Figure~\ref{fig:simplification} by using
the type-level notations {\tt If Then Else}, {\tt Let}, {\tt Either\_Match}, and
{\tt Match}.




We draw attention here to the fact that our type checking and inference
procedures are \emph{syntax directed} and \emph{predictable}, which enables a
natural verification process through term-level hints. The algorithmic nature of
our type checking precludes the verification of certain well-typed programs.
However, our experience has shown that this limitation is largely
inconsequential in practice and is outweighed by the predictable nature of the
algorithm.


\subsection{Hiding Recursive Type Indices}
\label{sec:generalization}






\begin{figure}

\centering
\footnotesize

\
\infer[\textit{(Infer Unfold Gen)}]{
    \infertype{\Theta;\Gamma}{\tunfoldin{t_1}{x \RA t_2}}{\tau'}
}{
    \infertype{\Theta;\Gamma}{t_1}{\rectype{\alpha}{\tau}} &
    \infertype{\Theta;\Gamma,
        x:\rectype{\alpha}{\tau},
        p:\tequal{t_1}{\tufold{x}}
    }{t_2}{\tau'} &
    \tspos{\alpha}{\tau}
}

\seprules

\
\infer[\textit{(Infer Fold Gen)}]{
    \infertype{\Gamma}{\tfold{\rectype{\alpha}{\tau}}{t}}{\rectype{\alpha}{\tau}}
}{
    \checktype{\Gamma}{t}{\subst{\tau}{\alpha}{\rectype{\alpha}{\tau}}} &
    \tspos{\alpha}{\tau}
}

\caption{Folding and unfolding without indices for strictly positive recursive types.}
\label{fig:foldgen}

\end{figure}

Beyond general rules of Figure~\ref{fig:bidirectional-infer}, 
in Figure~\ref{fig:foldgen} we show
two additional rules for `{\tt fold}' and `{\tt unfold in}' that ignore
the indices hidden under the {\tt Rec} type for strictly positive recursive
types (which is the case for the {\tt List} and {\tt Stream} types). We write
$\tspos{\alpha}{\tau}$ when a type variable appears only strictly positively in
type $\tau$, meaning only to the right of $\Pi$ and $\forall$ types (see
\iftoggle{arxiv}{Appendix~\ref{app:polarity}}{the long version of our paper~\citep{SystemFRLongVersion}}
for the precise definition). This enables us, under
some conditions, to {\tt fold} and {\tt unfold} a strictly positive recursive
type without worrying about indices.

Practically, given an element $l$ of type $\PList{\tt X}$ (resp.~$\Stream{\tt
X}$), we can unfold it (with the \textit{(Infer~Unfold~Gen}) rule) to get its
head and its tail of type $\PList{\tt X}$ (resp.~$Unit \rightarrow \Stream{\tt
X}$). Conversely, we use the rule \textit{(Infer~Fold~Gen)} to build a list or a
stream from an element and a tail.

Strict positivity gives us the following key lemma that ensures the soundness of
our rules with respect to our reducibility definition. This lemma states that
when a type variable $\alpha$ appears only strictly positively in $\tau$, then
quantifying with a forall type outside $\tau$ or inside a substitution for
$\alpha$ is the same (as long as we are quantifying over a non-empty type
$\tau_1$). This property is similar to the notions \emph{lim sup-pushable} and
\emph{lim inf-pullable} \cite{DBLP:journals/lmcs/Abel08} and implies the soundness of our rules.



\begin{lemma}
\label{lem:distribute}
Let $\tau$ and $\tforall{x}{\tau_1}{\tau_2}$ be two types. Let $\alpha$ be a
type variable that appears strictly positively in $\tau$. Let $\interp$ be a
type interpretation such that $\redvalues{\tau_1}{\interp}$ is not empty. We
have:
\[
    \redvalues{\subst{\tau}{\alpha}{\tforall{x}{\tau_1}{\tau_2}}}{\interp} =
    \redvalues{\tforall{x}{\tau_1}{\subst{\tau}{\alpha}{\tau_2}}}{\interp}
\]
\end{lemma}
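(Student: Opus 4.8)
The plan is to prove the statement by structural induction on the type $\tau$, following the definition of strict positivity. The key observation is that strict positivity of $\alpha$ in $\tau$ constrains where the substitution $\subst{\tau}{\alpha}{\cdot}$ can "land": either $\alpha$ does not occur in $\tau$ at all (base case, where both sides are trivially equal to $\redvalues{\tau}{\interp}$), or $\tau$ is $\alpha$ itself, or $\tau$ is built from strictly-positive subterms using the covariant connectives $\Pi$ and $\forall$ (in the codomain), $\Sigma$, $+$, and $\rectype{}{}$, with $\alpha$ barred from negative positions such as the domain of a $\Pi$ type or the body of a refinement. I would set up the induction so that in each inductive case I unfold the reducibility definition from Figure~\ref{fig:reducibility} on both sides, push the outer $\forall x : \tau_1$ quantifier inward using the induction hypothesis on the strictly-positive subterm(s), and then re-fold the definition. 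The nonemptiness hypothesis on $\redvalues{\tau_1}{\interp}$ is exactly what licenses commuting the $\forall$-quantifier past the other connectives without changing the set of reducible values (for instance, $\forall x. (a,b) \in \cdots$ distributes over a pair only when the quantified set is inhabited so that the witness $a$ is forced to be fixed across the conjunction — more precisely, it is needed so that adding the quantifier does not vacuously enlarge the set).

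The crucial base case is $\tau = \alpha$. Here the left-hand side is $\redvalues{\tforall{x}{\tau_1}{\tau_2}}{\interp}$ and the right-hand side is $\redvalues{\tforall{x}{\tau_1}{\alpha}}{\interp} = \set{ v \in \Val \mid \forall a \in \redvalues{\tau_1}{\interp}.\ v \in \redvalues{\alpha}{\interp} }$, which, since $x \notin \fv{\alpha}$, collapses to $\redvalues{\alpha}{\interp}$ exactly when $\redvalues{\tau_1}{\interp}$ is nonempty — and this is where the hypothesis is used. Then I would handle the inductive cases in turn: for $\tau = \tarrow{y}{\tau'}{\tau''}$ with $\alpha$ strictly positive in $\tau''$ (and absent from $\tau'$), unfold the function-type definition on both sides and apply the induction hypothesis to $\tau''$, using that the $\forall x$ quantifier commutes with the "$\forall a \in \redvalues{\tau'}{}$" quantifier; the cases for $\Sigma$, $+$, and the outer-$\forall$ connective are analogous; the $\rectype{}{}$ case (i.e.\ $\indexedtype{n}{\beta}{\sigma}$ with $\alpha$ strictly positive in $\sigma$ and $\alpha \neq \beta$) requires a secondary induction on the index, appealing to Lemma~\ref{lem:reducibility-subst} to relate nested interpretation updates, much as in the $S_n$ computations of Section~\ref{sec:rectype}.

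I expect the main obstacle to be the recursive-type case $\indexedtype{n}{\beta}{\sigma}$, where the reducibility definition itself already contains a substitution into $\interp$ (namely $\subst{\interp}{\beta}{\redvalues{\indexedtype{n'}{\beta}{\sigma}}{\interp}}$) that interacts with the substitution $\subst{\tau}{\alpha}{\cdot}$ we are trying to commute the quantifier with. The care needed is (i) to make sure the two type variables $\alpha$ and $\beta$ stay distinct (handled by alpha-renaming) and (ii) to set up an inner induction on the numeral $n'$ so that the induction hypothesis of the outer structural induction can be invoked at the strictly-positive subterm $\sigma$ under the updated interpretation; one must check that the nonemptiness of $\redvalues{\tau_1}{\interp}$ is preserved when passing to $\subst{\interp}{\beta}{\cdot}$, which holds because $\beta \notin \fv{\tau_1}$ (as $\tau_1$ lives outside $\tau$). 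A secondary subtlety is the refinement and equality leaves: strict positivity forbids $\alpha$ from occurring in the refinement predicate $b$ or in the terms $t_1, t_2$ of an equality type, so these fall into the $\alpha \notin \fv{\tau}$ base case and need no separate argument — but I would state this explicitly to make the induction airtight.
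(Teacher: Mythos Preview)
The paper does not give a written proof of this lemma; it is established only in the Coq development (Section~\ref{sec:coq}). Your overall plan---structural induction on $\tau$ along the clauses of $\tspos{\alpha}{\cdot}$, with an inner numeral induction for the ${\tt Rec}$ case and an appeal to Lemma~\ref{lem:reducibility-subst}---is the natural one and is what the mechanized proof does.

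There is, however, a concrete error: you have the two base cases swapped. When $\tau = \alpha$, the left side is $\subst{\alpha}{\alpha}{\tforall{x}{\tau_1}{\tau_2}} = \tforall{x}{\tau_1}{\tau_2}$ and the right side is $\tforall{x}{\tau_1}{\subst{\alpha}{\alpha}{\tau_2}} = \tforall{x}{\tau_1}{\tau_2}$ as well---your claimed right side $\tforall{x}{\tau_1}{\alpha}$ comes from computing $\subst{\alpha}{\alpha}{\tau_2}$ as $\alpha$ rather than $\tau_2$. So this case is syntactically immediate and does \emph{not} use nonemptiness. Conversely, the case $\alpha \notin \fv{\tau}$ is \emph{not} trivial: the left side is $\redvalues{\tau}{\interp}$, but the right side is $\redvalues{\tforall{x}{\tau_1}{\tau}}{\interp} = \{ v \in \Val \mid \forall a \in \redvalues{\tau_1}{\interp}.\ v \in \redvalues{\tau}{\interp}\}$ (taking $x$ fresh for $\tau$), which collapses to $\redvalues{\tau}{\interp}$ precisely because $\redvalues{\tau_1}{\interp}$ is nonempty. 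That is where the hypothesis enters first; it recurs in the $\Sigma$, $+$, and refinement cases to force the value's shape (pair, injection, {\tt fold}) from at least one instantiation of the quantifier.

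Two smaller points. First, refinement types are not a leaf: $\tspos{\alpha}{\trefine{y}{\tau'}{b}} = \tspos{\alpha}{\tau'}$, so $\alpha$ may occur in $\tau'$ and you must invoke the IH there (only the predicate $b$ is $\alpha$-free). Second, in the ${\tt Rec}$ case the definition of strict positivity carries an extra clause, $\alpha \notin \fv{\sigma} \lor \tspos{\beta}{\sigma}$; this is exactly what lets the $\forall$-quantifier commute with the interpretation update $\subst{\interp}{\beta}{\cdot}$ in the inductive step, and your sketch should cite it explicitly when setting up the inner induction on the index.
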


\subsection{Type-Checking Algorithm Examples: Streams}
\label{sec:running-example2-streams}

\subsubsection{Constant Stream}

The {\tt fix} term and associated typing rules can also be used to express the
kind of recursion used to define the streams in Section~\ref{sec:motiv}. We
start by revisiting the constant stream, which in our notations can be written
as an untyped term:
\begin{flalign*}
 & {\tt constant} \define {\tt fix}({\tt constant} \Rightarrow
   \tuabs{.\ \tulambda{x}{\tufold{x,\ \tulambda{u}{
        \tuinst{{\tt constant()}}(x)}}}}
 )
\end{flalign*}
Assume we want to prove, on paper, that for any
value $x$, $\tt \tuinst{{\tt constant}}(x)$ produces a non-diverging stream,
i.e.~a stream which is $n$-non-diverging for every $n \in \Nat$. A natural proof
could be done by induction on $n$, as follows:
\begin{itemize}
\item $(n = 0)$ $\tt \tuinst{{\tt constant}}(x)$ is $0$-non-diverging, meaning
  that it reduces to a value of the form $\tufold{x, v}$ where $x$ and $v$ are
  values. This is clear from the code of {\tt constant}, as this expression
  evaluates in a few steps to $\tufold{x,\ \tulambda{u}{\tuinst{(\lambda
  u'.{\tt constant})()}(x)}}$.
\item $(n = n'+1)$ Assume by induction that $\tt \tuinst{{\tt constant}}(x)$
  is $n'$-non-diverging. By definition of $n$-non-diverging, we get that
  $\tufold{x,\ \tlambda{u}{\tunit}{\tuinst{{\tt constant}}(x)}}$ is
  $n$-non-diverging. Since this term is equivalent to the term to which $\tt
  \tuinst{{\tt constant}}(x)$ evaluates, we conclude that $\tt \tuinst{{\tt
  constant}}(x)$ is $n$-non-diverging as well.
\end{itemize}

Our type system and type-checking algorithm can be used to simulate this proof by
using an annotated version of {\tt constant}:
\[
 {\tt constant} \define {\tt fix}
    [n \Rightarrow \tpoly{X}{X \rightarrow \IStream{X}{n}}]({\tt (n,constant)} \Rightarrow {\tt body(n,constant)})\\
\]
where ${\tt body(n,constant)}$ is a shorthand for
\[
  \tabs{X}{\tlambda{x}{X}{\tfold{\IStream{X}{n}}{x,\ \tlambda{u}{\tunit}{\tinst{\tinstforall{\tt constant}{\tpred{n}}\tu}{X}(x)}}}}
\]

By applying the \textit{(Infer Fix)} rule presented above, we get the type
\[
  \infertype
    {}
    {\tt constant}
    {\tforall{n}{\tnat}{\tpoly{X}{X \rightarrow \IStream{X}{n}}}}.
\]

The \textit{(Infer Fix)} rule of our algorithm generates a check that
corresponds to a (strong) induction that shows that for every $n \in
\redvalues{\tnat}{}$, ${\tt constant}[X](x)$ in $\redvalues{\IStream{X}{n}}{}$
assuming that it is in $\redvalues{\IStream{X}{m}}{}$ for all $m < n$:
\begin{flalign*}
 & {
      n: \tnat,
      {\tt constant}:
      \tforall{m}{\trefine{m}{\tnat}{m < n}}{\tunit \rightarrow \tpoly{X}{X \rightarrow \IStream{X}{n}}}
    } \vdash\\
  & \hspace{5em}{\tt body(n,constant)} \Downarrow
    {\tpoly{X}{X \rightarrow \IStream{X}{n}}}
\end{flalign*}

After applying standard rules related to $\lambda$ and $\Lambda$, our algorithm
will attempt to infer, using the \textit{(Infer Fold)} rule, a type for the
term:
\[
  \tfold{\IStream{X}{n}}{x,\ \tlambda{u}{\tunit}{\tinst{\tinstforall{\tt constant}{\tpred{n}}\tu}{X}(x)}}
\]


In addition to the type-check that $n$ has type $\tnat$, this rule generates two checks
to cover the cases where $n$ is zero or non-zero. These correspond to the
informal proof by induction given above for the non-divergence of {\tt
constant}. The first check reduces (after applying some straightforward rules)
to checking that $\tlambda{u}{\tunit}{\tinst{\tinstforall{\tt
constant}{\tpred{n}}\tu}{X}(x)}$ has type $\ttop$ (remember that
$\basetype{\alpha}{X \times (\tunit \rightarrow \alpha)} = X \times \ttop$),
which goes through easily thanks to the rule \textit{(Check Top 1)}.

The second check amounts to checking that $\tinstforall{\tt
constant}{\tpred{n}}\tu{\tt [X](x)}$ has type $\IStream{X}{n'}$ under the assumption that
$\tequal{n}{\tsucc{n'}}$. By the context we know that $\tinstforall{\tt
constant}{\tpred{n}}\tu{\tt [X](x)}$ has type $\IStream{X}{\tpred{n}}$. Since $n'$ and
$\tpred{n}$ are equivalent, we can use the rule \textit{(Check Recursive)}
(given in Figure~\ref{fig:bidirectional-check}) to convert between the two
types.

Attempting to type-check the {\tt badConstant} example
from Section~\ref{sec:motiv}:
\[
 {\tt fix}
    [n \Rightarrow \tpoly{X}{X \rightarrow \IStream{X}{n}}]({\tt (n,badConstant)} \Rightarrow {\tt badBody(n,badConstant)})\\
\]
where ${\tt badBody(n,badConstant)}$ stands for:
\[
  \tabs{X}{\tlambda{x}{X}{\tfold{\IStream{X}{n}}{x,\
    \tlambda{u}{\tunit}{\tinst{\tinstforall{\tt badConstant}{\tpred{n}}\tu}{X}(x)}.{\tt tail()}}}}
\]
will lead to an error in the second check (corresponding to
the inductive case), since the extra call to {\tt tail()}
decreases the index of the stream by one.




\subsubsection{ZipWith Function on Streams}

We now revisit the {\tt zipWith} function from Figure~\ref{fig:zipwith}. We said
in Section~\ref{sec:motiv} that for every $n \in \Nat$, when {\tt s1} and {\tt
s2} are $n$-non-diverging streams, then so is {\tt zipWith(f,s1,s2)}. On paper,
we can check by induction over $n \in \Nat$ that when {\tt s1} and {\tt s2} are
$n$-non-diverging streams and $f$ is terminating, then {\tt zipWith(f,s1,s2)}
(as written in Figure~\ref{fig:zipwith}, and ignoring type annotations and the erasable
parameter for the moment) is also an $n$-non-diverging stream.
\begin{itemize}
\item $(n = 0)$
  It is indeed the case that we can access the head of
  {\tt zipWith(f,s1,s2)} as long as we can access the heads {\tt s1}
  and {\tt s2} (and as long as $f$ terminates).
\item $(n = n'+1)$ Let {\tt s1} and {\tt s2} be two $n$-non-diverging streams.
  By definition of non-diverging, we know that \stail{\tt s1} and \stail{\tt s2}
  are $n'$-non-diverging. By induction hypothesis,\\
  {\tt zipWith(f,\stail{s1},\stail{s2})} is $n'$-non-diverging as well. This
  means that {\tt zipWith(f,s1,s2)} is $n$-non-diverging, which concludes the
  proof.
\end{itemize}


\begin{figure}

\scalebox{0.87}{\parbox{\linewidth}{
\begin{flalign*}
 & {\tt zipWith} \triangleq {\tt fix}
    [n \Rightarrow \tpoly{X,Y,Z}{(X \rightarrow Y \rightarrow Z)
      \rightarrow \IStream{\tt X}{n}
      \rightarrow \IStream{\tt Y}{n}
      \rightarrow \IStream{\tt Z}{n}}](\\
& \hspace{1em}{\tt (n,zipWith)} \Rightarrow \\
 &   \hspace{2em}
      \Lambda X,Y,Z.\
        \lambda{\tt f: X \rightarrow Y \rightarrow Z}.\
        \lambda{\tt s1: \IStream{\tt X}{n}}.\
        \lambda{\tt s2: \IStream{\tt Y}{n}}.\\
 & \hspace{3em}
    {\tt fold}[\IStream{\tt Z}{n}](
    \\
 & \hspace{4em}
    {\tt unfold\ s1\ in\ xs1\ } \RA \\
 & \hspace{4em}
    {\tt unfold\ s2\ in\ xs2\ } \RA
    \\
 & \hspace{5em}
    {\tt f}\ (\tprojl{\tt xs1})\ (\tprojl{\tt xs2}),
    \\
 & \hspace{4em}
    \lambda u: \tunit. \\
 & \hspace{5em}
    {\tt unfold\_pos\ s1\ in\ xs1\ } \RA \\
 & \hspace{5em}
    {\tt unfold\_pos\ s2\ in\ xs2\ } \RA
    \\
 & \hspace{6em}
    {\tinst{\tinst{\tinst{\tinstforall{\tt zipWith}{\tpred{n}}()}{X}}{Y}}{Z}\
      ((\tprojr{\tt xs1})\ \tu)\
      ((\tprojr{\tt xs2})\ \tu)
    } ))
\end{flalign*}
}}

\caption{An annotated term of our calculus to define {\tt zipWith}.}
\label{fig:zipwithterm}

\end{figure}

Accordingly, our type-checking algorithm is able to infer the type
\[
  \tforall{n}{\tnat}{\tpoly{X,Y,Z}{(X \rightarrow Y \rightarrow Z)
      \rightarrow \IStream{\tt X}{n}
      \rightarrow \IStream{\tt Y}{n}
      \rightarrow \IStream{\tt Z}{n}}}.
\] for the annotated {\tt zipWith} term given in Figure~\ref{fig:zipwithterm}.
We use the term `{\tt unfold in}' to access the heads (with $\tprojl{}$). For
the tails, we use instead `{\tt unfold\_pos in}'. To understand the difference,
we must first look at the \textit{(Infer Fold)} rule for the {\tt fold} term,
which will generate two subgoals, one with $n = 0$ and one with $n > 0$. In the
subgoal with $n = 0$, we must check that the lambda term `$\lambda u: Unit.
[...]$' has type $\ttop$, which goes through directly thanks to
\textit{(Check~Top~1)}. Therefore, when we are type-checking the body of the
lambda, we know that $n > 0$. We can thus access the tails using
\textit{(Infer~Unfold~Positive)}, a variant of \textit{(Infer~Unfold)} that
discards a subgoal with $n = 0$ but requires proving $n > 0$ instead.

\subsubsection{Fibonacci Stream}

We now consider the Fibonacci stream function from Figure~\ref{fig:fibonacci}.
We can prove by induction that for all $n \in \Nat$, {\tt fib} is an
$n$-non-diverging stream. In our view, this corresponds to writing {\tt fib}
as in Figure~\ref{fig:fibonacciterm}.


\begin{figure}

\scalebox{0.87}{\parbox{\linewidth}{
\begin{flalign*}
 & {\tt fib} \define {\tt fix}
    [n \Rightarrow \IStream{\tnat}{n}]({\tt (n,fib)} \Rightarrow \\
 & \hspace{2em}
    {\tt fold}[\IStream{\tnat}{n}](0, \lambda u: \tunit.\ \\
 & \hspace{3em}
    {\tt fold}[\IStream{\tnat}{\tpred{n}}](1, \lambda u: \tunit.\
    \\
 & \hspace{4em}
    {\tt unfold\_pos\ \tinstforall{fib}{\tpred{n}}()\ in\ xfib\ \RA}
    \\
 & \hspace{4em}
    {\tt inst(zipWith,\tpred{\tpred{n}})[\tnat][\tnat][\tnat]}\\
 &   \hspace{5em}{\tt plus} \ \ 
    {\tt (\tinstforall{fib}{\tpred{\tpred{n}}}())} \ \
    {\tt ((\tprojr{\tt xfib})\ \tu)})))
\end{flalign*}
}}

\caption{An annotated term of our calculus to define the Fibonacci stream.}
\label{fig:fibonacciterm}

\end{figure}


When type-checking {\tt fib}, we use the type of {\tt zipWith} and
instantiate it to \tpred{\tpred{n}}, and get that:
{\tt inst(zipWith,\tpred{\tpred{n}})} has type:

\scalebox{0.90}{\parbox{\linewidth}{
\[
  \tpoly{X,Y,Z}{(X \rightarrow Y \rightarrow Z)
      \rightarrow \IStream{\tt X}{\tpred{\tpred{n}}}
      \rightarrow \IStream{\tt Y}{\tpred{\tpred{n}}}
      \rightarrow \IStream{\tt Z}{\tpred{\tpred{n}}}
  }
\]
}}

This is where we use the fact that {\tt zipWith} returns an
$n$-non-diverging stream when given $n$-non-diverging streams.




\subsection{Verification of Properties}
\label{sec:running-example2-verif}

Finally, we show how to verify the property mentioned in
Section~\ref{sec:motiv}: that zipping two streams {\tt s1} and {\tt s2} with
the function $\tulambda{x}{\tulambda{y}{x}}$ returns a stream equivalent to
{\tt s1}.

We define a function {\tt zipWith\_fst} with the following type:

\scalebox{0.93}{\parbox{\linewidth}{
\[
  \tarrow{n}{\tnat}{
  \tpoly
    {\tt X}
    {\tarrow
      {\tt s1}
      {\Stream{\tt X}}
      {\tarrow
        {\tt s2}
        {\Stream{\tt X}}
          {\tequal
            {{\tt nth}\ n\ {\tt s1}}
            {{\tt nth}\ n\ (zipWith(\tulambda{x}{\tulambda{y}{x}}, \tt s1, s2))}}
      }
    }
  }
\]
}}


\begin{figure}

\footnotesize
\begin{flalign*}
{\tt zipWith\_fst} \equiv\ &\lambda n: \tnat.\ {\tt rec}[n \Rightarrow \tau](n, \\
  &\hspace{1em}
    \Lambda{\tt X}.\ \lambda {\tt s1}: \Stream{X}.\ \lambda {\tt s2}: \Stream{X}.\\
  &\hspace{2em}
    {\trefl{{\tt nth}\ 0\ {\tt s1}}{{\tt nth}\ 0\ (zipWith(\tulambda{x}{\tulambda{y}{x}}, \tt s1, s2))}}, \\
  & \qquad (n,{\tt zipWith\_fst}) \Rightarrow \\
  &\hspace{2em}
    \Lambda{\tt X}.\ \lambda {\tt s1}: \Stream{X}.\ \lambda {\tt s2}: \Stream{X}.\\
 & \hspace{3em}
    {\tt unfold\ s1\ in\ xs1\ } \RA \\
 & \hspace{3em}
    {\tt unfold\ s2\ in\ xs2\ } \RA
    \\
  &\hspace{3em}
   {\tt let\ pr:
      \tequal
        {{\tt nth}\ n\ (\tprojr{\tt xs1})}
        {{\tt nth}\ n\ (zipWith(\tulambda{x}{\tulambda{y}{x}}, \tprojr{\tt xs1}, \tprojr{\tt xs2}))}}
      =\\
  &\hspace{4em} \tinst{\tt zipWith\_fst\tu}{\tt X}\ (\tprojr{\tt xs1})\ (\tprojr{\tt xs2})\ {\tt in}
      \\
  &\hspace{3em}
    {\trefl{{\tt nth}\ (\tsucc{n})\ {\tt s1}}{{\tt nth}\ (\tsucc{n})\ (zipWith(\tulambda{x}{\tulambda{y}{x}}, \tt s1, s2))}})    
\end{flalign*}

\caption{An annotated term of our calculus to prove that zipping two streams
with $\protect\lambda${}x.$\protect\lambda${}y.x returns a stream equivalent to the first.}
\label{fig:zipfstterm}

\end{figure}

For short, we denote this type as $\tarrow{n}{\tnat}{\tau}$. The code of {\tt
zipWith\_fst} is given in Figure~\ref{fig:zipfstterm} and makes use of the
recursor {\tt rec}. Our \textit{(Infer~Rec)} rule for inferring the type of {\tt
rec} is similar to {\tt fix}. The difference is that {\tt rec} allows references
to $n$ in the code, so the computation is allowed to depend on $n$. Moreover,
{\tt rec} does not use $\forall$ types. Finally, {\tt rec} is analogous to a
simple induction, while {\tt fix} is analogous to a strong induction.


In the {\tt zipWith\_fst} example, applying this rule corresponds to a proof
by induction over $n$, that for any two streams {\tt s1} and {\tt s2}, the $n$th
element of {\tt s1} is equal to the nth element of
$zipWith(\tulambda{x}{\tulambda{y}{x}}, \tt s1, s2)$.

In the base case, to type-check the term
${\trefl{{\tt nth}\ 0\ {\tt s1}}{{\tt nth}\ 0\ (zipWith(\tulambda{x}{\tulambda{y}{x}}, \tt s1, s2))}}$,
we must make sure that the following equality holds:
$
  \tequal{{\tt nth}\ 0\ {\tt s1}}{{\tt nth}\ 0\ (zipWith(\tulambda{x}{\tulambda{y}{x}}, \tt s1, s2))}.
$

The corresponding type inference rule is \textit{(Infer Refl)}, which generates
a verification condition.

In the inductive case, we explicitly instantiate our inductive hypothesis on the
tails of {\tt s1} and {\tt s2} by using the {\tt let} binding on variable {\tt
pr}. The equality given by the type of {\tt pr} is then sufficient to prove
what we wanted:
$
  \tequal{{\tt nth}\ (\tsucc{n})\ {\tt s1}}{{\tt nth}\ (\tsucc{n})\ (zipWith(\tulambda{x}{\tulambda{y}{x}}, \tt s1, s2))}.
$

  This example also illustrates that 
developers can express the desired versions of what others might call
extensional equality (here: same results when calling {\tt nth}) using $\Pi$ types and equi-reducibility $\equiv$; our type system does not impose any preferred form of extensional equality.


\section{Formalization in the Coq Proof Assistant}
\label{sec:coq}


We here give more details about our formalization of Theorem~\ref{th:soundness}
in Coq 8.9.1 (including the rules from Section~\ref{sec:generalization}). Our
proofs are available from \\
\centerline{\url{https://github.com/epfl-lara/SystemFR/tree/oopsla2019}}
We represent terms and types using a \emph{locally nameless
representation}~\cite{DBLP:journals/jar/Chargueraud12}, where free variables are
named, and where local variables are bound using De Bruijn indices. Using this
representation, lambdas and other binders can be seen as terms with holes, which
can be filled with other terms (typically values). We use the Coq Equations
library~\cite{DBLP:conf/itp/Sozeau10} to define the reducibility logical
relation $\redvalues{}{}$. This library facilitates the use of functions which
are defined recursively based on a well-founded measure.

We give an overview of our files (containing around $20$k lines of code):
\begin{itemize}
\item \href{https://github.com/epfl-lara/SystemFR/blob/oopsla2019/Trees.v}{\tt Trees.v} contains the definitions of types and terms,
\item \href{https://github.com/epfl-lara/SystemFR/blob/oopsla2019/Typing.v}{\tt Typing.v} gives all typing rules (containing rules from the paper and more),
\item \href{https://github.com/epfl-lara/SystemFR/blob/oopsla2019/SmallStep.v}{\tt SmallStep.v} contains the operational semantics of the language,
\item \href{https://github.com/epfl-lara/SystemFR/blob/oopsla2019/ReducibilityDefinition.v}{\tt ReducibilityDefinition.v} contains the definition of reducibility,
\item The {\tt Reducibility*.v} files contain lemmas for the soundness of the rules from {\tt Typing.v},
\item \href{https://github.com/epfl-lara/SystemFR/blob/oopsla2019/Reducibility.v}{\tt Reducibility.v} contains the proof that all typing rules from \href{https://github.com/epfl-lara/SystemFR/blob/oopsla2019/Typing.v}{\tt Typing.v} are sound with
  respect to the reducibility definition
  (which implies that Theorem~\ref{th:soundness} holds).
\end{itemize}




\subsection{Extensions of Formalization}
\label{sec:ext}


\begin{figure}
  \newcommand{\mnl}{\\[0.6ex]}
  
\begin{equation*}\begin{array}{rcl@{\qquad\qquad}lcl}
    \redvalues{\tbot}{\interp} &\define& \set{} &
    \redvalues{\tsingleton{t}}{\interp} &\define& \set{v \in \Val \ |\ t \smallstep^* v} \mnl
    \redvalues{\tintersect{\tau_1}{\tau_2}}{\interp} &\define& \redvalues{\tau_1}{\interp} \cap \redvalues{\tau_2}{\interp} &
    \redvalues{\tunion{\tau_1}{\tau_2}}{\interp}  &\define&  \redvalues{\tau_1}{\interp} \cup \redvalues{\tau_2}{\interp} \mnl
    \redvalues{\texists{x}{\tau_1}{\tau_2}}{\interp}  &\define&
    \multicolumn{4}{l}{
  \set{b\ |\
    \exists a \in \redvalues{\tau_1}{\interp}.\ b \in \redvalues{\subst{\tau_2}{x}{a}}{\interp}}
     \hspace*{2em}\mbox{(existential)}} \\
    \redvalues{\ttyperefine{x}{\tau_1}{\tau_2}}{\interp}  &\define&
    \multicolumn{4}{l}{
      \set{a \in \redvalues{\tau_1}{\interp}\ |\ \exists b \in \redvalues{\subst{\tau_2}{x}{a}}{\interp}}
      \hspace*{3em}\mbox{(refinement by type)}
    } \\
    \redvalues{\tuletintype{x}{t}{\tau}}{\interp}  &\define&
    \multicolumn{4}{l}{
  \set{v\ |\ \exists a \in \Val.\ t \smallstep^* a \land v \in \redvalues{\subst{\tau}{x}{a}}{\interp}}
    }
\end{array}\end{equation*}

\caption{Definition of reducibility for intersection, union, singleton
types, and refinement by types.}
\label{fig:extreducibility}

\end{figure}


{
\footnotesize
\begin{figure}
\begin{equation*}\begin{array}{rcl}
\tujudge{t}{\tau} &\define& \texists{x}{\tau}{\tequal{x}{t}} \hspace*{8em} \mbox{(judgement-as-type)} \\
\tsetcomp{\tau} &\define& \ttyperefine{x}{\top}{(\tujudge{x}{\tau}) \to \bot} \hspace*{3.8em} \mbox{(set complement)} \\
\timage{f}{\tau} &\define& \ttyperefine{y}{\top}
       {\texists{x}{\tau}{\tequal{f x}{y}}}
       \hspace*{2.5em} \mbox{(type image)} \\
\tinvimage{f}{\tau} &\define& \ttyperefine{x}{\top}{\texists{y}{\tau}{\tequal{f x}{y}}}
       \hspace*{2.5em} \mbox{(type pre-image)} \\
\tite{b}{\tau_1}{\tau_2} &\define&
  \tunion{\trefine{x}{\tau_1}{b}}{\trefine{x}{\tau_2}{{\tt not}(b)}}  \\
\ttite{\tau}{\tau_1}{\tau_2} &\define&
\tunion{\ttyperefine{x}{\tau_1}{\tau}}{\ttyperefine{x}{\tau_2}{\tau \to \bot}}  \\
\tmatch{t}{\tau_1}{x \Rightarrow \tau_2} &\define&
  \tunion
    {\ttyperefine{x}{\tau_1}{\tequal{t}{\tzero}}}
    {\tuletintype{x}{\tpred{t}}{\tau_2}}  \\
\tsummatch{t}{x \Rightarrow \tau_1}{x \Rightarrow \tau_2} &\define&
    (\tuletintype{x}{{\tt unfold\_left}(t)}{\tau_1})\ \ \ \cup \\
& & (\tuletintype{x}{{\tt unfold\_right}(t)}{\tau_2})
\end{array}\end{equation*}
\begin{equation*}\begin{array}{rcl}
\mbox{where} \hspace*{8em}
{\tt not}(b) &\define& \tite{b}{\tfalse}{\ttrue} \\
{\tt unfold\_left}(t) &\define& \tsummatch{t}{x \Rightarrow x}{\tuerr} \\
{\tt unfold\_right}(t) &\define& \tsummatch{t}{\tuerr}{x \Rightarrow x}
\end{array}\end{equation*}

\caption{Syntactic sugar for further expressive types}
\label{fig:typesugar}

\end{figure}
}

Our approach to formalization proved flexible and allowed us to extend the system with additional
types. In addition to formalizing the types and rules for our type checking algorithm
(figures~\ref{fig:bidirectional-check} and \ref{fig:bidirectional-check}),
our Coq formalization also defines 
(Figure~\ref{fig:extreducibility})
reducibility for the bottom type, singleton
types, union and intersection types, and an existential type more abstract than dependent pair (and dual to
$\tforall{x}{\tau_1}{\tau_2}$). We also formalized \emph{refinement by type}
$\ttyperefine{x}{\tau_1}{\tau_2}$, which generalizes refinements from computable terminating
terms of our language
to the condition of non-emptiness of arbitrary types in our system. Existential type and refinement by type
can be viewed as second and first projections of dependent pairs.
Together with existential and universal types, refinement by type allows
us to refine types by quantified propositions, providing support for quantified preconditions and postconditions.
Some of the type forms we present can be expressed using others;
the set of type forms is not minimal.

We define further syntactic sugar, including precise {\tt if then else}
and {\tt match} types (Figure~\ref{fig:typesugar}), as an alternative to
the $\titeintype{t_1}{t_2}{t_3}$ that relies on simplifications of Figure~\ref{fig:simplification}.
We rely on {\tt Let} type of Figure~\ref{fig:extreducibility}, and we
use the ill-typed $\tuerr$ term in the computations of (respectively) \predsymb,
{\tt unfold\_left} and {\tt unfold\_right}, to make sure that the denotations of
the {\tt Let} types are empty when $t$ does not reduce (respectively) to a
strictly positive natural number, $\tuleft{t'}$, or $\turight{t'}$ for some $t'$.
We proved soundness of typing rules that show that these types behave as
expected. For instance, an {\tt if then else} expression can be assigned an
{\tt if then else} type of the branches.

  These types can be used to encode type level computation reminiscent to one present in Scala
{\sf HList}\footnote{\url{https://github.com/milessabin/shapeless/blob/master/core/src/main/scala/shapeless/hlists.scala}}, as illustrated by the following example.
Given a list $L: \PList{Bool}$ and types $\tau_1$ and $\tau_2$, we define a
type ${\tt TypeFromList}(L,\tau_1,\tau_2)$ that represents nested pairs from 
$\tau_1$ and $\tau_2$ according to the boolean values 
($\tfalse$ for $\tau_1$ and $\ttrue$ for $\tau_2$) and with nesting given by the length of $L$.
For example, when $L$ is the list $\ttrue, \tfalse, \ttrue$ then 
${\tt TypeFromList}(L, \tnat, \tnat \to \tnat)$ 
represents nested tuples $w$ of the form $(a_0, (a_1, (a_2,
\ldots)))$ where $a_0:  \tnat$, $a_1: \tnat \to \tnat$, $a_2: \tnat$.
{
\footnotesize
\begin{flalign*}
 & {\tt TypeFromList}(L,\tau_1,\tau_2) \define \dobracket w: \ttop\ |\ \forall n: \{ n:\tnat\ |\ n < {\tt size}(L) \}.\ 
  \tujudge{{\tt get}(w,n)}{({\tt if\ } {\tt nth}(L,n)\ {\tt then\ } \tau_1 {\tt\ else\ } \tau_2)} \dcbracket
\end{flalign*}
}
The function ${\tt nth}(L,n)$ returns the $n$th element of
the list $L$ (defined as a recursive data type).
The function ${\tt get}(w, n)$ is also a term-level function that operates on
nested pairs $w$ such that e.g.\ ${\tt get}(w, 2)$ expands to
$\pi_1(\pi_2(\pi_2(w)))$.
The type
$\tequal{x}{{\tt get}(w,n)}$ in the expansion of refinement by judgement-as-type $\tujudge{\_}{\_}$
(Figure~\ref{fig:typesugar})
does not require type checking the ${\tt get}$ function; its semantics
(see $\approx$ in Figure~\ref{fig:reducibility}) is
that ${\tt get}(w,n)$ reduces to $x$ in the (untyped) operational semantics.

Furthermore, we prove soundness of multiple rules (such as congruence rules)
for establishing the equality judgment $\areequal{\Theta;\Gamma}{t_1}{t_2}$
described in Section~\ref{sec:automated-verifier}. We also prove soundness of a
rule that unfolds the definition of a recursive function in the context, which
is what is required by the solver in Section~\ref{sec:selfaware} to unfold
the definition of {\tt merge}.
Such rules are a step towards justifying
not only the verification condition generation but also verification condition
solving.

\section{Implementation and Evaluation}
\label{sec:implementation}

We have implemented our bidirectional type checking procedure
by writing an alternative verification-condition generator for
Stainless\footnote{\url{https://github.com/epfl-lara/stainless}}
\cite{Stainless}.
The code is merged into master and available in e.g. \href{https://github.com/epfl-lara/stainless/tree/v0.4.0}{release version 0.4.0}.
Its functionality can be invoked with the \verb|--type-checker|
command line option in both scalac 2.12 and Dotty front end pipeline for Stainless.
Thanks to these frontends, which provide a form of type inference,
the type annotation burden is lessened for the user. Precise
types such as indexed recursive types (which
are not supported out of the box by these frontends) still
need to be annotated manually.

The implementation was evaluated on benchmarks shown in
Figure~\ref{fig:evaluation} totalling 14k LoC,
collected from existing Stainless test suites and case
studies. The proof/code ratio depends on the properties
being proven. It can be 0 or close to 0 as in the {\tt
merge} example of Section~\ref{sec:motiv}, or higher than 1
when writing lemmas for proving detailed specifications
(as in the \href{https://github.com/epfl-lara/stainless/blob/v0.4.0/frontends/benchmarks/typechecker/valid/fp-principles/huffman-coding/Huffman.scala}{Huffman coding example}).
The benchmarks reside in
the \verb|frontends/benchmarks/typechecker/valid| directory of Stainless.
The streams benchmark relies on more expressive annotations of the Dotty compiler that we use
to write recursive types, and is available in
\verb|frontends/benchmarks/dotty-specific/typechecker|. We use the following syntax to represent
indexed types:
\begin{center}
\begin{tabular}{|c|c|c|} \hline
& notation in this paper & encoding in Dotty front-end \\ \hline
type & \lstinline|$\IStream{X}{n}$|        & \lstinline|$\Stream{X}$ @indexedAt(n)| \\ \hline
constructor & \lstinline|$\tfold{\IStream{X}{n}}{(hd,tl)}$| & \lstinline|indexedAt(n, $\Stream{X}$(hd,tl))| \\ \hline
\end{tabular}
\end{center}
The suite of all benchmarks verifies in $\sim$6 minutes in total
when using implementation based on our type checker.
The table shows the number of lines of
codes within each benchmark, and the number of verification conditions which
were generated by the type checker and verified by the SMT-backed Inox solver.
The time given includes the time for generating the verification conditions and
solving them, but not the parsing and compilation which is done by the Scala
compiler {\tt scalac} (which we use for some initial type inference and to
obtain a tree representation of the program), nor the transformations which are
internal to Stainless and which happen before using the type-checker.
To understand the impact of these time measurements, note that 
11\% of time of total verification time is spent in
the bidirectional type-checking algorithm that generates verification
conditions and that is the focus of this paper, 
66\% in checking the verification conditions using a SMT-backed solver Inox for
recursive and higher-order functions,
13\% for the parsing, name resolution and most of the type checking pipeline
of Scala's {\tt scalac} compiler, and
10\% for extraction of {\tt scalac} abstract syntax trees to Stainless trees.
All times were
measured on a Lenovo X1 Carbon laptop with an Intel i7-7500U and 16GB RAM.




\begin{figure}
\scriptsize
\setlength\tabcolsep{1pt}

\renewcommand{\arraystretch}{0.90}

\scalebox{1}{
\begin{minipage}[t]{0.33\textwidth}
\begin{tabular}[t]{lrrr}
  Name                  &  LoC & VCs & Time (s) \\ \hline
AbstractRefinementMap	& 14	& 8	& 0.80 \\
Acc	& 26	& 1	& 0.14 \\
Ackermann	& 12	& 12	& 0.29 \\
AliasPartial	& 19	& 2	& 0.10 \\
AmortizedQueue	& 155	& 64	& 3.92 \\
AnyDown	& 24	& 3	& 0.21 \\
AssociativeFold	& 102	& 36	& 2.14 \\
AssociativeList	& 66	& 28	& 0.80 \\
BalancedParentheses	& 410	& 214	& 25.76 \\
BasicReal	& 28	& 6	& 0.18 \\
Basics	& 442	& 137	& 3.60 \\
BestRealTypes	& 27	& 1	& 0.14 \\
BigIntMonoidLaws	& 38	& 9	& 1.96 \\
BinarySearch	& 18	& 7	& 6.61 \\
BinomialHeap	& 189	& 22	& 0.95 \\
BitsTricks	& 103	& 35	& 0.72 \\
BooleanOps	& 50	& 12	& 0.41 \\
BottomUpMergeSort	& 127	& 62	& 2.80 \\
Bytes	& 33	& 7	& 0.12 \\
ChurchNum	& 21	& 0	& 0.03 \\
Client	& 20	& 1	& 0.09 \\
ConcRope & 493 & 254 & 49.21 \\
ConcTree & 335 & 180 & 9.44 \\
CountTowardsZero & 16 & 7 & 0.08 \\
Countable & 29 & 17 & 0.49 \\
Deque & 241 & 80 & 3.73 \\
DivisionByZero & 21 & 3 & 0.06 \\
EffectfulPost & 18 & 1 & 0.08 \\
EitherLaws & 34 & 5 & 0.15 \\
Factorial & 47 & 23 & 0.60 \\
Fibonacci & 19 & 7 & 0.10 \\
FiniteSort & 61 & 14 & 1.02 \\
FiniteStreams & 42 & 13 & 0.30 \\
FlatMap & 60 & 20 & 0.66 \\
Foldr & 19 & 3 & 0.09 \\
FoolProofAdder & 13 & 1 & 0.04 \\
Formulas & 55 & 30 & 2.30 \\
\end{tabular}
\end{minipage}
\begin{minipage}[t]{0.34\textwidth}
\begin{tabular}[t]{lrrr}
Name & LoC & VCs & Time \\ \hline
FunSets & 134 & 19 & 1.11 \\
GodelNumbering & 486 & 389 & 84.00 \\
HOInvocations & 19 & 2 & 0.07 \\
HammingMemoized & 67 & 46 & 2.60 \\
Heaps & 166 & 90 & 3.59 \\
Huffman & 520 & 228 & 9.52 \\
IgnoredField & 48 & 2 & 0.07 \\
IndirectHO & 19 & 2 & 0.04 \\
Induction & 415 & 150 & 3.04 \\
InsertionSort & 85 & 31 & 0.46 \\
IntSet & 65 & 29 & 0.51 \\
IntSetProp & 66 & 21 & 0.57 \\
IntSetUnit & 60 & 23 & 0.51 \\
Justify & 79 & 46 & 0.90 \\
Knapsack & 73 & 23 & 0.30 \\
LazyNumericalRep & 158 & 29 & 1.15 \\
LazySelectionSort & 69 & 21 & 0.40 \\
LeftPad & 65 & 25 & 1.10 \\
List & 1006 & 529 & 16.07 \\
ListMonad & 86 & 29 & 0.97 \\
ListMonoidLaws & 47 & 22 & 0.46 \\
ListWithSize & 169 & 105 & 1.56 \\
LiteralMaps & 25 & 0 & 0.05 \\
Longs & 36 & 8 & 3.53 \\
Map & 21 & 3 & 0.11 \\
MapGetOrElse2 & 19 & 2 & 0.14 \\
MapGetPlus & 20 & 10 & 0.41 \\
McCarthy91 & 24 & 8 & 0.17 \\
Mean & 13 & 2 & 0.27 \\
MergeSorts & 301 & 167 & 4.51 \\ 
Methods & 32 & 13 & 0.28 \\
MicroTests & 2246 & 505 & 15.59 \\ 
Monoid & 37 & 12 & 0.49 \\
MoreExtendedEuclidGCD & 74 & 39 & 7.18 \\
MySet & 18 & 2 & 0.03 \\
NNF & 135 & 177 & 17.77 \\
NNFSimple & 107 & 103 & 5.27 \\
\end{tabular}
\end{minipage}
\begin{minipage}[t]{0.33\textwidth}
\begin{tabular}[t]{lrrr}
Name & LoC & VCs & Time \\ \hline
NatMonoidLaws & 75 & 39 & 0.71 \\
NaturalBuiltin & 25 & 8 & 0.11 \\
NestedLoop & 19 & 4 & 0.08 \\
NotEquals & 22 & 1 & 0.09 \\
Numeric1 & 18 & 4 & 0.06 \\
OddEven & 77 & 24 & 0.39 \\ 
OptionMonad & 47 & 9 & 0.26 \\
Overrides & 25 & 9 & 0.24 \\
PackratParsing & 137 & 31 & 1.06 \\
ParBalance & 250 & 88 & 1.60 \\
PartialCompiler & 66 & 38 & 10.67 \\
PartialKVTrace & 77 & 19 & 3.69 \\
Patterns & 28 & 10 & 0.78 \\
Peano & 36 & 12 & 0.28 \\
PositiveMap & 43 & 15 & 0.41 \\
PreInSpecs & 27 & 6 & 0.09 \\
PropositionalLogic & 89 & 114 & 16.04 \\
Queue & 28 & 9 & 0.25 \\
QuickSorts & 220 & 131 & 5.65 \\ 
ReachabilityChecker & 561 & 282 & 15.34 \\
RealTimeQueue & 77 & 9 & 0.33 \\
RedBlackTree & 115 & 49 & 1.52 \\
SearchLinkedList & 59 & 22 & 0.27 \\
Shorts & 33 & 7 & 0.10 \\
SimpInterpret & 71 & 18 & 1.18 \\
StableSorter & 129 & 52 & 1.51 \\
Streams & 87 & 147 & 4.41 \\
Termination\_passing1 & 37 & 10 & 0.16 \\
Theorem & 28 & 1 & 0.05 \\
ToChurch & 27 & 4 & 0.07 \\
Trees1 & 29 & 11 & 0.43 \\
TweetSet & 494 & 253 & 11.91 \\
UpDown & 49 & 11 & 0.25 \\
Viterbi & 120 & 19 & 0.33 \\
XPlus2N & 19 & 4 & 0.07 \\
example & 95 & 70 & 1.74 \\
recfun & 66 & 38 & 0.89 \\
\end{tabular}
\end{minipage}
}

\leftline{\rule{0.98\linewidth}{1pt}}
\begin{tabular}[t]{rrr}
{\bf Total LOC}: 13842& \hspace{3em}
{\bf Total VCs}: 5815& \hspace{3em}
{\bf Total Time (s)}: 387.94
\end{tabular}

  \caption{
    Summary of evaluation results for our verification tool,
    featuring lines of code,
    number of verification conditions (VCs),
    and type-checking times (including generation of VCs and
    checking of VCs by Inox).
  }
  \label{fig:evaluation}

\end{figure}


To illustrate the diversity of benchmarks, we note that 
\bname{InsertionSort}, \bname{QuickSorts}, \bname{MergeSorts} and
\bname{StableSort} 
feature various implementations of the sorting algorithms with the typical properties
shown.
\bname{ListMonad} and \bname{OptionMonad}
show that the monadic laws hold for the
$\PList{\tt X}$ and {\tt Option[X]} types.
The \bname{List} and
\bname{ListWithSize} benchmarks feature a collection of
common higher-order functions on lists such as \lstinline!map!,
\lstinline!filter!, \lstinline!forall!, etc., as well as many
properties shown about the implementations.
In the \bname{GodelNumbering} benchmark, we prove that the pairing function
$2^x(2y + 1) - 1$ is a bijection between natural numbers and pairs of natural
numbers using a series of lemmas about linear and non-linear
arithmetic. Similar non-linearity is featured in the
\bname{MoreExtendedEuclidGCD} benchmark where we show that an implementation
of the extended Euclid's algorithm indeed computes the greatest common divisor
and the coefficients of Bézout's identity. Benchmarks also include persistent data structures
\cite{OkasakiFunDS},
ConcTree and ConcRope \cite{conq}, explicit state model checker ReachabilityChecker,
small interpreters, dynamic programming algorithms such
as Viterbi and Knapsack, and benchmarks solving assignments
from Scala MOOCs (see \url{https://courseware.epfl.ch/} or
\url{https://www.coursera.org/specializations/scala}).


\section{Related Work}

In this paper, we tackle program verification using a type-theoretic approach.
Other techniques, such as symbolic execution for verification (see
e.g.~\cite{DBLP:journals/jfp/NguyenTH17,DBLP:conf/pldi/HallahanXBJP19}) or using
term rewriting systems (see
e.g.~\cite{AutomatedTerminationAproveGiesl2004,Giesl2006,Giesl2011ATP}) can be used
instead or in complement to our approach.
The use of type checking in verification appears in program verifiers such as
F*~\cite{Swamy2013,DBLP:conf/popl/AhmanHMMPPRS17,DBLP:conf/popl/SwamyHKRDFBFSKZ16} and
Liquid Haskell~\cite{VazouRondonJhala13AbstractRefinementTypes,LiquidHaskell,DBLP:journals/pacmpl/VazouTCSNWJ18}.
Like Dafny~\cite{dafny} and Stainless, these systems rely on SMT solvers to automatically discharge
verification conditions.

F* is a dependently-typed programming language which
supports a rich set of effects (such as divergence, mutation, etc.). To the best of our knowledge
F* does not support proving termination of functions operating on
infinite data structures such as streams.
Stainless has a desugaring pass that handles local state and
certain forms of unique references. Furthermore, for
semantic modeling of global state, a user can use monad
syntax (Scala's \lstinline|for| comprehensions) explicitly in the
surface language, which Scala compiler desugars into
higher-order functions that our system can handle. 
Overall, F* supports a more expressive class of effects than what Stainless
currently handles.
The goal of Liquid Haskell~\cite{DBLP:journals/pacmpl/VazouTCSNWJ18} is to add refinement types to Haskell, a call-by-need language (while we focus here on call-by-value languages). The system supports a sound
verification procedure which relies on decidable theories of SMT.
The supported language imposes certain restrictions on refinement occurrences
and the metatheory does not feature recursive types (although they are supported
by the implementation).
Similarly to our body-visible recursion, Liquid Haskell allows the type checker to
access the body of recursive functions (for recursive calls) while type-checking
the function itself using a technique called \emph{refinement reflection}~\cite{DBLP:journals/pacmpl/VazouTCSNWJ18}.
Dafny verifier~\cite{dafny} supports imperative and object-oriented as well as functional programming,
including inductive and coinductive types,
recursion using ordinals, quantifiers. We are not aware of metatheory to
justify the soundness of Dafny.
In contrast, we provide a mechanized proof of soundness for System FR, which
increases confidence in VC generation soundness. This is important because
it is easy to construct paradoxes when
combining features  such as impredicativity and
contravariant recursion. Being based on a reducibility relation, our proof can be used
as a basis to add new language features in a compositional way. Several other
systems have been recently formalized, such as MetaCoq~\cite{sozeau:hal-02167423}.


Proof assistants with great expressive power include
Isabelle~\cite{Isa,DBLP:conf/esop/BlanchetteBL0T17},
Coq~\cite{DBLP:series/txtcs/BertotC04,DBLP:journals/jfrea/Barras10,DBLP:conf/lics/Sacchini13},
Idris~\cite{DBLP:conf/plpv/Brady13},
Agda~\cite{norell2007towards,DBLP:conf/itp/000110},
PML$_2$~\cite{DBLP:phd/hal/Lepigre17}, Lean~\cite{DBLP:conf/isaim/Moura16} and
Zombie~\cite{DBLP:conf/popl/CasinghinoSW14}. When coinduction is supported in
these systems (which is not always the case), it is typically through
special constructs for coinductive types and their corresponding
cofixpoint operators, dual to inductive types and their fixpoint operators. In
System FR, we instead treat induction and coinduction in a uniform way (see
e.g.~Section~\ref{sec:running-example2-streams}). Our type system features a
single kind of recursive types which allow uniform definition of inductive,
coinductive and mixed recursive types. Our operational semantics further rely on
the {\tt fix} operator which roughly corresponds to general recursion in
mainstream programming call-by-value languages. An alternative approach to
uniform handling of recursion and corecursion is given in
\cite{DBLP:conf/types/GianantonioM02}. We were able to encode the first example
from this work (in
\href{https://github.com/epfl-lara/stainless/blob/1d324615ab9a328180d7e08a51051fcad200a414/frontends/benchmarks/dotty-specific/typechecker/Streams.scala#L42}{our Streams benchmark}) which generates a stream and whose termination proof
requires a combination of inductive and coinductive reasoning. Using our system,
we expressed that using a simple lexicographic measure combining the index of
the type of the produced stream with the inductive measure. In a private
communication, Andreas Abel showed us that Agda can also handle this example.

Our system follows Nuprl~\cite{DBLP:books/daglib/0068834} (and other
computational type theories) style of starting out with an untyped calculus and
then introducing various types to classify untyped terms based on their
behaviors. Nuprl supports a very expressive type system, which covers the types
we present in this paper, except impredicative polymorphism. In a large
development, Nuprl's metatheory
has been formalized in Coq~\cite{anand2014towards}. One part of that formalization
relies on the use of the Coq axiom of functional choice {\tt FunctionalChoice\_on},
which gives a function $f$ from $A$ to $B$ when a formula
of the form $\forall a: A.\ \exists b: B.\ \phi(a,b)$ holds. Nuprl does not use SMT solvers for
automation, but relies instead on built-in and user-defined tactics similarly to Coq.

Our reducibility definition for recursive type is inspired from
step-indexed logical relations~\cite{DBLP:conf/esop/Ahmed06}. The main
difference is that the indices in step-indexed logical relations do not appear
at the level of types, but at the level of the logical relation that gives
meaning to types. In System FR we internalize the indices at the
level of recursive types in order to give more expressive power to the users,
and let them specify decreasing measures for recursive functions that manipulate
infinite data structures.
This treatment of recursive types is similar to the
TORES~\cite{DBLP:conf/rta/Jacob-RaoPT18} type system, where recursive types can
be indexed by an arbitrary index language. We only support recursive types
indexed by natural numbers. TORES provides a decision procedure for a rich type
system with inductive types, coinductive types, and indexed recursive
types, yet its metatheory does not handle polymorphism nor refinement types.

Our termination criterion (rule for {\tt fix}) is inspired by
type-based termination~\cite{DBLP:journals/mscs/BartheFGPU04,DBLP:conf/icfp/AbelP13,DBLP:journals/corr/abs-1202-3496,DBLP:journals/ita/Abel04,DBLP:journals/lmcs/Abel08,DBLP:phd/de/Abel2007}.
Such work also typically uses two different kinds of recursion, one for
induction and one for coinduction. In type-based termination, instead of
requiring that a measure on the arguments of recursive calls decreases, we
require that recursive functions are called at a
\emph{type} which is strictly smaller than the type of the caller. Our {\tt fix}
operator produces a term with a forall type, which is similar to the implicit
function type of the implicit calculus of
constructions~\cite{DBLP:conf/tlca/Miquel01}. Our termination measure on types
can then be understood as a measure on the implicit argument of a function with
an implicit function type. Earlier work on type-based termination checking
includes simpler type systems such as DML \cite{DependentTypesXi2001}.

A proof of concept of extension of Dotty compiler by a restricted set
of (mostly numerical)
refinement predicates was presented in \cite{SchmidKuncak16CheckingPredicate}, without soundness proof.
Developments specific to Scala include dependent object
types \cite{Amin16PhD} which focus on path dependency instead of
predicate refinement and, by design, admit possibly
non-terminating programs. System FR does not attempt to support
records, let alone path dependency. Stainless relies on Scala compiler front ends to
obtain a symbol-resolved syntax tree. For example, implicit
parameters in Scala \cite{DBLP:journals/pacmpl/OderskyBLBMS18} become ordinary explicit
parameters by the time Stainless processes them. Conversely, many of the types we define
in System FR do not have their counterparts in Scala.
An upcoming PhD thesis of Nicolas Voirol
\cite{Voirol19PhD} presents a system related to System FR that more closely
follows the expressive power of Inox solver on which Stainless relies.

\section{Conclusion}

We have presented System FR, a formalized type system and a
bidirectional type checking algorithm that can serve as a basis
of a verifier for higher-order functional programs. 
We were able to verify correctness and safety of a wide
range of benchmarks amounting to 14k lines of code and
proofs.
Our formalization suggests that
lazy data structures and
non-covariant recursion are tractable and
that explicit indices required in the general framework can
be eliminated in commonly occurring cases.
Our type system incorporates $\Pi$ and $\Sigma$ types, yet
it also supports their variants ($\forall$ and $\exists$) that
correspond to infinite intersections and unions. Along with support for
singleton types and refinements, we obtained a rich
framework to approximate program semantics, which can also help
further type, invariant, and measure inference algorithms.
Our experience confirms an advantage of the semantic-based
soundness proof: once we adopt an approach of interpreting
types as sets of terms, we are less dependent on a
particular choice of syntactic rules; we can introduce new
classes of types and new rules for verification condition
generation and solving, as long as we can justify them
semantically.
When viewing types in our system as propositions, we obtain
an expressive quantified logic. We have proven the
soundness of this logic in Coq using
countable interpretation of ground System FR types.
The success in verification of our benchmarks suggests that the rules of System FR work
well for many properties of functional programs. 
Furthermore, refinement by type along with intersections and unions
(Section~\ref{sec:ext}) allows System FR
to describe types whose interpretations are undecidable countable sets from higher levels of
arithmetical hierarchy.

%
%
%






\subsection*{ACKNOWLEDGMENTS}
Work supported in part by Swiss National Science Foundation
project 200021\_175676 ``Scaling Predicate Types''. We thank
Ravichandhran Kandhadai Madhavan for numerous discussions
about termination for higher-order functional programs;
Romain Ruetschi for his valuable contributions to Stainless
implementation; Paolo G. Giarrusso for interesting
discussions and very relevant pointers on type theory;
Andreas Abel for explaining us how to mix induction and
co-induction in Agda. Thanks to Yoan Géran who implemented
another prototype of our algorithm, found formalization
discrepancies and proposed improvements to the typing rules.

\bibliography{managed,more,ravi_thesisbib,icfp-2019}

\newpage

\appendix

\section{Lexicographic Orderings}
\label{app:lexicographic}


\begin{figure}[htbp]
\begin{tabular}{l|l}
\begin{minipage}[c]{0.37\textwidth}
\begin{lstlisting}
def f(x: T): R {
  decreases($m_1$(x), $m_2$(x))
  $E$
}
\end{lstlisting}
\end{minipage}
&
\begin{minipage}[c]{0.6\textwidth}
\begin{lstlisting} 
def f(x: T): R = {
  decreases($m_1$(x))
  def g(y: T): R = {
    require($m_1$(y) == $m_1$(x))
    decreases($m_2$(y))
    $E$[x:= y, f $z$ := if ($m_1$($z$) < $m_1$(x)) f($z$) else g($z$)]
  }   
  g(x)  
}
\end{lstlisting}
\end{minipage}
\\ \hline
\begin{minipage}[c]{0.37\textwidth}   
\begin{lstlisting}
def A(m: Nat, n: Nat): Nat = {
  decreases(m, n)
  if (m = 0) n+1
  else if (n = 0) A(m-1, 1)
  else A(m-1, A(m, n-1))
}
\end{lstlisting}
  \end{minipage} 
  & 
  \begin{minipage}[c]{0.6\textwidth}
\begin{lstlisting}
def A(m: Nat, n: Nat): Nat = {
  decreases(m)
  def Ag(my: Nat, ny:Nat): Nat = {
    require(my == m)
    decreases(ny)
    if (my = 0) ny+1
    else if (ny = 0) A(my-1, 1)
    else A(my-1, Ag(my, ny-1))
  }
  Ag(m, n) 
}
\end{lstlisting}
  \end{minipage}  
\end{tabular}
\caption{Encoding of Lexicographic Orderings through Mutual Recursion.
Top row shows a general scheme; bottom shows Ackerman's function as an example.
The left side is source code, the right is the encoding.\label{fig:lexicographic}}
\end{figure}

Functions whose termination arguments require lexicographic orderings can be
encoded by using two levels of recursions, which is a known technique that shows
expressive power of System T \cite[Section 7.3.2]{ProofsAndTypesBook}.
The right-hand-side uses (twice) the syntactic
sugar defined in Figure~\ref{fig:sugar} and described in the previous section.
The outermost recursion allows recursive calls whenever the first measure
decreases, while the innermost one is used when the first measure stays
the same and the second measure decreases.

For the encoding, we assume that in the body of function {\tt f} (i.e.~in the
expression $E$), {\tt f} is always applied to some argument. The notation
\[
    \texttt{$E$[f $z$ := if ($m_1$($z$) < $m_1$(x)) f($z$) else g($z$)]}
\]
represents the term where that every application of the form {\tt f $z$} in $E$
is replaced by the corresponding \tite{}{}{} expression. This expression checks
at runtime which measure decreases, and decides to call the outermost or
innermost recursion. When the user knows which measure decreases for a given
recursive call, the \tite{}{}{} expression can be optimized away by directly
calling the appropriate branch. We give as an example the Ackermann function
(see Figure~\ref{fig:lexicographic}), which uses the lexicographic ordering of its
argument for ensuring termination.


\section{Strict Positivity of a Type Variable in a Type}
\label{app:polarity}



\begin{figure}
\begin{flalign*}
  & \tspos{\alpha}{\tau} \define {\tt true} \textrm{ if $\alpha \not\in \fv{\tau}$}\\
  & \tspos{\alpha}{\alpha} \define {\tt true} \\
  & \tspos{\alpha}{\trefine{x}{\tau}{b}} \define \tspos{\alpha}{\tau} \\
  & \tspos{\alpha}{\tarrow{x}{\tau_1}{\tau_2}} \define \alpha \not\in \fv{\tau_1} \land \tspos{\alpha}{\tau_2} \\
  & \tspos{\alpha}{\tforall{x}{\tau_1}{\tau_2}} \define \alpha \not\in \fv{\tau_1} \land \tspos{\alpha}{\tau_2} \\
  & \tspos{\alpha}{\tpoly{\beta}{\tau}} \define \tspos{\alpha}{\tau} \\
  & \tspos{\alpha}{\tsum{\tau_1}{\tau_2}} \define \tspos{\alpha}{\tau_1} \land \tspos{\alpha}{\tau_2} \\
  & \tspos{\alpha}{\tprod{x}{\tau_1}{\tau_2}} \define \tspos{\alpha}{\tau_1} \land \tspos{\alpha}{\tau_2} \\
  & \tspos{\alpha}{\indexedtype{n}{\beta}{\tau}} \define
      \tspos{\alpha}{\tau} \land
      (\alpha \notin \fv{\tau}\,\lor \tspos{\beta}{\tau})
\end{flalign*}

\caption{Definition of strict positivity for a type variable $\alpha$ in a type.}
\label{fig:spos}
\end{figure}

A type variable $\alpha$ is said to be \emph{strictly positive} (see
Figure~\ref{fig:spos}) in a type $\tau$, if it only appears to the
right-hand-sides of $\Pi$ and $\forall$ types. This restriction is used in the
additional typing rules that are given in Section~\ref{sec:generalization}.



\section{Erasure of Type Annotations in Terms}
\label{app:erase}

In Section~\ref{sec:annotated-terms}, we use the notation $\erase{t}$ to refer
to the \emph{erasure} of an annotated term $t$. The precise definition is
given in Figure~\ref{fig:erase}. Type annotations are used to guide our
type-checking algorithm but play no role in the reducibility definition or in
the operational semantics, which talk about erased terms with no annotation.


\begin{figure}[htb]

\begin{alignat*}{2}
& \erase{x} &&\define x \\
& \erase{\tu} &&\define \tu \\
& \tlambda{x}{\tau}{t} &&\define \tulambda{x}{\erase{t}} \\
& \erase{\tapp{t_1}{t_2}} &&\define \tapp{\erase{t_1}}{\erase{t_2}} \\
& \erase{(t_1,t_2)} &&\define (\erase{t_1}, \erase{t_2}) \\
& \erase{\tprojl{t}} &&\define \tprojl{\erase{t}} \\
& \erase{\tprojr{t}} &&\define \tprojr{\erase{t}} \\
& \erase{\taleft{\tau}{t}} &&\define \tuleft{\erase{t}} \\
& \erase{\taright{\tau}{t}} &&\define \turight{\erase{t}} \\
& \erase{\tsummatch{t_1}{x \Rightarrow t_2}{x \Rightarrow t_3}} &&\define
  \tsummatch{\erase{t_1}}{x \Rightarrow \erase{t_2}}{x \Rightarrow \erase{t_3}} \\
& \erase{\ttrue} &&\define \ttrue \\
& \erase{\tfalse} &&\define \tfalse \\
& \erase{\tite{t_1}{t_2}{t_3}} &&\define \tite{\erase{t_1}}{\erase{t_2}}{\erase{t_3}}\\
& \erase{\tzero} &&\define \tzero \\
& \erase{\tsucc{t}} &&\define \tsucc{\erase{t}} \\
& \erase{\tmatch{t_n}{t_0}{n \Rightarrow t_s}} &&\define
  \tmatch{\erase{t_n}}{\erase{t_0}}{n \Rightarrow \erase{t_s}} \\
& \erase{\trec{x \Rightarrow \tau}{t_n}{t_0}{(n,y) \Rightarrow t_s}} &&\define
  \turec{\erase{t_n}}{\erase{t_0}}{(n,y) \Rightarrow \erase{t_s}} \\
& \erase{\tfix{n \Rightarrow \tau}{(n,y) \Rightarrow t}} && \define
  \tufix{y \Rightarrow \erase{t}} \textrm{\hspace{3em}(assuming $n \not\in \fv{\erase{t}}$)}
\\
& \erase{\tinstforall{t_1}{t_2}} && \define \erase{t_1} \\
& \erase{\tfold{\tau}{t}} && \define \tufold{\erase{t}} \\
& \erase{\tunfoldin{t_1}{x \RA t_2}} && \define \tunfoldin{\erase{t_1}}{x \RA \erase{t_2}} \\
& \erase{\tunfoldposin{t_1}{x \RA t_2}} && \define \tunfoldin{\erase{t_1}}{x \RA \erase{t_2}} \\
& \erase{\tabs{\alpha}{t}} && \define \tuabs{\erase{t}} \\
& \erase{\tinst{t}{\tau}} && \define \tuinst{\erase{t}} \\
& \erase{\terr{\tau}} && \define \tuerr \\
& \erase{\trefl{t_1}{t_2}} && \define \tu \\
& \erase{\tulet{x}{t_1}{t_2}} && \define \tulet{x}{\erase{t_1}}{\erase{t_2}} \\
& \erase{\tsize{t}} && \define \tsize{\erase{t}} \\
\end{alignat*}

\caption{Erasing type annotations.}
\label{fig:erase}

\end{figure}

\section{Proof of Lemma~\ref{lem:reclists}}

\reclists*

\begin{proof}

Given a list of the form $\tufold{\turight{a_1, \dots\tufold{\turight{a_k, \tufold{\tuleft{}}}}\dots}}$,
we say that $k$ is its \emph{size}.

($\Leftarrow$) We prove by induction on (the size of) $n \in \redvalues{\tnat}{}$
that $\redvalues{\IPList{\tt X}{n}}{}$ contains all finite lists. Then we can conclude
that
$\redvalues{\List}{}$
contains all finite lists.
\begin{itemize}
\item $(n = \tzero)$. By definition, $\redvalues{\IPList{\tt X}{0}}{}$ contains all
    values of the form $\tufold{v}$, and therefore all finite lists.
\item $(n = \tsucc{n'})$. The induction hypothesis tells us that
    $\redvalues{\IPList{\tt X}{n'}}{}$ contains all finite lists. Consider a list
    $v \in \Val$, $k \geq 0$ and $a_1,\dots,a_k \in \redvalues{\tt X}{}$ such
    that:
\[
    v = \tufold{\turight{a_1, \dots\tufold{\turight{a_k, \tufold{\tuleft{}}}}\dots}}.
\]

We distinguish two cases, if $k = 0$, i.e.~$v = \tufold{\tuleft{}}$ is the empty
list, then we conclude directly by definition of $\redvalues{\IPList{\tt
X}{n}}{}$ that $v \in \redvalues{\IPList{\tt X}{n}}{}$.

If $k > 0$, then $v = \tufold{\turight{a_1, \dots\tufold{\turight{a_k,
\tufold{\tuleft{}}}}\dots}}$. By definition of $\redvalues{\IPList{\tt X}{n}}{}$, we
know that $v \in \redvalues{\IPList{\tt X}{n}}{}$ if and only if the tail of $v$ belongs
to $\redvalues{\IPList{\tt X}{n'}}{}$, i.e.~$\tufold{\turight{a_2,
\dots\tufold{\turight{a_k,\tufold{\tuleft{}}}}\dots}} \in
\redvalues{\IPList{\tt X}{n'}}{}$, which is true by induction hypothesis.
\end{itemize}

$(\Rightarrow$) In a first step, we prove by induction on (the size of) $n \in
\redvalues{\tnat}{}$ the following statement ($\ast$): if $v \in
\redvalues{\IPList{\tt X}{n}}{}$, then either $v$ is a list of size at most $n-1$, or
there exists $a_1,\dots,a_n \in \redvalues{\tt X}{}$ and $b \in \Val$ such
that $v = \tufold{\turight{a_1, \dots\tufold{\turight{a_n, b}}\dots}}$.

We can then use this fact to prove that if $v \in \redvalues{\PList{\tt X}}{}$,
then $v$ is a finite list, as follows. By definition, we know that, for every $n
\in \redvalues{\tnat}{}$, $v \in \redvalues{\IPList{\tt X}{n}}{}$. The term $v$
is represented by a finite syntax tree, and therefore there exists an $n$
(e.g.~the size of the syntax tree of $v$ plus one) such that $v$ cannot be of
the form $v = \tufold{\turight{a_1, \dots\tufold{\turight{a_n, b}}\dots}}$. By
using ($\ast$), it follows that $v$ must be a list of size at most $n-1$.

Let us now proceed to the proof of ($\ast$):
\begin{itemize}
\item $(n = \tzero)$ The statement holds since $\IPList{\tt X}{0}$ represents the values
of the form $\tufold{v}$ where $v \in \Val$, and we can choose $b = v$.
\item $(n = \tsucc{n'})$ By definition of $\redvalues{\IPList{\tt X}{n}}{}$, we
 know that either $v = \tufold{\tuleft{}}$ is the empty list, or there exists $a
 \in \redvalues{\tt X}{}$ and $v' \in \redvalues{\IPList{\tt X}{n'}}{}$ such that
 $v = \tufold{\turight{a,v'}}$. By induction hypothesis, we know that $v'$ is
 either a list of size at most $n'-1$ or there exists $a_1,\dots,a_{n'} \in
 \redvalues{\tt X}{}$ and $b \in \Val$ such that $v' = \tufold{\turight{a_1,
 \dots\tufold{\turight{a_{n'}, b}}\dots}}$.

 We therefore conclude that $v$ is either a list of size at most $n-1$, or that:
 \[
     v = \tufold{\turight{a,\tufold{\turight{a_1, \dots\tufold{\turight{a_{n'}, b}}\dots}}}}.
  \]
\end{itemize}

\end{proof}


\section{Operations on Natural Numbers\label{app:lessthan}}


\begin{figure}[htb]

\begin{minipage}{0.49\textwidth}
    \begin{tabular}{c}
    \begin{lstlisting}
lessThan $\triangleq$ $\lambda$a: $\tnat$.
    rec[_ $\Rightarrow$ $\tnat\rightarrow\tbool$](
        a,
        $\lambda$x: $\tnat$. ${\tt match}$(x, $\tfalse$, _ $\Rightarrow$ $\ttrue$),
        (_, y) $\Rightarrow$ $\lambda$x: $\tnat$. ${\tt match}$(x, $\tfalse$, n $\Rightarrow$ y n)
    )
    \end{lstlisting}
    \end{tabular}
\end{minipage}
\begin{minipage}{0.49\textwidth}
    \begin{tabular}{c}
    \begin{lstlisting}
lessEqual $\triangleq$ $\lambda$a: $\tnat$.
    rec[_ $\Rightarrow$ $\tnat\rightarrow\tbool$](
        a,
        $\lambda$x: $\tnat$. ${\tt match}$(x, $\ttrue$, _ $\Rightarrow$ $\ttrue$),
        (_, y) $\Rightarrow$ $\lambda$x: $\tnat$. ${\tt match}$(x, $\tfalse$, n $\Rightarrow$ y n)
    )
    \end{lstlisting}
    \end{tabular}
\end{minipage}

\begin{tabular}{c}
\begin{lstlisting}
equalNat $\triangleq$ $\lambda$a: $\tnat$.
    rec[_ $\Rightarrow$ $\tnat\rightarrow\tbool$](
        a,
        $\lambda$x: $\tnat$. ${\tt match}$(x, $\ttrue$, _ $\Rightarrow$ $\tfalse$),
        (_, y) $\Rightarrow$ $\lambda$x: $\tnat$. ${\tt match}$(x, $\tfalse$, n $\Rightarrow$ y n)
    )
\end{lstlisting}
\end{tabular}

\caption{For values $a,b \in \redvalues{\tnat}{}$, {\tt lessThan $a$ $b$}
(resp.~{\tt lessEqual}, resp.~{\tt equalNat}) returns $\ttrue$ if the natural
number represented by $a$ is strictly less (resp.~less or equal, resp.~equal)
than the one represented by $b$.}
\label{fig:natops}

\end{figure}

Figure~\ref{fig:natops} shows the definitions of {\tt lessThan}, {\tt
lessEqual}, and {\tt equalNat} that implement comparison and equality on the
$\tnat$ type.


\end{document}